\documentclass[12pt, a4paper,reqno]{amsart}
\usepackage[margin=1in]{geometry}
\usepackage{amssymb}
\usepackage{enumerate}
\usepackage{hyperref}
\usepackage{doi,url}
\usepackage{dsfont}
\usepackage{mathabx}
\usepackage[normalem]{ulem}
\usepackage{cancel}

\usepackage[dvipsnames]{xcolor}
\usepackage{tikz}
\usepackage{pgfplots}
\usepgfplotslibrary{fillbetween}
\pgfplotsset{compat=1.16,width=10cm}
 \usepackage{caption}
\usepackage{float}
\usetikzlibrary{patterns}

\definecolor{purple}{rgb}{.5,0,1}
\definecolor{orange}{rgb}{1,.5,0}
\definecolor{pink}{rgb}{1,0,.5}

\usepackage{fancyvrb}
\usepackage{amsmath,amsthm}
\usepackage{enumitem}
\usepackage{thmtools}
\usepackage[style=apalike,style=numeric,maxbibnames=99,giveninits=true,citestyle=numeric,url=true,doi=true,
backend=biber
% backend=biber % <-- biber is a modern replacement for BibTeX designed for use (only) with biblatex
]
{biblatex}
\renewbibmacro{in:}{}
\addbibresource{refe.bib}
\AtEveryBibitem{\clearfield{number}}
\DeclareFieldFormat*{volume}{\mkbibbold{#1}}
\DeclareFieldFormat[article]{pages}{#1}

\pagestyle{myheadings}

%\makeatletter
\numberwithin{equation}{section}
\newtheorem{theorem}{Theorem}[section]

\newtheorem{lemma}[theorem]{Lemma}

\newtheorem{remark}[theorem]{Remark}

\newtheorem*{theorem*}{Theorem}
\newtheorem*{lemma*}{Lemma}
\newtheorem*{remark*}{Remark}

%change counters for lists
%Redefine the first level

%\newcounter{listc}[enumi]
 
%Redefine the second level

%%%%%%%%%%%%%%%%%%%%%%%%%%%%%%%%%%%%%%%%%%%%%%%%%%%%
\DeclareMathOperator{\supp}{supp}

\DeclareMathOperator{\tr}{tr}
\DeclareMathOperator{\Ran}{Ran}

\DeclareMathOperator{\sgn}{sgn}
\DeclareMathOperator{\dist}{dist}

\DeclareMathOperator{\Rea}{Re}
\DeclareMathOperator{\Ima}{Im}

\renewcommand\H{\mathcal{H}}

\newcommand\R{\mathbb R}
\newcommand\N{\mathbb N}
\newcommand\C{\mathbb C}
\newcommand\Z{\mathbb Z}

\newcommand\D{\mathcal{D}}

\newcommand\cT{\mathcal{T}}

\newcommand\cW{\mathcal{W}}

\newcommand\x{\mathbf{x}}
\newcommand\y{\mathbf{y}}
\renewcommand\u{\mathbf{u}}

\renewcommand\v{\mathbf{v}}

\newcommand\e{\mathrm{e}}

\renewcommand\P{\mathbb P}
\newcommand\E{\mathbb E}
\newcommand\cE{\mathcal{E}}

\newcommand\cN{\mathcal{N}}

\newcommand\cH{\mathcal{H}}

\newcommand\cS{\mathcal{S}}

\newcommand\cQ{\mathcal{Q}}

\newcommand\cP{\mathcal{P}}

\newcommand\eps{\varepsilon}
\newcommand\vphi{\varphi}

\renewcommand{\d}{\mathrm{d}}

\newcommand{\pr}{\prime}

\newcommand\what{\widehat}
\newcommand\wtilde{\widetilde}

\newcommand\beq{\begin{equation}}
\newcommand\eeq{\end{equation}}
\newcommand\be{\begin{equation}\begin{aligned}}
\newcommand\ee{\end{aligned}\end{equation}}

\newcommand{\abs}[1]{\left\lvert #1 \right\rvert}
\newcommand{\norm}[1]{\left\lVert #1 \right\rVert}
\newcommand{\scal}[1]{\left\langle #1 \right\rangle}
\newcommand{\set}[1]{\left\{ #1 \right\}}
\newcommand{\pa}[1]{\left( #1 \right)}

\newcommand{\fl}[1]{\left\lfloor #1 \right\rfloor}

\newcommand{\cl}[1]{\lceil #1 \rceil}

\newcommand\La{\Lambda}

\newcommand{\eq}[1]{\eqref{#1}}

\newcommand{\up}[1]{^{\left(#1\right)}}

\newcommand{\qtx}[1]{\quad\text{#1}\quad}
\newcommand{\mqtx}[1]{\; \ \text{#1}\; \  }
\newcommand{\sqtx}[1]{\;\text{#1}\;}

\newcommand{\bD}{\boldsymbol{\Delta}}

\newcommand{\tfd}{\pa{1- \tfrac{1}{\Delta}}}
\newcommand{\fd}{1- \frac{1}{\Delta}}

\newcommand{\nfd}{\pa{1- \frac{1}{\Delta}}}

\newcommand{\prr}{{\pr\pr}}

\newcommand{\clq}{\cl{q}}

\newcommand{\ident}{\mathds{1}}

\setcounter{tocdepth}{2}  %depth of table of contents

%%%%%%%%%%%%%

\begin{document}

\title[Localization in the  random XXZ  spin chain]{Localization phenomena in the  random XXZ  spin chain}
\author{Alexander Elgart}
\address[A. Elgart]{Department of Mathematics; Virginia Tech; Blacksburg, VA, 24061-1026, USA}
 \email{aelgart@vt.edu}

\author{Abel Klein}
\address[A. Klein]{University of California, Irvine;
Department of Mathematics;
Irvine, CA 92697-3875,  USA}
 \email{aklein@uci.edu}

%\thanks{A.E. was  supported in part by the NSF under grant DMS-2307093.}

\begin{abstract}

It is shown that the infinite random Heisenberg XXZ spin-$\frac12$ chain exhibits localization phenomena, such as spectral, eigenstate, and weak dynamical localization, in an arbitrary (but fixed)  energy interval in a non-trivial  region of the parameter space. This region depends only on the energy interval and includes weak interaction and strong disorder regimes. 
The crucial step  in the argument is  a proof that if the  Green functions for the associated finite  systems Hamiltonians exhibit certain (volume-dependent) decay properties in a fixed energy interval, then the infinite volume Green function  decays  in the same interval as well.  The pertinent finite systems  decay properties  for the random XXZ spin chain  had been previously verified by the authors.
 \end{abstract}

\keywords{Many-body localization,  random XXZ spin chain, quasi-locality}

\subjclass[2000]{82B44, 82C44, 81Q10, 47B80, 60H25}

\setcounter{tocdepth}{1}
\maketitle

\tableofcontents

\section{Introduction}\label{secmodel}

\subsection{The  model and    (informal) main result}

The simplest non-trivial quantum system is a single spin-$\frac 12$, characterized by a two-dimensional complex state space $\C^2$, spanned by  two orthonormal  vectors called qubits: the spin-up  $\uparrow\rangle = \begin{pmatrix} 1  \\ 0  \end{pmatrix} $ and the  spin-down  $\downarrow\rangle = \begin{pmatrix} 0  \\ 1 \end{pmatrix} $ states. The self-adjoint operators on this space are real linear combinations of the identity $\mathds{1}_{\C^2}$ and the three Pauli matrices,
\[\sigma^x=\begin{pmatrix}0& 1  \\1& 0  \end{pmatrix},\quad \sigma^y=\begin{pmatrix}0& -i  \\i& 0  \end{pmatrix},\quad \sigma^z=\begin{pmatrix}1& 0  \\0& -1  \end{pmatrix}.
\]

Spin chains are arrays of  spins indexed by subsets $\La\subset\Z$. If $\La$ is finite, the corresponding state space is the tensor product Hilbert space $\mathcal H_\Lambda=\otimes_{i\in \Lambda} \cH_i$, where each $\cH_i$ is a copy of $\C^2$. For infinite $\La$, we let  $\cH_{\La,0}$ be the vector subspace of $\bigotimes_{i\in \La}\cH_i$ spanned by    tensor products of the form  $\bigotimes_{i\in \Z}\vphi_i$,  $\vphi_i\in \set{\uparrow\rangle_i,\downarrow\rangle_i}$, with a finite number of spin-downs, equipped with the tensor product space inner product,  and let $\cH_\La$ be its Hilbert space completion. A single spin operator $\sigma^\sharp$ acting on the $i$-th spin is lifted to  $\mathcal H_\Lambda$  by identifying it with  $\sigma^\sharp \otimes \mathds{1}_{\La\setminus \set{i}}$, where $ \mathds{1}_{\La\setminus \set{i}}$ denotes the identity operator on $\cH_{\La\setminus \set{i}}$, i.e.,   it acts non-trivially only in the tensor product's $i$-th component. To stress the $i$-th dependence, we will denote this single spin operator by $\sigma^\sharp_i$. More generally,  if  $S\subset \La$, and $A_S$ is an operator on  $\cH_S$, we often  identify it with its natural embedding on $\cH_\La$, namely with  $A_S \otimes \mathds{1}_{\La\setminus S}$.

The original motivation to study quantum spin systems goes back to the 1920s when their usefulness in explaining ferromagnetism was realized by Lenz, Ising, Dirac, and Heisenberg, among others. They are now playing a role in explaining various phenomena across physics, computer science, chemistry, and biology. The rich structure associated with these systems is related to their complexity: While a single spin has a very simple state space, the dimensionality of $n$ spins grows exponentially fast with $n$.  Even for modestly sized spin systems where $n$ ranges into the dozens, the computational cost of their numerical analysis is prohibitive. This problem is colloquially known in physics as the curse of dimensionality and is the main cause of our very limited theoretical understanding of such models, especially of their thermodynamic limit $\La\to\Z$.

In this work, we study spectral and dynamical properties of the random XXZ quantum spin-$\frac 12$  chain. The   random Hamiltonian $H^\Lambda =H_\omega^\Lambda$ on $\cH_\La$  is given by\footnote{ Our definition of $H^\Lambda$ incorporates a choice of  boundary condition if  $\La\ne \Z$ .}   
\be\label{eq:H}
H^\Lambda=  -\tfrac{1}{2\Delta}\sum_{\set{i,i+1}\subset \Lambda} \pa{\sigma_i^+\sigma_{i+1}^-+\sigma_i^-\sigma_{i+1}^+}  + \sum_{i\in \Lambda} \mathcal{N}_i-\sum_{\set{i,i+1}\subset \Lambda} {\cN_i\cN_{i+1}} +\lambda\sum_{i\in \Lambda} \omega_i \mathcal{N}_i,
\ee 
where   $\sigma^\pm=\frac 12( \sigma^x \pm i \sigma^y)$ are  the ladder operators  and $\mathcal{N} = \tfrac{1}{2} (\mathds{1}_{\C^2}-\sigma^z)$ is known as the number operator. The constant  $\Delta$  is the anisotropy   parameter;   we assume $\Delta>1$  (the Ising phase). The parameter $\lambda>0$ determines the strength  of a random transversal field $\sum_{i\in \Lambda} \omega_i \mathcal{N}_i$, where $\omega = \set{\omega_i}_{i\in\Z}$  is a family of  random variables.  \emph{Throughout this work  we assume that  $\set{\omega_i}_{i\in\Z}$  are independent, identically distributed random variables, whose  common probability 
distribution $\mu$ is absolutely continuous with a bounded density and satisfies $\set{0,1}\subset \supp \mu\subset[0,1]$.}

$H^\La$ is a well defined  positive bounded self-adjoint operator for finite sets $\La$.  For infinite sets $H^\La$ is understood as an unbounded positive self-adjoint operator on $\mathcal H_\Lambda$.   Alternatively, one can exploit the fact that  $H^\La$ commutes with the  total  number of particles operator $\cN^\La= \sum_{i\in \La} \cN_i$  to represent it as a direct sum of bounded Hamiltonians of systems with a fixed number $N$ of particles. The corresponding $N$-particles Hamiltonian $H^\La_N$  can be seen as a random Schr\"odinger operator on a certain subgraph of $\La^N$.

The free ($\lambda=0$) XXZ system is a special variant of  the famous Heisenberg model. For $\La=\Z$, its  spectrum can be determined using the Bethe ansatz, a method introduced by Bethe in 1931.
 Its ground state energy $0$ is separated by a gap of size  $\fd$ from the rest of the spectrum, which is expected to be absolutely continuous due to the translation invariance of the underlying Hamiltonian. This feature has been verified for some energy intervals \cite{NaSpSt,FiSt}.

 Starting from the first decade of the new millennium, the randomized version of this operator ($\lambda>0$) has been proposed as a prototypical model for the  study of  many-body localization (MBL) phenomena  in solid state physics.  The initial investigations (of numerical and heuristic nature) in the physics community seemed to indicate that for large values of $\lambda$  a completely different system's behavior emerges: The spectrum becomes pure point almost surely and  the system's dynamics changes drastically, 
with thermalization not occurring even in the asymptotic limit of infinite system size and evolution time. This behavior is called the {\it MBL phase}. While the localization phenomenon for single-particle systems is well understood, and indeed persists in infinite volume systems for all times, it is an open question in physics whether the MBL phase does occur. It is also not clear what its precise characterization is, see \cite{sierant2024} for the recent review (other physics reviews on this topic include \cite{NandHuse,AL,abanin2019}).

  For single-particle systems, one usually distinguishes three types of localization: spectral, eigenstate, and dynamical localization. We will now briefly describe these forms of localization to put our results for the random XXZ model in that context; a more detailed description can be found in Appendix \ref{sec:nom}. We will adopt the same nomenclature for the many-body case, though we caution the reader that  single-particle localization and MBL describe different phenomena.
 
 {\it Spectral localization}  for a random operator $H_\omega$ in a prescribed energy interval $I$ manifests itself as pure point spectrum in  $I$, almost surely. 
 This type of localization is informative only for infinite systems, but it is necessary for formulation of the different types of localization in such models.

{\it Eigenstate localization} is described at the level of the eigenvectors for $H_\omega$, and reflects their spatial confinement. A strong form of eigenstate localization is exponential decay of the {\it eigencorrelator}, see \eqref{eq:eigcor} below for its definition.

{\it Dynamical localization} is the non-spreading of initially localized wave packets (in the Schr\"odinger picture) or of local observables (in the Heisenberg picture), in the course of the time evolution generated by $H_\omega$.  
Eigenstate localization alone is typically not sufficient to guarantee dynamical localization. The decay of the eigencorrelator can be seen as a weak (in the operator-topological sense) form of dynamical localization. Since local observables for a single-particle system are either compact (in the discrete case) or relatively compact (in the continuum case),  weak dynamical localization  implies dynamical localization for such systems. As a result, for single-particle models the decay of the eigencorrelator is essentially synonymous with dynamical localization. This is no longer the case for many-body systems, where local observables are full rank operators.

The existing mathematical results for few-particles systems (e.g.,  \cite{CS,AW2,KlN1,KlN2}) show that for sufficiently large parameters $\Delta$ and $\lambda$ the infinite volume Hamiltonian $H_N$ obtained by restricting the full Hamiltonian to the $N$ particle sector is spectrally, eigenstate,  and weakly dynamically localized for all energies, provided  $N\le N_0(\Delta,\lambda)<\infty$. These methods could not be significantly improved by considering energies in a fixed interval  $[0,E_0]$.

Our main result,  stated informally below,    shows that the infinite volume random XXZ model is spectrally, eigenstate, and weakly dynamically localized in a fixed energy interval $[0,E_0]$, {\it uniformly in $N$}, as long as $\lambda \Delta^2$ is sufficiently large.   This regime is sometimes referred to in physics as  {\it zero temperature MBL}, or,    more precisely, {\it low density MBL}. We sketch both the few-particle and zero temperature localization regimes in Figure \ref{fig1}.  The precise mathematical formulation is  given in Theorem \ref{thm:ppspec1}.

 \begin{theorem}[Informal formulation]\label{thm:ppspec}  
 Let $H^\Z$ be the random XXZ Hamiltonian on $\cH_\Z$ with  parameters $\Delta>1$and $\lambda >0$. Fix the energy    interval $I(E_0)=[0,E_0]$, where $E_0>0$.   Then, if  $\lambda \Delta^2$ is sufficiently large, we have:
\begin{enumerate}
    \item   Spectral   and eigenstate  localization in the interval  $I(E_0)$:  The spectrum of $H^\Z$ in $I(E_0)$ is almost surely pure point, and the corresponding eigenvectors  in  $I(E_0)$ decay exponentially fast away from their localization centers (in a suitable sense).
\item  Weak dynamical localization in the interval $I(E_0)$:  The expectation of the absolute value of the   matrix elements of $\chi_{I(E_0)}(H^\Z)\e^{itH^\Z}$ decays exponentially fast (in a suitable sense), uniformly in $t\in \R$.
\end{enumerate}
\end{theorem}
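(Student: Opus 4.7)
My plan follows the roadmap announced in the abstract: reduce Theorem \ref{thm:ppspec} to decay of the infinite-volume Green function on $I(E_0)$, and obtain that decay by upgrading the finite-volume Green-function estimates previously established by the authors. First, I would use the conservation law $[H^\Z,\cN^\Z]=0$ to decompose $H^\Z=\bigoplus_{N\ge 0}H^\Z_N$ on $\cH_\Z=\bigoplus_{N\ge 0}\cH_\Z^{(N)}$, with each $H^\Z_N$ realized as a disordered discrete Schr\"odinger-type operator on an $N$-particle configuration space inside $\Z^N$. Because low-lying droplet states exist at every particle number, all estimates must be uniform in $N$ as well as in the volume; this is precisely the reason the few-particle results cited as \cite{CS,AW2,KlN1,KlN2} cannot be pushed through directly to the full $H^\Z$ in $I(E_0)$.

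The crucial step is to promote the known finite-volume decay of $G^\La_N(z;\a,\b)=\scal{\delta_\a,(H^\La_N-z)^{-1}\delta_\b}$ on $I(E_0)$, valid uniformly in $\La$ and $N$ for $\lambda\Delta^2$ large in a suitable droplet-respecting many-body pseudodistance, to analogous decay of the infinite-volume Green function $G^\Z_N(z;\a,\b)$ uniformly in $\Ima z>0$, in $N$, and in $\a,\b$. I would set this up with geometric resolvent identities comparing $H^\Z_N$ to its restrictions to large boxes $\La$ surrounding $\a$ and $\b$, so that the boundary errors are controlled by the input finite-volume decay, and then iterate. The main obstacle is exactly this passage: the local geometry of the many-body configuration space grows in complexity with $N$, and the Ising interaction energetically favors large clusters, so classical single-particle multiscale or independence-based arguments are unavailable. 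One must instead exploit the quasi-locality of the droplet Hamiltonian, namely the fact that resolvents of states supported on well-separated clusters decouple up to exponentially small errors, in order to close the iteration uniformly in $N$ and to beat the combinatorial explosion of possible droplet configurations.

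Once the infinite-volume Green function is shown to decay on $I(E_0)$ in a fractional-moment sense uniformly in $\Ima z>0$, the three conclusions follow from standard arguments in the Aizenman-Molchanov / Simon-Wolff framework adapted to the many-body pseudodistance. Decay at a.e.\ real energy in $I(E_0)$ yields almost sure pure-point spectrum of $H^\Z$ in $I(E_0)$, proving spectral localization. The same Green-function decay, upgraded to a uniform bound on $\E\abs{\scal{\delta_\a,f(H^\Z)\chi_{I(E_0)}(H^\Z)\delta_\b}}$ over all bounded Borel functions $f$ with $\abs{f}\le 1$, gives exponential decay of the eigencorrelator and hence eigenstate localization. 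Finally, specializing $f(x)=\e^{itx}$ produces the uniform-in-$t\in\R$ exponential decay of $\E\abs{\scal{\delta_\a,\chi_{I(E_0)}(H^\Z)\e^{itH^\Z}\delta_\b}}$ claimed in part (ii), namely weak dynamical localization.
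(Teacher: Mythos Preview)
Your roadmap matches the paper's at the coarsest level: take the finite-volume fractional-moment input from \cite{EK22}, upgrade it to volume-free infinite-volume Green-function decay, then feed this through \cite{AW2}-type machinery to obtain eigencorrelator decay, pure point spectrum, eigenstate localization, and weak dynamical localization. That skeleton is correct.

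What is missing is the mechanism for the ``crucial step''. The finite-volume input carries a \emph{volume-growing} prefactor,
\[
\sup_{z\in\mathds{H}_q}\E_D\set{\abs{G^D_z(\x,\y)}^s}\le C_q\,\abs{D}^{C_q}\,\e^{-c_q\,\wtilde d_H^D(\x,\y)}.
\]
A spatial geometric-resolvent comparison of $H^\Z$ with its restriction to a box $D$ around $\x,\y$, followed by ``iteration'', cannot by itself kill the factor $\abs{D}^{C_q}$: the boundary term produced by the resolvent identity involves the infinite-volume resolvent on $D^c$, which is exactly the object you are trying to bound, so the scheme is circular. The paper breaks this circularity not by spatial iteration but by \emph{induction on the energy threshold} $q\in\tfrac12\N^0$. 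At step $q$ one decouples via a finite connected $D$ and splits $R_z^{D,D^c}$ according to whether $H^D$ lies in $I_{\le q-\frac12}$ or not. On the low-energy piece the $D^c$-factor of the decoupled resolvent sits at energy shifted into $\mathds{H}_{q-\frac12}$ and is controlled by the (already volume-free) induction hypothesis; on the complementary piece one uses a quasi-local \emph{filter function} $F_{\xi,a}(x)=(1-\e^{-\xi x^2})/(x-ia)$ together with a Combes--Thomas bound for the lifted Hamiltonian $\what H_q^\La=H^\La+\cl{q}\tfd\what Q^\La_{\le\cl{q}}$. A separate large-deviation estimate on $V_\omega(\u)$ for $\u\in\cP_{N,\cl{q}}$ supplies $\e^{-c_qN}$ decay and removes the residual $N$-polynomial factors coming from the cluster combinatorics.

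So the gap is concrete: your proposal does not indicate how to suppress $\abs{D}^{C_q}$, and ``then iterate'' points toward a spatial scheme that, without the energy-threshold induction and the filter-function/lifted-Hamiltonian apparatus, does not close uniformly in $N$ and in the volume. Your remark about ``resolvents of states supported on well-separated clusters decouple'' is in the right spirit but is not the form of quasi-locality actually used; what is needed is quasi-locality of $F_{\xi,a}(H^{D,D^c}-E)$ between $P_-^S$ and $P_+^T$, i.e.\ of a smooth approximate resolvent, not a direct cluster-decoupling statement.
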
 

 \begin{figure}
\begin{center}
\begin{tikzpicture}[,scale=0.9]
 \begin{axis}[axis lines=middle,axis on top,xlabel=$N$,ylabel=$E$,
 xmin=-0.5,xmax=10,ymin=-0.2,ymax=10, ytick={1},yticklabels={$E_0$},
 xtick={1},xticklabels={$N_0$}, 
 every axis x label/.style={at={(current axis.right of origin)},anchor=north west},
 every axis y label/.style={at={(current axis.above origin)},anchor=north east}]
  \addplot[name path=g,domain=1:10,white] {min(10,1/(x-1)+1)};
    \path[name path=axis] (axis cs:1,1) -- (axis cs:10,1);

    \addplot [
        thick,
        color=red,
        fill=red, 
        fill opacity=0.1
    ]
    fill between[
        of=g and axis,
        soft clip={domain=1:10},
    ];
\fill[pattern=north west lines, pattern color=blue] (0,0) rectangle (1,10);
\fill[pattern=north east lines, pattern color=green] (0,0) rectangle (10,1);

 \path (0.5,5) node{$A$}
 (5,5) node{$D$}
  (1.7,1.7) node{$C$}
(5,0.5) node{$B$};
 \end{axis}
\end{tikzpicture}
\end{center}
\caption{A localization cartoon for the infinite volume XXZ model in  strong disorder/weak interaction regimes. The  blue region $A$ is the few-particles localization $N\in [0,N_0]$, the  green region $B$ is the zero temperature localization $E\in[0, E_0]$  (our result). The total region of localization can be extended to include the pink sector $C$ using existing methods. The white region $D$ is currently not understood.}\label{fig1}
\end{figure}
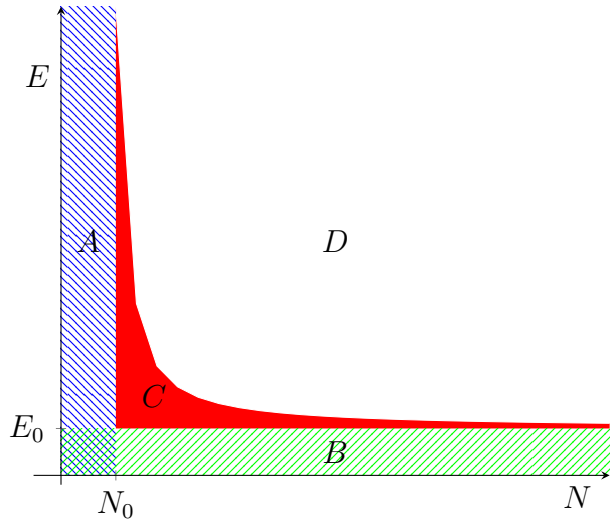

We believe these are the first results of this kind for  non-exactly solvable random spin chains.
 Previous localization results for random spin chains were restricted to either exactly solvable models  (see the review   \cite{ARNSS}, or the random XXZ  spin chain  in  the   droplet spectrum, a  special energy interval at the bottom of the spectrum.

 The dynamical behavior of a random system can be linked to two competing effects: The natural tendency of eigenfunctions to decay exponentially in the classically forbidden regions (which are numerous in say the large disorder regime) and their tendency to delocalize in the presence of resonances. For random Schr\"odinger operators on $\Z^d$, the number of resonances grows polynomially in the  system size (with degree equal to $d$), and exponential decay wins, almost surely, causing localization.  The curse of dimensionality can change this outcome already for a single particle operator on trees, e.g., on the Bethe lattice, as the number of resonances in this situation grows exponentially fast with the system size. This makes the underlying analysis quite delicate. Another (but somewhat related) adverse effect of the curse of dimensionality is related to the scarcity of randomness in the many-body setting: The number of random variables is actually the same as for the single particle model (that is, it grows only polynomially in the system size). As randomness is used to control the resonances in order to yield localization results, the new ideas were needed to control the exponentially large number of resonances in the many-body setting. We note that the scarcity of randomness has already been an issue for few-particle localization, where the number of resonances still grows only polynomially fast in the system size. Roughly speaking, the limitation of our result to the fixed energy interval is related to the fact that only in this context we can resolve both issues related to the  curse of dimensionality.

\subsection{Relation with existing research results and open problems}\label{subsec:past}
As we already mentioned, despite intensive efforts in the condensed matter physics community in the past two decades, even the very existence of the MBL phase remains a point of debate in the physics literature. On the mathematical level,   limited progress has been made in understanding this phenomenon, mostly related to the random XXZ model. As far as we are aware, Theorem \ref{thm:ppspec} is the first result that  establishes localization properties for not exactly solvable infinite spin systems in this generality.  That being said, from the physics perspective results of this kind would constitute a clear indication of an MBL phase only if the energy intervals were allowed to grow with the system size. 

 Up to a few years ago, rigorous MBL-related results  were  restricted to the class of exactly solvable models (see the review   \cite{ARNSS}). 
More recently,  for the random XXZ model, spectral and  dynamical localization in the  special energy interval $[\fd, 2\tfd)$,   called the {\it droplet spectrum},  was established in  \cite{BeW,EKS1}. These results led to the validation of other important many-body features associated with MBL in the droplet spectrum, among them the exponential clustering properties for associated eigenfunctions and non-propagation of information \cite{EKS3}, and the area law for the entanglement entropy \cite{BeW1}.

 The finite volumes bound obtained in \cite{EK22} for  a fixed energy interval provides a critical input  for the current paper,  and  were used to obtain   dynamical (rather than weak dynamical) localization type results,  expressed as slow propagation of information, for finite systems in the same regime \cite{EK24}.  However,  the bounds in \cite{EK24} depend on the volume of the system, which excludes any conclusions for an infinite system.  If the volume dependence could be suppressed, these results could be used to establish the stability of the  logarithmic light cone against generic  local perturbations \cite{TB}.
 It would be interesting to see whether the estimates developed in this paper could yield a significantly stronger version of the result established in \cite{EK24}, but it is an open question whether this volume dependence can be completely removed using these methods.

Most of the MBL attributes would be achieved if one could show the existence of  a  quasi-local unitary $U$ such that with large probability $U^*HU$ is a diagonal Hamiltonian, in the sense that it commutes with every $\sigma_i^z$ operator.  The implication is that $U^*HU$ can be represented as a sum of  weighted products $\prod_{i\in X}\sigma_i^z$  over subsets $X\in\La$ which are referred to as {\it local integrals of motions} (LIOM) in the physics literature. The LIOM representation  was first proposed in \cite{SPZA} as a possible mechanism explaining MBL, and became a popular physics  tool  for the heuristic derivation of  the majority of MBL features. In particular, the existence of such $U$  implies that one can construct an eigenbasis for $H$ consisting of vectors of the form $U\psi$, where $\psi$ is a product state.  

 For the Anderson model in the strong disorder regime, in any finite volume  one can construct a (semi-uniformly) quasi-local $U$ that diagonalizes the Hamiltonian.  (This follows from the results in  \cite{EK}.) The construction relies on eigenstate localization for all energies and the possibility to label the underlying eigenfunctions according to the spatial position of their localization centers.  Such labeling is attainable for the Anderson model due to the fact that one can show that with large probability the spectrum of $H$ is level spaced for sufficiently regular distribution of the random potential. The methods used in the present work are not sufficient to establish LIOM localization for two reasons: (a) We can only prove localization in a fixed energy interval and not on the whole spectrum of $H$; and  (b) It is not known whether the spectrum of $H$ is level spaced with large probability. It is  not clear whether it is reasonable (even from a physics perspective) to expect that such $U$ exists in the first place.

The recent preprints \cite{DGHP,Lemm}  consider weak deterministic perturbations of  diagonal (and thus exactly solvable) models. 
In \cite{DGHP}, the focus is on the one-dimensional Ising model with a random longitudinal field. The model is considered on intervals $\La$ of size $\gamma^{-c}$, where $\gamma$ is the perturbation strength and $c>0$ is a small, non-zero exponent. A key finding announced in this work is the construction of a uniformly quasi-local unitary operator $U$ such that $U^*HU$ is a diagonal operator. The authors use this to derive information about the spin chain, consistent with Many-Body Localization (MBL), specifically that the spin transport in this system is anomalous.

The paper \cite{Lemm} examines a  broad class of models on finite graphs given by $H = H_0 + \sum _{r=i}^\infty \epsilon^r V_r$. Here, $\epsilon$ is a small parameter, $V_r$ is a spatially local operator  supported on sets whose diameter is smaller than $r$, and $H_0$ is a diagonal operator formed by a sum of Pauli-Z strings. Assuming  certain non-resonant conditions on $H_0$, the authors demonstrate that the propagation of information remains slow up to spatial scales that are powers of $\ln(\epsilon^{-1})$.   In the case of   the finite volume random XXZ model, we proved in \cite{EK24} that there is  slow propagation of information across all spatial scales in  the region of localization (the finite energy interval).

 Finally, let us mention that recently there has been interest in the physics community in low-energy MBL. In particular, in \cite{YNL} it is shown that every eigenstate in an (extensive) energy interval at the bottom of the spectrum is localized for a deterministic finite system of error-correcting codes, settling the long-standing conjecture in the physics literature that such eigenstate localization is possible. Theorem 1.1 can be viewed as settling this conjecture for an infinite random system, although it applies to fixed (rather than extensive) energy intervals.

 The proof of Theorem \ref{thm:ppspec} uses the finite volume results developed in \cite{EK22} as input to obtain results for infinite systems, suppressing the volume dependence in these estimates.  In the next section we introduce the necessary technical notation, state the precise formulation of Theorem \ref{thm:ppspec}, and present  
 the technical results that  establish it.

\section{Main result and technical steps}\label{secfinvol}

 We let $\abs{S}$ stand for the cardinality of $S\subset \Z$. Given  $ \La \subset \Z$, we let \[\cP_f(\La)=\set{\x\subset  \La:  \abs{\x}<\infty}\] be the collection of  finite  subsets of $\La$,  and let \[\cP_N(\La)=\set{\x\subset  \La: \abs{\x}=N}\] for $N\in\N^0=\N\cup \set{0}$ be  the  subset of $\cP_f(\La)$ consisting of sets with cardinality $N$. We  use the notation $\cP_+(\La)=\cP_f(\La)\setminus\set{\emptyset}$.
We also  set $\abs{\x}_1=\sum_{x\in \x} \abs{x}$ and $\abs{\x}_2=\pa{\sum_{x\in \x} \abs{x}^2}^{\frac 12}$  for $\x \in   \cP_f(\Z)$. 
 
  We fix $\Delta>1$ and $ \lambda >0$.  Given  $ \La \subset \Z$,  let $H^\La$ be the Hamiltonian  given in \eq{eq:H}, and consider the canonical orthonormal basis $\Phi_\La=\set{\phi_\x}_{\x \in \cP_f(\La)}$  for $\cH_\La$, where
 \be
 \phi_\emptyset=\phi_\emptyset^\La=\otimes_{i \in \Lambda}\uparrow\rangle_i, \quad \phi_{\bf x}=\phi_{\bf x}^\La=\pa{\prod_{i\in {\bf x}}\sigma_i^-}\phi_\emptyset^\La \sqtx{for} \x \in \cP_+(\La).
 \ee
 Note that $\phi_\emptyset$, the {\it vacuum state}, is an eigenvector for $H^\La$ with the simple eigenvalue $0$.  (It is the  ground state for $H^\La$,  as we shall see later.) Note also that $\phi_\x^\La=  \phi_\x^S \otimes  \phi_\emptyset^{\La\setminus S}$ for $ \x \subset S \subset \La$.   (We will suppress the $\La$ dependence from $\phi_\x$ for ease of notation when it is clear from the context.)
$ \Phi_\La$ can be decomposed into the disjoint union
\be\label{eq:basis}
 \Phi_\La=\bigcup_{N=0}^{\abs{\La}}  \Phi_\La\up{N}, \mbox{ where }\Phi_\La\up{N}=\set{\phi_\x : \x \in \cP_N(\La)}.  
 \ee

  Given $ S\subset \Z$ finite, we set
\be\label{eq:P_+}
P_+^S&=\otimes_{i\in S}\pa{{\mathds{1}_{i}}-\cN_i} \qtx{and}  P_-^S=\mathds{1}_S-P_+^S, \qtx{if} S\ne \emptyset,\\
 P_+\up{\emptyset }& =\mathds{1}_S \qtx{and } P_-\up{\emptyset }=\mathds{1}_S- P_+\up{\emptyset }=0.
\ee
Note that $\cN_i= P_-^{\set{i}}$ for all $i\in \Z$.

 For a unit vector $\vphi \in \cH_\La$,  we denote by  $\pi_\vphi$ the orthogonal projection onto $\C \vphi$.  If $\u \in \cP_f(\La)$, we write $\pi_\u=\pi_{{\phi_\u}}$.   We have 
 \be\label{piu}
\pi_{\u}=  P_+^{\La\setminus \u}\Pi_{u\in \u} \cN_u   \qtx{for all} \u \in \cP_+(\La),
\ee

    We  denote by $R_z^\Lambda=\pa{H^\La-z}^{-1}$   the resolvent operator,  which is  well defined for    $z\in\C\setminus \R$ (and for almost all  $z\in\R$ if $\La$ is finite),  and  use the Green function notation,
\be\label{eq:Green}
G^\La_z({\bf x},{\bf y}) = \langle \phi_{\bf x},R_z^\Lambda \phi_{\bf y}\rangle \qtx{for finite} \x,\y \subset\La.
\ee
(For real valued $z$, $G^ \La_{z}({\bf x},{\bf y})$ is understood here and below as $G^ \La_{z+ i0}({\bf x},{\bf y})$.)
More generally, for an operator $T$ on $\cH_\La$ we  denote by $T({\bf x},{\bf y})$ its matrix elements with respect to the standard basis, i.e., 
 \be\label{eq:Txy}
T({\bf x},{\bf y}) = \langle \phi_{\bf x},T \phi_{\bf y}\rangle;  \qtx{note} \abs{T({\bf x},{\bf y})}=\norm{\pi_{\x}T\pi_{\y}}.
\ee
In addition, we set
\be
\psi(\x)=\scal{\phi_\x, \psi} \qtx{for all} \psi \in \cH_\La \mqtx{and} \x\in \cP_f (\La).
\ee

 We let $B(I)$   denote the collection of all bounded Borel measurable functions $f$  supported on the interval  $I$, and set $B_1(I)=\set{f\in B(I),\ \sup_{t\in I} \abs{f(t)}\le 1}$.

We equip $\Z$ with the usual graph  distance  $d_\Z(x,y)= \abs{x-y}$ for  $x,y \in \Z$.  We will  consider $\La \subset \Z$ as a subgraph of $\Z$,  and
 denote by $\dist_\La( \cdot,\cdot) $ the  graph distance in $\La$, which  can be infinite if $\La$ is not a connected subset of $\Z$.
  Given  $ S\subset \Lambda\subset \Z$ and  $p\in\N^0$,   we set
   \be
{[S]^\La_p} & = \set{x\in\La: \dist_\La\pa{x,S}\le p},\\
\partial_{ex}^\La S& = \set{x\in\Lambda:\ d_\La\pa{x,S}=1}= [S^\La]_1\setminus S,\\
 \partial_{in}^\La S& = \set{x\in\Lambda:\ \dist_\La \pa{x, \La \setminus S}=1},\\
 \partial^\La S & =   \partial_{in}^\La S \cup  \partial_{ex}^\La S.
\ee

 We also  consider the Hausdorff distance between subsets of $\La$, given by
 \be
 d_H^\La (U,V)= \max \set{\max_{u\in U} d_\La (u, V), \max_{v\in V} d_\La (v, U)   }\qtx{for}  U,V \subset \La,
 \ee
and observe that
 \be
 d_H^S (U,V) \ge d_H^\La (U,V) \qtx{if}   U,V\subset S \subset \La.
 \ee

Due to the conservation of the total magnetization in the XXZ spin chain (see the next section), for any $z\in \C$ and, more generally, for any bounded Borel measurable function $f$,  we have 
\be\label{fxy0}
 G_z^\La({\bf x},{\bf y})= f(H^{\La})({\bf x},{\bf y})=0 \qtx{for all} \x,\y \in \cP_f(\La) \sqtx{with}  \abs{\x}\ne \abs{\y}.
\ee
This  justifies the introduction of a modified Hausdorff distance between   finite subsets $\x, \y$ of $\La$:
\be
 \wtilde d_H^\La (\x,\y)= \begin{cases} d_H^\La (\x,\y) & \qtx{if} \abs{\x}=\abs{\y}\\
 \infty & \qtx{otherwise}
 \end{cases}.
\ee

 We consider the following  energy intervals,  labeled by $t\in\R$, and defined by
 \be\label{Im}
I_{\le t}&= \left(-\infty, (t+\tfrac 1 4)\tfd\right),\quad  \mathds{H}_t=  \set{z\in \C:\  \Rea z \in I_{\le t}},\\
 I_{t}&= \left[\tfd, (t+\tfrac 1 4)\tfd\right).
\ee

 We will denote by $\E_\La$ the expectation with respect to the random variables $\set{\omega_i}_{i\in\La}$. 
In this paper  we will use generic constants $C, c$, etc., whose values will be allowed to change from line to line, even  in a multi-line displayed equation. These constants will,  in general, depend on the  fixed parameters of the model  such as $\mu$,  $\Delta$,   $ \lambda$, and on the fractional moment exponent $s$,  but (critically) they will be volume-independent.  We will not indicate the dependence on the fixed parameters and  $s$, but, 
when necessary, we will indicate the dependence of a constant on other parameters, say $q, N, \dots$,   explicitly by writing the constant  as $C_q, C_{q,N}, \ldots$.  If we write $C_q$, it  does not depend on $N$. These constants can always be estimated from the arguments, but we will not track the changes to avoid complicating the arguments.  We  will use $C$ to indicate that the constant should be sufficiently large for a bound to hold, and $c$ to indicate that the constant should be sufficiently small,
but  still requiring $c >0$ . We generally use the same $C$ and $c$ for different constants in the same equation.

 We  are now ready to give the mathematically precise  formulation of Theorem \ref{thm:ppspec}. 
 
  \begin{theorem}\label{thm:ppspec1} 
Fix parameters $\Delta_0>1$ and $ \lambda_0 >0$. Given  $q\in \frac 12 \N^0$, 
 there exists a constant $Y$ (which depends on $\Delta_0\,,\lambda_0$, $\mu$, and $q$) such that, 
 for all $\Delta \ge \Delta_0$ and $\lambda\ge \lambda_0$ satisfying  $\lambda \Delta^2\ge Y$ the following holds:  

\begin{enumerate}

\item  $H^\Z$ exhibits spectral and eigenstate  localization  in the interval  $I_{\le q}$, more precisely, there exists an event $\cE$, with $\P_\Z(\cE)=1$, such that for $\omega \in \cE$  the spectrum of $H^\Z$ in $I_{\le q}$ is pure point, and if $\psi=\psi_\omega$ is an eigenfunction of $H^\Z$ with corresponding eigenvalue  in  $I_{{\le q}}$, so $\cN^\Z\psi =N_\psi \psi$, where   $ N_\psi \in \N^0$,  it decays exponentially   in the following sense:   
\be
\abs{\psi(\y)} \le C_{\omega,N_\psi} \abs{\x_{\psi}}_{2}^{N_\psi +1}\e^{-{c_q} \wtilde d_{H}^\Z({\bf y},\x_{\psi})} \qtx{for all} \y \in \cP_+(\Z), 
\ee
where $\x_{\psi}\in \cP_{N_\psi}(\Z)$ is a center of localization for $\psi$,  that is, it satisfies
\be
\abs{\psi(\x_{\psi})}^2 \ge
  \frac {\pa{\abs{\x_{\psi}}_2+1}^{-\pa{N_\psi+1}}}{\sum_{\u \in  \cP_{N_\psi}(\Z)}       \pa{    \abs{\u}_2+1}^{-\pa{N_\psi +1}}} .
\ee

\item $H^\Z$ exhibits weak dynamical localization in the interval  $I_{{ \le q}}$, more precisely,
\be\label{eq:dloc}
\E_\Z \set{\sup_{f\in B_1(I_{\le q})}\abs{ f(H^\Z) (\x,\y)}}\le  C_q \e^{-c_q \wtilde d_{H}^\Z({\bf x},{\bf y})} \qtx{for all}\x,\y\in \cP_{+}(\Z).
\ee
\end{enumerate}
\end{theorem}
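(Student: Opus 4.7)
The plan is to reduce parts (i) and (ii) to a single volume-independent eigencorrelator bound
\be \label{eq:targetbd}
\E_\La \set{\sup_{f \in B_1(I_{\le q})} \abs{f(H^\La)(\x,\y)}} \le C_q \e^{-c_q \wtilde d_H^\La({\bf x},{\bf y})} \qtx{for every finite} \La\subset\Z \mqtx{and} \x,\y \in \cP_+(\La),
\ee
with constants $C_q, c_q$ independent of $\abs{\La}$, and then to transfer this estimate to $\La=\Z$ by a thermodynamic-limit argument. Once \eqref{eq:targetbd} is in hand, part (ii) is immediate and part (i) follows by a SULE-type reasoning adapted to the many-body geometry.

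The core technical work is \eqref{eq:targetbd} itself. The input is the finite-volume fractional moment estimate of \cite{EK22}, which already provides exponential decay of $G_z^\La({\bf x},{\bf y})$ in the energy window $I_{\le q}$, but with prefactors that \emph{a priori} degrade with $\abs{\La}$. I would remove this degradation by a multi-scale bootstrap: starting from the \cite{EK22} bound at some fixed scale $L_0$, use a geometric resolvent identity to relate $R_z^\La$ to resolvents on overlapping subintervals of length $\sim L_0$, glued together by the hopping operators $\tfrac{1}{2\Delta}(\sigma_i^+\sigma_{i+1}^- + \sigma_i^-\sigma_{i+1}^+)$. Iterating this decomposition, absorbing the interface errors by an $L^s$ Wegner-type estimate, and exploiting that the hopping coefficient is small when $\Delta$ is large, one recovers the same exponential decay rate at every scale, hence uniformly in $\abs{\La}$. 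The conservation of the total magnetization, which forces $G_z^\La({\bf x},{\bf y})$ to vanish unless $\abs{\x}=\abs{\y}$, is essential: it confines every resolvent expansion to a single $N$-particle sector and prevents the combinatorial blow-up that would otherwise arise when tracking configurations across sectors. The fractional moment bound on $G_z^\La$ is finally upgraded to \eqref{eq:targetbd} by the Aizenman--Molchanov $L^s$-to-sup duality in its many-body version.

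The passage $\La\uparrow\Z$ is comparatively soft. For fixed $\x,\y\in\cP_+(\Z)$ and an exhausting sequence of intervals $\La_n\uparrow\Z$, strong resolvent convergence $R_z^{\La_n}\to R_z^\Z$ together with dominated convergence yield $f(H^{\La_n})(\x,\y) \to f(H^\Z)(\x,\y)$ for every $f\in B_1(I_{\le q})$. Separability of $B_1(I_{\le q})$ in a suitable topology, via approximation by a countable dense family of continuous functions, allows exchanging the supremum with the limit, and Fatou's lemma then transfers the uniform estimate to $\La=\Z$, which is exactly \eqref{eq:dloc}. Since $\wtilde d_H^{\La_n}(\x,\y)=\wtilde d_H^\Z(\x,\y)$ for $\La_n$ containing the convex hull of $\x\cup\y$, no decay is lost in the limit.

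Part (i) is then extracted from (ii) by a Borel--Cantelli argument: the bound (ii) implies that, almost surely, the matrix elements of $\chi_J(H^\Z)$ decay exponentially in $\wtilde d_H^\Z$ for every Borel $J\subset I_{\le q}$. Restricting to a rank-one projection onto an eigenvector $\psi$ with eigenvalue in $I_{\le q}$ yields the pointwise decay $\abs{\psi(\y)}\lesssim \abs{\psi(\x_\psi)}\e^{-c_q \wtilde d_H^\Z({\bf y},\x_\psi)}$ about a center of localization $\x_\psi$; the polynomial prefactor $\abs{\x_\psi}_2^{N_\psi+1}$ and the stated lower bound on $\abs{\psi(\x_\psi)}^2$ encode the number of configurations in $\cP_{N_\psi}(\Z)$ of a given diameter and the extremal property used to select $\x_\psi$ as the maximizer of a suitable weighted overlap. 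Pure point spectrum in $I_{\le q}$ is a classical consequence of such eigenfunction decay. The principal obstacle throughout is the bootstrap in the second paragraph: without a mechanism that propagates the \cite{EK22} input across scales without incurring a volume penalty, no uniform control in $\abs{\La}$ is available, and the infinite-volume localization statement collapses.
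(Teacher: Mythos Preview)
Your overall architecture is reasonable and close to the paper's: both reduce Theorem~\ref{thm:ppspec1} to a volume-uniform finite-volume eigencorrelator bound, then pass to $\Z$ and read off spectral/eigenstate localization. The transfer to $\Z$ and the SULE extraction are essentially what the paper does (it cites \cite[Theorem~4.1]{AW2} as a black box for exactly this).

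The genuine gap is in your bootstrap paragraph. A spatial multi-scale iteration of the \cite{EK22} input via geometric resolvent does \emph{not} remove the $|\La|^{C_q}$ prefactor in this many-body setting. When you decouple $\La$ into $D$ and $D^c$, the decoupled resolvent does not factorize: you get
\[
R_z^{D,D^c}=\sum_{\nu\in\sigma(H^D)} R^{D^c}_{z-\nu}\otimes \pi^D_\nu,
\]
so each iteration forces a sum over the spectrum of $H^D$. If $z\in\mathds{H}_q$ and $\nu$ is close to $\Rea z$, the shifted energy $z-\nu$ lies near $0$, and you have no decay input there; iterating at a fixed scale $L_0$ therefore reproduces, not suppresses, the volume factors. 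The smallness of the hopping $\tfrac1{2\Delta}$ does not help here because the bad terms come from the spectral sum, not from the interface operator.

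The paper's mechanism is different in kind: it is an induction on the energy parameter $q$, not on spatial scale. The key observation is that if $\x_{D^c}\ne\emptyset$ then $H^{D^c}\ge (1-\tfrac1\Delta)$ on the relevant sector, so $P_{I_{\le q+\frac12}}(H^{D,D^c})\,\bar P_{I_{\le q-\frac12}}(H^D)=0$. This forces the $H^D$-eigenvalues that actually contribute to lie in $I_{\le q-\frac12}$, and for those $\nu$ one has $z-\nu\in\mathds{H}_{q-\frac12}$, where the induction hypothesis already gives volume-independent bounds on $R^{D^c}_{z-\nu}$. The remaining (high-$\nu$) piece is handled by a quasi-local filter function (Theorem~\ref{thmlocal}) and a large deviation estimate in $N$. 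None of these ingredients---the energy-lowering at each decoupling, the modified Hamiltonians $\what H_q$ with their Combes--Thomas bounds \eqref{eq:CTk}, the filter function---appear in your scheme, and without them the iteration does not close.
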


\begin{remark}
 The result above is not as vacuous as  the spectrum $\sigma(H^\Z)=\set{0}\cup[1-\frac1\Delta,\infty)$ with probability one.  (See, e.g., the discussion in  \cite{EKS1}).  Note also that $0$ is a simple eigenvalue.
 \end{remark}

The key input for proving Theorem~\ref{thm:ppspec1} is an immediate corollary to \cite[Theorem 2.4]{EK22}, which we now state.

\begin{theorem}\label{thminput}
Fix the parameters $\Delta_0>1$ and $ \lambda_0 >0$. Let $q\in \frac 12 \N^0$ and $s \in (0,\frac 13)$.
  Then there exists a constant $Y$ (which depends on $\Delta_0\,,\lambda_0$, $\mu$, $q$, and $s$) such that, 
 for all $\Delta \ge \Delta_0$ and $\lambda\ge \lambda_0$ satisfying  $\lambda \Delta^2\ge Y$ the following holds:  
For  all finite $D\subset \Z$  we have
 \be\label{eq:FVC56}
\sup_{z\in\mathds{H}_q} \E_D\set{\abs{G^{D}_{z}({\bf x},{\bf y})}^s}\le C_q \abs{D}^{C_q}\e^{-c_q  \wtilde d^D_H ({\bf x},{\bf y})} \mqtx{for all} {\bf x},{\bf y} \in \cP_+(D).
\ee
\end{theorem}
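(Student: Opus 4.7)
The plan is short because, as the authors note, this is a corollary of the finite-volume estimate already established in \cite{EK22}. The only tasks are to carry out the reduction to the fixed-particle-number sectors (already anticipated by the modified Hausdorff distance $\wtilde d_H^D$ and the conservation identity \eqref{fxy0}) and then to uniformize the per-sector bound over all particle numbers up to $|D|$.

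\emph{Step 1: Particle number decomposition.} Since $H^D$ commutes with the total number operator $\cN^D=\sum_{i\in D}\cN_i$, it decomposes as $H^D=\bigoplus_{N=0}^{|D|} H^D_N$, where $H^D_N$ is a random Schr\"odinger-type operator on the finite graph $\cP_N(D)\subset D^N$. By \eqref{fxy0}, if $|\x|\ne|\y|$ then $G^D_z(\x,\y)=0$ identically and $\wtilde d_H^D(\x,\y)=\infty$, so the bound is trivial. Thus it suffices to treat $\x,\y\in\cP_N(D)$ for each fixed $N\in\{1,\ldots,|D|\}$, where $G^D_z(\x,\y)$ equals the Green's function of $H^D_N$.

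\emph{Step 2: Invoke the per-sector estimate from \cite{EK22}.} Under the hypotheses on $(\Delta,\lambda)$ and for the energy region $\mathds{H}_q$, \cite[Theorem 2.4]{EK22} yields a fractional-moment bound for the $N$-particle Green's function of the form
\[
\sup_{z\in\mathds{H}_q}\E_D\set{\abs{G^D_{N,z}(\x,\y)}^s}\le \wtilde C_{q,N}\, \e^{-c_q\,\wtilde d_H^D(\x,\y)},\qquad \x,\y\in\cP_N(D),
\]
where the distance on $\cP_N(D)$ is identified with $\wtilde d_H^D$ (restricted to same-size subsets). Crucially, the decay rate $c_q>0$ is independent of $N$, while the prefactor $\wtilde C_{q,N}$ is permitted to grow polynomially with $N$.

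\emph{Step 3: Uniformization over $N\le|D|$.} Because $\x,\y\in\cP_+(D)$ forces $N=|\x|\le|D|$, the polynomial growth of $\wtilde C_{q,N}$ is dominated by $C_q\,|D|^{C_q}$ for a suitable choice of $C_q$, giving \eqref{eq:FVC56}. There is no real obstacle here: the heavy lifting — the actual fractional-moment analysis of the random XXZ model in the interval $I_{\le q}$ — is entirely contained in \cite{EK22}, and the present statement only packages it into a single bound valid uniformly across particle-number sectors, at the cost of a polynomial-in-$|D|$ prefactor that will later be absorbed by large-deviation sums in the proof of Theorem~\ref{thm:ppspec1}.
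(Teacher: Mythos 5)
Your proposal mischaracterizes the input being cited, and as a result it skips the one step that constitutes the actual proof. What \cite[Theorem 2.4]{EK22} provides (as restated in \eqref{eq:oldp4589}) is \emph{not} a per-sector matrix-element bound of the form $\E_D\{|G^D_{N,z}(\x,\y)|^s\}\le \wtilde C_{q,N}\e^{-c_q \wtilde d_H^D(\x,\y)}$; it is a fractional-moment bound on the operator norm $\norm{P_-^A R_z^D P_+^B}$ for $A\subset B\subset D$ with $A$ connected, with decay in $\dist_D(A, D\setminus B)$, and the prefactor $|D|^{C_q'}$ is already present there (it does not arise from uniformizing over $N$; indeed no sector decomposition or sum over $N$ occurs in the argument at all). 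The non-trivial content of the theorem is therefore the geometric conversion from this quasi-locality form to the Hausdorff-distance matrix-element form, which your proposal never performs: one picks a point realizing $r=d_H^D(\x,\y)$, say $x\in\x$ with $d_D(x,\y)=r$, observes via \eqref{piu} that $\pi_\x\le \cN_x=P_-^{\{x\}}$ and that $\y\cap[x]^D_{r-1}=\emptyset$ forces $\pi_\y\le P_+^{[x]^D_{r-1}}$, so that $|G^D_z(\x,\y)|=\norm{\pi_\x R_z^D\pi_\y}\le\norm{\cN_x R_z^D P_+^{[x]^D_{r-1}}}$, and then applies \eqref{eq:oldp4589} with $A=\{x\}$, $B=[x]^D_{r-1}$, using $\dist_D(\{x\}, D\setminus[x]^D_{r-1})\ge r$. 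Without this reduction (or some substitute for it), asserting the conclusion amounts to assuming the theorem.

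A secondary point: \cite[Theorem 2.4]{EK22} is stated for real energies in $(-\infty,k+\tfrac34]$, whereas \eqref{eq:FVC56} requires a supremum over the complex half-plane $\mathds{H}_q$; the paper notes that the proof in \cite{EK22} goes through for $\Rea z\le(k+\tfrac34)(1-\tfrac1\Delta)$ with the same constants, and this extension should at least be acknowledged. Your Step 1 (vanishing of $G^D_z(\x,\y)$ for $|\x|\ne|\y|$, so that $\wtilde d_H^D=\infty$ makes the bound trivial) is correct and consistent with the paper, but Steps 2 and 3 rest on a version of the cited result that is not the one actually available.
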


\begin{proof}
We proved a stronger result  in  \cite[Theorem 2.4]{EK22}, where it is shown that under the hypotheses
of the theorem there exists a constant $Y$ (which depends on $\Delta_0$, $\lambda_0$,  $\mu$, $s$, and $q$) such that, 
 for all $\Delta \ge \Delta_0$ and $\lambda\ge \lambda_0$ satisfying  $\lambda \Delta^2\ge Y$, for all $D\subset \Z$ finite  we have 
\be\label{eq:oldp4589}
\sup_{z\in\mathds{H}_q} \E_D\set{\norm{ P_-^AR^{D}_{z}P_+^B}^s}\le C_q\abs{D}^{C^\pr_q}    \e^{-c_q\dist_D\pa{A,\La \setminus B}},      
\ee
for all  $A\subset B\subset D$ with    $A$ connected in $D$. ({\cite[Theorem 2.4]{EK22}  is stated and proved  for real energies in the  intervals $(-\infty, k+\tfrac 3 4]$, 
 where $k\in \N^0$. The  proof  is also valid for complex energies $z$ with  $ \Rea z  \le (k+\tfrac 3 4)\nfd$, with the same constants.  The above result follows.)}

Given     $\x, \y \in \cP_+ (D)$  with $\abs{\x}=\abs{\y}$, and letting  $r=d_H^D ({\bf x},{\bf y})$,  then either $r=d_D (x, \y)$   for some $x \in \x$, or  $r=d_D (y, \x)$   for some $y \in \y$.  Both cases being similar, we assume the former. In this case, using \eq{eq:Txy} and  \eq{piu}, we have
\be
\abs{G^{D}_{z}({\bf x},{\bf y})}= \norm{\pi_{\x} R_z^D \pi_{\y}}\le  \norm{ \cN_x R_z^D  P_+^{[x]^D_{r-1}}},
\ee
 hence \eqref{eq:FVC56} follows from  \eqref{eq:oldp4589}  as  
   $d_D\pa{\set{x},\La \setminus {[x]^D_{r-1}}}\ge  r$.  
\end{proof}
 We now state our main technical result, Theorem~\ref{cor:weakinf} below. But first we need to introduce some  additional notation  and observations.

 Let $S\subset \Z$.  Given an energy interval $I$, we set $\sigma_I(H^S)=\sigma(H^S)\cap I$.
If  $\nu \in \R$, we set $\pi^S_\nu= \chi_{\set{\nu}} (H^S)$, the spectral projection  of $H^S$ on the set $\set{\nu}$.  
The {\it eigencorrelator}  $\cQ^S_I$ for $H^S$  in the energy interval $I$ is given by 
\be\label{eq:eigcor}
\cQ^S_I({\bf x},{\bf y}) =  \sum_{\nu\in\sigma_I (H^S) }\abs{\pi_{\nu} (\x,\y)}\qtx{for} {\bf x},{\bf y}\in \cP_f(S).
\ee
If $S$ is finite, or, more generally, if $H^S$ has pure point spectrum in $I$, we have 
\be\label{eq:eigcor33}
\cQ^S_I({\bf x},{\bf y})=\sup_{f\in B_1(I)}\abs{f(H^S) ({\bf x},{\bf y})} \qtx{for} {\bf x},{\bf y}\in \cP_f(S).
\ee

We will write   $\sigma_q(H^S)=\sigma_{I_q}(H^S)$  and
$ \cQ^S_q ({\bf u},{\bf v}) =\cQ^S_{I_{\le q}}({\bf u},{\bf v})$ for $q\in \frac 12 \N^0$.

\begin{theorem}[Finite volumes criterion]\label{cor:weakinf} 
Fix $\Delta >1$ and $\lambda >0$.
  Let  $s\in (0,\frac 13)$ and  $q\in \frac 12 \N^0$.    Suppose that for  all finite $D\subset \Z$  we have
 \be\label{eq:FVC}
\sup_{z\in\mathds{H}_q} \E_D\set{\abs{G^{D}_{z}({\bf x},{\bf y})}^s}\le C_q \abs{D}^{C_q}\e^{-c_q  \wtilde d^D_H ({\bf x},{\bf y})} \mqtx{for all} {\bf x},{\bf y} \in \cP_+(D).
\ee
 Then    for all  $\La\subset\Z$     we have
 \be\label{eq:weakinf} 
\sup_{z\in\mathds{H}_q}\E_\La \set{\abs{G^{\La}_{z} ({\bf x},{\bf y})}^s}\le  C_{q}\e^{-c_{q} \wtilde d_H^\La ({\bf x},{\bf y})} \mqtx{for all}  {\bf x},{\bf y} \in \cP_+(\La).
\ee 
      Furthermore, for all $D\subset \Z$ finite we have
 \be \label{eq:smva+}
\E_{D}\set{\cQ_q^D ({\bf x},{\bf y})}\le  C_q e^{-c_q \wtilde d_{H}^D({\bf x},{\bf y})}  \mqtx{for all}\x,\y\in \cP_{+}(D).
\ee    
\end{theorem}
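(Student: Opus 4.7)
The proof splits into two stages: upgrading the hypothesis \eqref{eq:FVC} into the volume-free bound \eqref{eq:weakinf}, and then deducing the eigencorrelator bound \eqref{eq:smva+} from the latter. The first stage employs an iterated resolvent-identity expansion of Aizenman--Molchanov type. Given $\x,\y\in\cP_+(\La)$ with $r=\wtilde d_H^\La(\x,\y)<\infty$, my plan is to cover a geodesic from $\x$ to $\y$ by a chain of finite subsystems $D_0,D_1,\dots,D_K\subset\La$ of fixed uniform size $\ell$ (independent of $r$), and to apply the geometric resolvent identity comparing $H^\La$ and $H^{D_j}$ at each stage, extracting from each step a factor bounded via \eqref{eq:FVC} applied to $D_j$. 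For $\ell$ fixed but chosen sufficiently large, the polynomial prefactor $\ell^{C_q}$ in \eqref{eq:FVC} is dominated by the exponential factor $\e^{-c_q\ell}$, so each step contributes a genuine exponential decay, and after $K\sim r/\ell$ iterations one obtains a bound of the form $\e^{-c_q' r}$ with no polynomial factor in $r$. Two further ingredients close the loop: an a priori $L^s$-bound on fractional moments of $G_z^\La$ that is uniform in $\La$, provided by the classical single-site fractional-moment estimate (available since $\mu$ has bounded density); and a conditioning argument that at each iteration exploits the independence of $G_z^{D_j}$ from the potentials outside $D_j$ to factor the joint expectation. For infinite $\La$ the bound then follows by approximating $\La$ by an increasing sequence $\La_n\uparrow\La$ of finite subsets and using strong resolvent convergence together with Fatou's lemma.

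For the second stage, on any finite $D$ the spectrum of $H^D$ consists of finitely many eigenvalues, so by \eqref{eq:eigcor33} one has $\cQ^D_q(\x,\y)=\sup_{f\in B_1(I_{\le q})}\abs{f(H^D)(\x,\y)}$. My plan is to invoke the standard spectral-averaging / rank-one-perturbation inequality, valid since $\mu$ is absolutely continuous,
\[
\E_D\set{\cQ^D_q(\x,\y)}\le C\int_{I_{\le q}}\sup_{\eta>0}\E_D\set{\abs{G^D_{E+i\eta}(\x,\y)}^s}\,dE,
\]
and then insert the bound \eqref{eq:weakinf} applied to $\La=D$, which is now free of any polynomial prefactor in $\abs{D}$. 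Since $I_{\le q}$ is bounded, the $E$-integration produces a finite constant, yielding \eqref{eq:smva+}.

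The main obstacle is controlling the iteration in the first stage: one must extract genuine exponential decay from the many-body boundary sums without losing it to the combinatorics of $\partial^\La D_j$, which is richer than in single-particle models since a single bond in $\Z$ induces hopping between many pairs of particle configurations, and without losing it to the volume-dependent constant in the hypothesis. Choosing $\ell$ large enough that $\ell^{C_q}$ is dominated by $\e^{-c_q\ell}$, while keeping the decoupling between factors of $G^{D_j}$ and $G^\La$ under control at each iteration, is the technical crux of the construction.
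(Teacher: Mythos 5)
Your proposal has the right overall architecture (first upgrade \eqref{eq:FVC} to a volume-free Green function bound, then pass to the eigencorrelator by spectral averaging), but the mechanism you propose for the first stage has a gap that I do not see how to close. In the many-body setting, decoupling a finite box $D_j\subset\La$ from its complement produces the operator $H^{D_j,D_j^c}=H^{D_j}+H^{D_j^c}$ on the \emph{full} space $\cH_\La$, and the configurations $\x,\y$ generically straddle the boundary, i.e.\ have particles in both $D_j$ and $D_j^c$. The hypothesis \eqref{eq:FVC} only controls $G^{D_j}$ on configurations entirely inside $D_j$; for straddling configurations one must expand $R_z^{D_j,D_j^c}=\sum_{\nu\in\sigma(H^{D_j})}R^{D_j^c}_{z-\nu}\otimes\pi_\nu^{D_j}$ and therefore needs control of the Green function of the \emph{complement} $D_j^c$ — a set comparable in size to $\La$ (possibly infinite) — at the shifted energies $z-\nu$. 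Applying \eqref{eq:FVC} to $D_j^c$ reintroduces exactly the volume factor $\abs{D_j^c}^{C_q}$ you are trying to suppress, so choosing the boxes $D_j$ of fixed size $\ell$ does not help: the volume factor never attaches to the small box, it attaches to its complement. The paper escapes this circle by an induction on the energy parameter $q$ in half-integer steps: whenever both pieces carry at least one particle, $H^{D_j}\ge\tfd$ forces $z-\nu\in\mathds{H}_{q-\frac12}$, so the complement is handled by the already volume-free inductive bound at the lower energy, and \eqref{eq:FVC} is invoked only when all particles sit in a single box $D$ of diameter $\sim d_H^\La(\x,\y)$, where $\abs{D}^{C_q}\e^{-c_q d_H}$ is still exponentially small. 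Without some version of this energy-interval induction your iteration cannot close.

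Two further points. First, the decay in \eqref{eq:FVC} is only in the Hausdorff distance, which is far too weak to sum over the boundary configurations $\u\in\cP_N(\La)$ with $\u\cap\partial D_j\ne\emptyset$ (these sums diverge over $\cP_N$); the paper first reduces to configurations with at most $\cl{q}$ clusters via the lifted operators $\what H_q^\La$ and a Combes--Thomas bound in the $\ell^1$ metric (Lemma \ref{lemweakstrong}), and all exponential sums are taken over $\cP_{N,\cl{q}}$ using Appendix \ref{sec:expsum}; an additional large-deviation argument in the random potential is needed to remove residual $N$-dependent factors. Second, your spectral-averaging inequality in stage two is the single-particle version: the correct many-body statement \eqref{eq:eigencorwea} carries a sum over all configurations $\u$ containing the perturbed site $x$, because varying $\omega_x$ is a rank-one perturbation only after conditioning on which configurations contain $x$; controlling that sum again requires the $\ell^1$/Hausdorff interplay and the few-cluster reduction. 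These are not cosmetic issues — they are where the many-body combinatorics you correctly flag as the crux actually gets resolved, and the proposal as written does not resolve them.
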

We only consider ${\bf x},{\bf y} \in \cP_+(\La)$  because $G_z^\La(\emptyset,\emptyset)=- \tfrac 1z$ for $z\ne 0$ and  $G_z^\La(\emptyset,\x)=0$ for ${\bf x} \in \cP_+(\La)$.  More generally,    given a  bounded Borel measurable function $f$, we have $f(H^\La)(\emptyset,\emptyset)=f(0)$  and  $f(H^\La)(\emptyset,\x)=0$  for ${\bf x} \in \cP_+(\La)$.

\begin{remark}
The input in the theorem,  the estimate \eq{eq:FVC} (the  finite volumes criterion), allows for volume dependence, whereas  the output \eq{eq:weakinf}   
 completely suppresses this dependence. From a technical point of view, this is one of the delicate points in  the analysis,  and the suppression of  volume dependence is a crucial step in proving  Theorem \ref{cor:weakinf}.   Note that the output  \eq{eq:weakinf}   is also valid  for infinite subsets $\La$ of $ \Z$.
\end{remark}
 The proof of Theorem \ref{cor:weakinf}, presented in Section \ref{sec:proofs}, proceeds by  induction over $q\in \frac 12\N^0$, with constants $C_q$ and $c_q$ in  \eq{eq:weakinf}  that deteriorate with $q$, rendering the method unpractical beyond fixed energy intervals. Similarly to the case of random  Schr\"odinger operators in dimension higher than one, it is not clear whether this restriction is a technical shortcoming or a feature (i.e., there is a phase transition for high energies for this model). There is no consensus among physicists on whether such phase transition occurs or not in the infinite volume system.

Theorem~\ref{thm:ppspec1} follows immediately from Theorem~ \ref{thminput},  Theorem~\ref{cor:weakinf}, and Theorem~\ref{thmdynloc} below.

\begin{theorem}\label{thmdynloc}
Let $q\in \frac 12 \N^0$, and suppose that for all $D\subset \Z$ finite  we have
 \be %\label{eq:smva+++}
\E_{D}\set{\cQ_q^D ({\bf x},{\bf y})}\le  C_q \e^{-c_q \wtilde d_{H}^D({\bf x},{\bf y})}  \qtx{for all}\x,\y\in \cP_{+}(D).
\ee  
Then 
\be
\E_\Z \set{\sup_{f\in B_1(I_{\le q})}\abs{f(H^\Z) ({\bf x},{\bf y})} }\le  C_q \e^{-c_q \wtilde d_{H}^\Z({\bf x},{\bf y})} \qtx{for all}\x,\y\in \cP_{+}(\Z).
\ee
Moreover, there exists an event $\cE$, with $\P_\Z(\cE)=1$, such that for $\omega \in \cE$  the spectrum of $H^\Z$ in $I_{\le q}$ is pure point, and if $\psi_\omega$ is an eigenfunction of $H^\Z$ with the corresponding eigenvalue  in  $I_{q}$, so $\psi \in \cH_\Z^{N_\psi}$ for some $N_\psi \in \N$,  it decays exponentially   in the following sense:   
\be
\abs{\psi(\y)} \le C_{\omega,N_\psi}   \abs{\x_{\psi}}_2^{N_\psi +1}\e^{-\frac {c_q}2 \wtilde d_{H}^\Z({\bf y},\x_{\psi})}, 
\ee
where $\x_{\psi}\in \cH_\Z\up{N_\psi}$ is a center of localization for $\psi$, that is,
\be
\abs{\psi(\x_{\psi})}^2 \ge
  \frac {\pa{\abs{\x_{\psi}}_2+1}^{-\pa{N_\psi+1}}}{\sum_{\u \in \cH_\Z^{(\N_\psi)}     }          \pa{    \abs{\u}_2+1}^{-\pa{N_\psi +1}}}.
\ee
\end{theorem}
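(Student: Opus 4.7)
The plan is three-stage. First I would promote the hypothesized finite-volume eigencorrelator decay to an infinite-volume bound on $\sup_{f\in B_1(I_{\le q})}\abs{f(H^\Z)(\x,\y)}$ via strong resolvent convergence and Fatou. Second, I would extract almost-sure pure point spectrum in $I_{\le q}$ via a RAGE argument; once pure point spectrum is in hand, that supremum coincides with $\cQ_q^\Z(\x,\y)$ and inherits the expectation bound. Third, I would convert the expectation bound on $\cQ_q^\Z$ into an almost-sure pointwise bound using Markov and Borel--Cantelli, and then pick the localization center $\x_\psi$ by a weighted pigeonhole argument.

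Exhaust $\Z$ by $\La_L=[-L,L]\cap\Z$. For finite $\La_L$ the spectrum is pure point, hence $\cQ_q^{\La_L}(\x,\y)=\sup_{f\in B_1(I_{\le q})}\abs{f(H^{\La_L})(\x,\y)}$ equals the total variation $\abs{\mu^{\La_L}_{\x,\y}}(I_{\le q})$ of the complex spectral measure. A routine argument (decomposing $H$ into fixed-$N$-particle sectors) gives strong resolvent convergence $H^{\La_L}\to H^\Z$, so $f(H^{\La_L})(\x,\y)\to f(H^\Z)(\x,\y)$ for every $f\in C_b(\R)$. Approximating $\chi_{I_{\le q}}$ from outside by continuous compactly supported functions, together with the lower-semicontinuity of total variation under vague convergence of measures, yields
\[
\sup_{f\in B_1(I_{\le q})}\abs{f(H^\Z)(\x,\y)}\le\liminf_{L\to\infty}\cQ_q^{\La_L}(\x,\y),
\]
and Fatou applied to the hypothesis (noting $\wtilde d_H^{\La_L}=\wtilde d_H^\Z$ on $\cP_+(\La_L)$, since $\La_L$ is connected) gives the first assertion of the theorem. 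The resulting exponential decay of $\sup_{t\in\R}\abs{\scal{\phi_\y,\e^{-itH^\Z}\chi_{I_{\le q}}(H^\Z)\phi_\x}}$ in $\wtilde d_H^\Z(\x,\y)$ makes the orbit $\set{\e^{-itH^\Z}\chi_{I_{\le q}}(H^\Z)\phi_\x}_{t\in\R}$ uniformly tight in $\cH_\Z$ almost surely, and the RAGE theorem places $\chi_{I_{\le q}}(H^\Z)\phi_\x$ in the pure point subspace; density of $\set{\phi_\x}_{\x\in\cP_f(\Z)}$ then yields pure point spectrum in $I_{\le q}$ almost surely.

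Now $\cQ_q^\Z(\x,\y)=\sup_f\abs{f(H^\Z)(\x,\y)}$. Applying Markov to $\cQ_q^\Z(\x,\y)\,\e^{c_q\wtilde d_H^\Z(\x,\y)/2}$ and Borel--Cantelli over $\cP_N(\Z)\times\cP_N(\Z)$ with weights $(1+\abs{\x}_2)^{-A}(1+\abs{\y}_2)^{-A}$ chosen summable, and taking a countable union over $N$, gives, with probability one, $\cQ_q^\Z(\x,\y)\le C_\omega(1+\abs{\x}_2)^A(1+\abs{\y}_2)^A\e^{-c_q\wtilde d_H^\Z(\x,\y)/2}$ for all $\x,\y\in\cP_+(\Z)$. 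For an eigenfunction $\psi\in\cH_\Z^{N_\psi}$ with eigenvalue $\nu\in I_q$ (almost surely simple by a Minami-type argument, so that $\pi_\nu$ is rank one), $\abs{\psi(\x)}\abs{\psi(\y)}=\abs{\pi_\nu(\x,\y)}\le\cQ_q^\Z(\x,\y)$. The center $\x_\psi$ is produced by weighted pigeonhole with weight $w_\u=(1+\abs{\u}_2)^{-(N_\psi+1)}$: the sum $W=\sum_{\u\in\cP_{N_\psi}(\Z)}w_\u$ converges (the number of $\u$ with $\abs{\u}_2\sim R$ grows like $R^{N_\psi-1}$, beaten by $R^{-(N_\psi+1)}$), so $1=\sum_\u\abs{\psi(\u)}^2$ forces some $\u=\x_\psi$ with $\abs{\psi(\x_\psi)}^2\ge w_{\x_\psi}/W$. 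Inverting this lower bound and using $\abs{\y}_2\le C(1+\abs{\x_\psi}_2+\wtilde d_H^\Z(\x_\psi,\y))$ to absorb $(1+\abs{\y}_2)^A$ into the exponential (with a mild reduction of the decay rate, then reparametrizing $c_q$) yields the stated $\abs{\psi(\y)}\le C_{\omega,N_\psi}\abs{\x_\psi}_2^{N_\psi+1}\e^{-c_q\wtilde d_H^\Z(\y,\x_\psi)/2}$. The main obstacle I anticipate is precisely the finite-to-infinite-volume transfer: the eigencorrelator $\cQ_q^D$ is not a weakly continuous functional of the spectral measure (it sees only the pure point part), so one cannot naively pass to the limit on $\Z$; sidestepping this by recasting $\cQ_q^D$ as a total variation (valid on finite volumes, where the spectrum is pure point) and exploiting lower-semicontinuity is the key, with the RAGE step retroactively identifying the two objects on $\Z$.
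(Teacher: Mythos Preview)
Your approach is correct and amounts to a self-contained reconstruction of \cite[Theorem~4.1]{AW2}, which is exactly what the paper's proof invokes: the paper simply observes that each $N$-particle restriction $H_N^\Z$ is a random Schr\"odinger operator on $\ell^2(\Z\up{N})$ fitting the framework of \cite{AW2}, with constants independent of $N$, and then cites that theorem directly. Your three stages (strong resolvent convergence plus Fatou for the finite-to-infinite transfer, RAGE for pure point spectrum, Markov/Borel--Cantelli plus weighted pigeonhole for SULE-type decay) are precisely the standard ingredients of that result, so the two routes coincide in substance; you have supplied the details the paper outsources.

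One imprecision worth flagging: you justify almost-sure simplicity of eigenvalues by a ``Minami-type argument''. Minami estimates (quantitative level repulsion, bounding the probability of two eigenvalues in a short interval) are \emph{not} known for multi-particle systems with correlated on-site randomness such as $V_\omega(\x)=\sum_{i\in\x}\omega_i$, so this is the wrong tool to name. Almost-sure simplicity of the point spectrum of $H_N^\Z$ does hold, but via softer rank-one perturbation arguments (varying a single $\omega_i$ with absolutely continuous distribution), and this is part of what is absorbed into the \cite{AW2} framework the paper cites. Alternatively, one can bypass simplicity entirely by working with the eigenprojection kernels $\pi_\nu(\x,\y)$ directly, as \cite{AW2} does; your inequality $\abs{\psi(\x)}\abs{\psi(\y)}\le\cQ_q^\Z(\x,\y)$ is only immediate in the simple case.
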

Theorem~\ref{thmdynloc} is proven in Section \ref{sec:main}, where it is derived from \cite[Theorem 4.1]{AW2}.

The rest of this paper is organized as follows: In Section \ref{sec:feat} we introduce notation and collect some basic properties of the XXZ spin chain. 
We prove Theorems \ref{cor:weakinf} and \ref{thmdynloc}  in Sections \ref{sec:proofs} and \ref{sec:main}, respectively. In Appendix \ref{sec:nom} we provide a more detailed discussion of localization types for single-particle and many-body systems.  
Appendix \ref{sec:expsum} provides bounds for the exponential sums  encountered throughout the paper. In Appendix \ref{app:quasil} we discuss key properties of the  filter function that  appears in the proof of  Theorem \ref{cor:weakinf}.

\section{Basic features of the XXZ spin chain}\label{sec:feat}

Let $\La \subset \Z$. When working  with a fixed  $\La $, we write $K^c=\La\setminus K$ for $K\subset \La$. 
If $\x \subset \La$ and $S\subset \La$,  we write $\x_S=  \x \cap S$.  If $P$ is an orthogonal projection, we write $\bar P= \ident  - P$.

 Recall that $\mathcal{N}_i$  is   the orthogonal  projection onto the spin-down state (called the {\it local number operator}) at the site $i$. Given $S\subset \Lambda$, 
 $\cN^{S} = \sum_{i\in S} \mathcal{N}_i$ is the {\it total (spin-down) number  operator} in $S$.
  The total number operator $\cN^\La$  has eigenvalues $0,1,2,\ldots, \abs{\La}$.  We set  $\cH_\Lambda\up{N}=\Ran {\chi_{\set{N}}(\mathcal N^\Lambda)}$,  obtaining   the Hilbert space decomposition 
$ \cH_\La= \bigoplus_{N=0}^{\abs{\La}} \cH_\La\up{N}$.  

The operator $H^\La$ is the sum of three operators,
\be\label{bD}
&H^\La= -\tfrac 1 {2\Delta} \bD^\Lambda +\cW^\Lambda +\lambda   V^\La_\omega, \qtx{where}\\
& \bD^\Lambda = \sum_{\set{i,i+1}\subset \Lambda} \pa{\sigma_i^+\sigma_{i+1}^-+\sigma_i^-\sigma_{i+1}^+},\quad 
 \cW^\Lambda  = \cN^\Lambda -\sum_{\set{i,i+1}\subset \Lambda} \cN_i\cN_{i+1} , \quad
 V^\La_\omega  = \ \sum_{i\in \Lambda} \omega_i \mathcal{N}_i.
 \ee
 
Note that the operators $\set{\cN_i}$ (and thus $\cW^\Lambda$ and $ V_\omega$) are diagonal in the canonical basis:   $\cN_i \phi_{\bf x}=\phi_{\bf x}$ if $i\in {\bf x}$ and $0$ otherwise.  $\cW^\La$ is the \emph {number of clusters operator}:     $\cW^\La \phi_\x=W^\La_{\x}\phi_\x $ for $\x\subset \La$ finite, where $W^\La_{\x}$ is the number of clusters (connected components) of $\x$ as a subset of $\La$, so 
 $\sigma\pa {\cW^\La}\subset \set { 0,1,2,\ldots, \abs{ \La}}$.    $ V^\La_\omega $ is the \emph{random potential}:
 \be\label{bD2}
 V^\La_\omega \phi_\x= V_\omega(\x) \phi_\x \mqtx{for} \x\in \cP_f(\La ), \qtx{where}   V_\omega(\x)=\pa{\sum_{i\in \x} \omega_i}.
 \ee

 An important feature of the XXZ Hamiltonian  $H^\Lambda$  is the conservation of the total particle number (or magnetization):  the operators $ \bD^\Lambda$ and the total  number of particles operator $\cN^\La= \sum_{i\in \La} \cN_i$ commute  (i.e., all bounded functions of these operators commute), and hence $H^\La$ and $\cN^\La$ also commute.  If $\La$ is finite,  this is equivalent to  
\be
 \ [H^\La, \cN^\La]= -\tfrac{1}{2\Delta} [ \bD^\Lambda, \cN^\Lambda]  =0. 
\ee

 It can be verified (e.g.,  \cite{EK22})  that   
\be\label{cWbD}
- 2\cW^\La  \le - \bD^\La \le 2\cW^\La  .
\ee
 Since $\lambda \ge 0$, and  $V_\omega\ge0$ by our assumption on the random variables, it follows that
\beq\label{HtfdW}
H^\La \ge \tfd \cW^\La,
\eeq
and, as a consequence,  the spectrum of $H^\Lambda$ is  of the form  
\be\label{eq:spH}
\sigma(H^\Lambda)=\set{0} \cup \pa{\left[1 -\tfrac 1 \Delta, \infty \right ) \cap  \sigma(H^\Lambda) }.
\ee

Given $N\in \N$, we identify  $\x \in \cP_N(\La)$  with $(x_1,\ldots,x_N)\in \La^N$, where $x_1<\ldots<x_N$.
We introduce a distance   in $\cP_N(\La)$ by
\be\label{eq:l1di}
d_1^\La(\x,\y)=  \sum_{i=1}^N  \dist_\La (x_i,y_i)  \qtx{for}\x,\y \in \cP_N(\La).
\ee
We may have $d_1^\La(\x,\y)=\infty$ if   $\La$ is disconnected.    
Note that 
\be
d_1^\La(\x,\y)& \ge d_1(\x,\y):=d_1^\Z(\x,\y)=\abs{{\bf x}-{\bf y}}_1=\sum_{i=1}^N\abs{x_i-y_i},\\
d_1^\La(\x,\y)& \ge d_H^\La (\x,\y).
\ee

In view of \eq{fxy0}, we also introduce a    distance  on $\cP_f(\La)$ by
\be\label{d1tilda}
 \wtilde d_1^\La(\x,\y)= \begin{cases} d_1^\La(\x,\y) & \qtx{if} \abs{\x}=\abs{\y}\\
 \infty & \qtx{otherwise}
 \end{cases}.
\ee

Since our arguments rely on a certain decoupling idea, we need to introduce graphs obtained from $\La$ by decoupling.  Given $K\subset \La$ (we allow $K=\emptyset$), we consider the graph $\La\up{K}$ with vertex set  $\La$
and edges totally contained in either $K$ or $K^c=\La \setminus K$, so $K$ is disconnected from $K^c$. (In particular, $d_{\La\up{K}} (K,K^c)=\infty$.)
We define  $d_1^{K,K^c }(\x,\y)=  d_1^{\La\up{K} }(\x,\y)$  and  $ \wtilde d_1^{K,K^c }(\x,\y)= \wtilde d_1^{\La\up{K} }(\x,\y)$   as in \eq{eq:l1di} and \eq{d1tilda} using the distance $d_{\La\up{K}} (\cdot,\cdot)$.
Note that  $\wtilde d_1^{\La\up{K} }(\x,\y)\ge \wtilde  d_1^{\La }(\x,\y)$ for all $\x,\y \in \cP_f(\La)$.

We    consider the operators  $H^K= H^K\otimes \mathds{1}_{\cH_{K^c}}$  and   $H^{K^c}=\mathds{1}_{\cH_{K}}\otimes H^{K^c}$acting on $\cH_\La$. Then the decoupled Hamiltonian,  resolvent, and boundary operator on $\cH_\La$ are given by
\be\label{eq:Gamma}
H^{\La\up{K}}=H^{K,K^c}= H^{K}+H^{K^c}, \; R_z^{\La\up{K}}= R_z^{K,K^c}=\pa{H^{K,K^c}-z}^{-1}, \;\Gamma^K=H^\La-H^{K,K^c}.
 \ee
The corresponding decoupled Green function is then 
\be\label{eq:Greend}
G^{\La\up{K}}_{z}({\bf x},{\bf y})= G^{K,K^c}_{z}({\bf x},{\bf y}) = \langle \phi_{\bf x},R_{z}^{K,K^c} \phi_{\bf y}\rangle.
\ee
Note that $G^{K,K^c}_{z}({\bf x},{\bf y})$ vanishes unless $\abs{{\bf x}\cap K}=\abs{{\bf y}\cap K}$ and $\abs{{\bf x}\cap K^c}=\abs{{\bf y}\cap K^c}$.

Since 
on $\Ran \chi_{\N}(\cN^\La)$ we have
\be\label{HWE12}
H^{\La\up{K}}-E \ge \tfd \cW^{\La\up{K}}- \tfrac 34 \tfd \ge \tfrac 14 \tfd \cW^{\La\up{K}}  \sqtx{for} E \in I_{\le \frac12},
\ee
 we have, using  the resolvent identity,  that
 \be\label{WRW}
  \sup_{z\in \mathds{H}_{\frac 12}}  \norm{\pa{\cW^{\La\up{K}}} ^{\frac 12} R_z^{\La\up{K}}\pa{\cW^{\La\up{K}} }^{\frac 12}} \le 2 \sup_{E\in I_{\le \frac12}} \norm{\pa{\cW^{\La\up{K}}} ^{\frac 12} R_E^{\La\up{K}}\pa{\cW^{\La\up{K}} }^{\frac 12}}\le   \tfrac 8 {\fd} <\infty.
 \ee
 
The estimate \eq{WRW}  is the required hypothesis for 
an important property of Green functions,  a Combes-Thomas bound: 
 \be\label{eq:CT}
 \sup_{z\in \mathds{H}_{m}} \abs{G^{\La\up{K}}_z({\bf x},{\bf y})}\le C_{m} \e^{-c_{m} \wtilde d_1^{\La\up{K}}(\x,\y)} 
  \qtx{for}  m  \le \tfrac 12 \qtx{and}
\x ,\y \in \cP_+(\La),
\ee
which holds for all $K\subset \La$.

 The bound \eq{eq:CT} is proven similarly to   \cite[Proposition 4.1]{EKS1}, which  is stated for   $H^{\La\up{K}}_N$ with $N\in \N$, and yields decay in the distance 
$d_1^\Z(\x,\y)=\abs{\x-\y}_1$.  To get the bound  \eq{eq:CT}, in the proof we replace the distance $d_1^\Z(\x,\y)$ by $d_{1,R}^{\La\up{K}}(\x,\y)=\min \set{d_1^{\La\up{K}}(\x,\y),R}$, where $R\gg 1$,  perform the proof, and let $R\to \infty$. We get the bound \eq{eq:CT} for $H^{\La\up{K}}_N$ with constants independent of $N$, $\La$ and $K$, so combining with  the definition \eq{d1tilda} we get \eq{eq:CT}.

\begin{remark}
As mentioned in Section~\ref{secmodel}, $H^\La_N$ (or $H^{\La\up{K}}_N$)  is a random Schr\"odinger operator on a certain subgraph of $\La^N$, but the standard	 Combes-Thomas bounds will only yield \eq{eq:CT} with $N$-dependent constants.
Although \cite[Proposition 4.1]{EKS1} is stated and proved on each $\cH_\La\up{N}$, the proof uses the special structure of  $H^\La_N$ (or $H^{\La\up{K}}_N$)  to obtain constants independent of $N$.  In particular, the required hypothesis is    $\norm{\pa{\cW^{\La\up{K}}} ^{\frac 12} R_z^{\La\up{K}}\pa{\cW^{\La\up{K}} }^{\frac 12}} <\infty$, not just   $\norm{ R_z^{\La\up{K}}} <\infty$ as for the standard	 Combes-Thomas bounds. \end{remark}
Unfortunately,  we cannot use the Combes-Thomas bound given in  \eq{eq:CT} directly for energies lying in $\mathds{H}_m$ for  $ m \ge  1$.  The way around this obstacle 
is to lift the spectrum of the operator $H^{\La\up{K}}$, and show that \eq{eq:CT}  holds for the lifted operator.  

To simplify notation, for the remainder of this section we will simply write $H^\La$ for $H^{\La\up{K}}$,   with the results holding for $H^{\La\up{K}}$ for all $K\subset \La$.

Given  $m \in \N^0$, we set $Q_m^\Lambda=\chi_{\set{m}}\pa{\cW^\Lambda}$, the orthogonal projection onto  configurations ${\bf x}$ with exactly $m$ clusters, and let  $Q_B^\Lambda=\chi_{B}\pa{\cW^\Lambda}=\sum_{m\in B} Q_m^\Lambda$  for $ B\subset \N^0$.
 For $k\in \N$, we set 
\be\label{QkhatQ}
Q_{\le k}^\Lambda   =Q_{\set{1,2,\ldots,k}}^\Lambda =\sum_{ m=1}^k Q_m^\Lambda \qtx{and} 
\what Q_{\le k}^\Lambda   =Q_{\le k}^\Lambda + \tfrac {k+1} k Q_0^\Lambda.\,
\ee
 and recall   that \cite[Lemma~3.5]{EK22}
\begin{align}\label{trXk}
\norm{Q_{\le k}^{\Lambda}}_{HS}&\le \sqrt{k} \abs{\Lambda}^{k},\\
\label{trkH}
\tr \chi_{\what I_{\le k}}(H^\La)&\le  k\abs{\Lambda}^{2k}+1.
\end{align}
 Given  $q\in\frac12\N^0$, we set
\be \label{eq:compH'}
\what  H_q^{ \Lambda}=\begin{cases}H^{\Lambda}+\tfd Q_0^\Lambda & \qtx{if} q=0,\tfrac12\\
H^{\Lambda}+{\cl{q}}\tfd \what  Q_{\le \cl{q}}^{\Lambda} & \quad \text{otherwise} \end{cases} .
\ee 
Given  $ z\notin \sigma(\what  H_q^\La) $, we set 
\be\label{eq:Greenk}
\what  R^{\Lambda}_{q,z}&= \pa{\what  H_q^\La  -z}^{-1} \qtx{and}\what G^\La_{q,z}({\bf x},{\bf y}) = \langle \phi_{\bf x},\what R_{q,z}^\Lambda \phi_{\bf y}\rangle.
\ee 
 These operators satisfy
\be \label{eq:hatH1'}
& \what  H_q^{ \Lambda} \ge \tfd\sqtx{for}   q=0,\tfrac12 ,\qtx{and}  \what  H_q^{ \Lambda}\ge  \pa{{\cl{q}}+1} \tfd  \quad \text{otherwise},
 \ee   
and, moreover, on $\Ran \chi_{\N}(\cN^\La)$ we have
\be\label{HWE1244}
\what  H_q^{ \Lambda}-E \ge  \tfrac 1 {4\pa{{\cl{q}}+1} }\tfd \cW^\La \qtx{for all}    E \in I_{\le {q}} \sqtx{for} q\in \tfrac 12 \N^0,
\ee
so it follows that
 \be\label{WRW44}
  \sup_{z\in \mathds{H}_m}  \norm{\pa{\cW^{\La}} ^{\frac 12} \what  R^{\Lambda}_{q,z}\pa{\cW^{\La} }^{\frac 12}} \le 2 \sup_{E\in I_{\le \frac12}} \norm{\pa{\cW^{\La}} ^{\frac 12} \what  R^{\Lambda}_{q,E}\pa{\cW^{\La} }^{\frac 12}}\le  \tfrac  {8\pa{{\cl{q}}+1} }{\fd }<\infty.
 \ee

The bound \eq{HWE1244} is just \eq{HWE12} for $q=0,\frac12$. To derive  \eq{HWE1244} for $q\ge 1$   on $\Ran \chi_{\N}(\cN^\La)$, we use
\be
\what  H_q^{ \Lambda}-E\ &\ge \tfd \cW^\La +{\cl{q}}\tfd \what  Q_{\le \cl{q}}^{\Lambda}  -E \\
&\ge
\tfd\pa{  Q_{\le \cl{q}}^{\Lambda}\pa{\cW^\La + {\cl{q}}- (q+\tfrac 14)}   +
\bar Q_{\le \cl{q}}^{\Lambda}  \cW^\La\pa{ 1 -  \tfrac {q+\frac 14}{\cl{q}+1}     }} \\
& \ge \tfrac 1 {4\pa{{\cl{q}}+1} }\tfd \cW^\La .
\ee

It follows from \eq{WRW44} that we have a 
 Combes-Thomas bound  similar to  \eq{eq:CT}  for the modified Green functions:
For all $q \in \frac 12 \N^0$ we have
 \be\label{eq:CTk}
 \sup_{z\in \mathds{H}_q} \abs{\what G^\La_{q,z}({\bf x},{\bf y})}\le C_q \e^{-c_q \wtilde d_1^\La(\x,\y)} 
 \qtx{for all}  {\x,\y \in \cP_+(\La).}
\ee

 We observe that, given $S\subset \Z$,   $q\in\frac12\N^0$, and $\nu \in \sigma_q(H^S)$, it follows from \eqref{eq:compH'} that
\be\label{eigest}
\pi^S_\nu=\lceil q \rceil\tfd \what  R^{S}_{q,\nu}\what Q^S_{\le \lceil q \rceil} \pi^S_\nu=
\lceil q \rceil^2\tfd^2 \what  R^{S}_{q,\nu}\what Q^S_{\le \lceil q \rceil} \pi^S_\nu  \what  R^{S}_{q,\nu}\what Q^S_{\le \lceil q \rceil}.
\ee

The proof of Theorem \ref{cor:weakinf}  will be facilitated by
the following    a-priori estimate (see, e.g., \cite[Lemma 3.4]{EK22}):
 \be\label{eq:weak1-1a} 
\E_{\set{i,j}}\pa{\norm{T_1\cN_iR_z^\Lambda\cN_j T_2}_{2}^{s^\pr}}  \le
{C_{s^\pr}} \lambda^{-{s^\pr}}\norm{T_1}^{s^\pr}_{2}\norm{T_2}^{s^\pr}_{2} \sqtx{for all} z\in \C \sqtx{and} {s^\pr}\in (0,1) ,
\ee 
where  $\norm{ \cdot}_2$ denotes the Hilbert-Schmidt norm, which
implies that 
\be \label{eq:weak1-1} 
 \sup_{z\in\C}\ \E_{ \La}\set{\abs{G^{ \La}_{z}({\bf x},{\bf y})}^{s^\pr}}\le C_{s^\pr} \mqtx{for all}  {\bf x},{\bf y}\in \cP_+(\La)  \mqtx{and} {s^\pr}\in (0,1).
 \ee

\section{Proof of Theorem \ref{cor:weakinf}}
\label{sec:proofs}

In this Section we prove Theorem \ref{cor:weakinf}.   We fix  $\Delta >1$, $\lambda >0$, and  $s\in (0,\frac 13)$.
 For  $k \in \set{1,2,\ldots,N}$ we let
 \be
 \cP_{N,k}(\La)=\set{\x\in \cP_N(\La), 1\le W_\x^\La  \le k}=  \set{\x\in \cP_N(\La), \phi_\x \in \Ran Q^\Lambda_{\le k}}.
 \ee

\begin{lemma}\label{lemweakstrong}
Let $q\ \in  \frac 12 \N$, $1\le q $, and  $N\in \N$.  Fix $\La \subset \Z$, and and suppose  \eq{eq:weakinf} holds for all  ${\bf x},{\bf y}\in \cP_{N,\cl{q}}(\La)$.  Then \eq{eq:weakinf} holds for all ${\bf x},{\bf y}\in \cP_{N}(\La)$ (with different constants, independent of $\La$ and $N$).
\end{lemma}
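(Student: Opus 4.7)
The strategy is to extend the fractional moment bound from few-cluster configurations $\cP_{N,\cl{q}}(\La)$ to all of $\cP_N(\La)$ via a Feshbach--Schur-style resolvent expansion around the modified Hamiltonian $\what H_q^\La$. Since $q\ge 1$ forces $\cl{q}\ge 1$, on the $N$-particle sector (with $N\ge 1$) the deterministic perturbation $\what H_q^\La - H^\La = \cl{q}\tfd\,\what Q_{\le\cl{q}}^\La$ reduces to $V_0 P$, where $V_0 = \cl{q}\tfd$ and $P$ is the orthogonal projection onto $\mathrm{span}\set{\phi_\u : \u\in\cP_{N,\cl{q}}(\La)}$. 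Moreover, \eqref{eq:hatH1'} and \eqref{eq:CTk} supply a deterministic Combes--Thomas bound $\abs{\what G_{q,z}^\La(\x,\u)}\le C_q\,\e^{-c_q \wtilde d_1(\x,\u)}$, valid uniformly for $z\in\mathds{H}_q$.

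Iterating the resolvent identity $R_z^\La = \what R_{q,z}^\La + \what R_{q,z}^\La V_0 P\,R_z^\La$ once from the left and once from the right yields, for every $\x,\y\in\cP_N(\La)$,
\begin{align*}
G_z^\La(\x,\y) &= \what G_{q,z}^\La(\x,\y) + V_0\sum_{\u\in\cP_{N,\cl{q}}(\La)}\what G_{q,z}^\La(\x,\u)\,\what G_{q,z}^\La(\u,\y) \\
&\quad + V_0^2\sum_{\u,\v\in\cP_{N,\cl{q}}(\La)}\what G_{q,z}^\La(\x,\u)\,G_z^\La(\u,\v)\,\what G_{q,z}^\La(\v,\y).
\end{align*}
The key feature is that in the double-sum remainder both indices of $G_z^\La(\u,\v)$ lie in $\cP_{N,\cl{q}}(\La)$, so the hypothesis \eqref{eq:weakinf} controls $\E_\La\abs{G_z^\La(\u,\v)}^s$ directly. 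Taking $\abs{\cdot}^s$ with $s\in(0,\tfrac13)$, using $\abs{a+b+c}^s\le \abs{a}^s+\abs{b}^s+\abs{c}^s$, passing to the expectation, and combining the pointwise Combes--Thomas bound on each $\what G$-factor with the hypothesis on $\E\abs{G_z^\La(\u,\v)}^s$, I obtain three contributions, each decaying exponentially in chains of distances $\wtilde d_H^\La(\x,\u)$, $\wtilde d_H^\La(\u,\v)$, $\wtilde d_H^\La(\v,\y)$. I would convert these to decay in $\wtilde d_H^\La(\x,\y)$ via the triangle inequality $\wtilde d_H^\La(\x,\y)\le\wtilde d_H^\La(\x,\u)+\wtilde d_H^\La(\u,\v)+\wtilde d_H^\La(\v,\y)$, splitting each exponent into a small piece $\delta$ (which, by the triangle inequality, reconstructs $\delta\,\wtilde d_H^\La(\x,\y)$) and a strictly positive residual that keeps the remaining sums summable.

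The main obstacle is to show that the residual single and double sums over $\u,\v\in\cP_{N,\cl{q}}(\La)$ are bounded by a constant depending only on $q$, uniformly in $\x,\y,\La$, and $N$. A naive enumeration of the Hausdorff ball $\set{\u\in\cP_{N,\cl{q}}(\La):\wtilde d_H^\La(\x,\u)\le r}$ yields a count that grows with $N$, so one must exploit the fact that any $\u\in\cP_{N,\cl{q}}(\La)$ is determined by at most $2\cl{q}$ cluster endpoints to reduce the effective count at Hausdorff distance $\le r$ to one polynomial in $r$ with degree governed by $\cl{q}$ alone. I expect this summation step, carried out via the exponential-sum bounds of Appendix \ref{sec:expsum}, to be the technical heart of the argument; once it is in place, the residual exponential decay absorbs the polynomial growth and yields \eqref{eq:weakinf} on all of $\cP_N(\La)$ with constants independent of $\La$ and $N$.
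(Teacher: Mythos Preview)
Your overall architecture (the two-step resolvent identity around $\what H_q^\La$, the Combes--Thomas bound \eqref{eq:CTk}, and the hypothesis on the middle $G_z^\La(\u,\v)$ factor) is exactly the paper's proof. The gap is in the summation step, and it is precisely where you say the technical heart lies.

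You correctly record that \eqref{eq:CTk} gives $\abs{\what G_{q,z}^\La(\x,\u)}\le C_q\e^{-c_q\wtilde d_1(\x,\u)}$, but you then immediately weaken this to decay in $\wtilde d_H^\La(\x,\u)$ before summing. With only Hausdorff decay, your proposed endpoint-counting cannot yield $N$-independent bounds: for general $\x\in\cP_N(\La)$ (which may have up to $N$ clusters) one has $\abs{[\x]^\La_r}\le N(2r+1)$, so the number of $\u\in\cP_{N,\cl{q}}(\La)$ with $d_H^\La(\x,\u)\le r$ is bounded by $(N(2r+1))^{2\cl{q}}$, not by a polynomial in $r$ alone. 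This is exactly the content of \eqref{eq:dhbn}, whose right-hand side carries an unavoidable factor $N^{2\cl{q}}$. The residual sums would therefore produce constants growing like $N^{2\cl{q}}$, contradicting the lemma's claim of $N$-independence.

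The fix is to \emph{retain} the $\ell^1$ decay on the $\what G$-factors throughout the summation. One extracts the global factor $\e^{-c_q d_H^\La(\x,\y)}$ via $\abs{\x-\u}_1+d_H^\La(\u,\v)+\abs{\v-\y}_1\ge d_H^\La(\x,\y)$ (using $\abs{\,\cdot\,}_1\ge d_H^\La$), but then controls the remaining sums over $\u,\v\in\cP_{N,\cl{q}}(\La)$ with the $\ell^1$ weights $\e^{-c\abs{\x-\u}_1}$ and $\e^{-c\abs{\v-\y}_1}$. The relevant estimate is \eqref{eq:d1bn43}, which gives $\sup_{\y\in\cP_N(\Z)}\sum_{\u\in\cP_{N,k}(\Z)}\e^{-\alpha\abs{\u-\y}_1}\le C_\alpha^{k+1}$ with a bound \emph{independent of $N$}; this is the crucial difference between $\ell^1$ and Hausdorff summation. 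With this adjustment the paper's argument in \eqref{eq:genco} goes through directly.
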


\begin{proof}

We use the following resolvent identity:
 \be\label{eq:resmodl}
R_{z}^{\Lambda}=\what  R^{\Lambda}_{q,z}+ \clq\tfd R^{\Lambda}_{z} \what Q^\Lambda_{\le \clq}\what  R^{\Lambda}_{q,z}=\what  R^{\Lambda}_{q,z}+ \clq \tfd\what  R^{\Lambda}_{q,z}\what Q^\Lambda_{\le \clq}  R^{\Lambda}_{z}.
\ee
Applying it twice, we get 
 \be\label{eq:resmod'1}
R_{z}^{\Lambda}=\what  R^{\Lambda}_{q,z}+ \clq\tfd\what  R^{\Lambda}_{q,z} \what Q^\Lambda_{\le \clq} \what  R^{\Lambda}_{q,z}+\clq^2\tfd^2\what  R^{\Lambda}_{q,z}\what Q^\Lambda_{\le \clq}   R^{\Lambda}_{z} \what Q^\Lambda_{\le \clq}\what  R^{\Lambda}_{q,z}.
\ee

Suppose now  that \eqref{eq:weakinf} holds for all  ${\bf u},{\bf v}\in \cP_{N,\cl{q}}(\La)$.   Then,   using  also \eqref{eq:resmod'1} and   \eqref{eq:CTk}, we can bound
\be\label{eq:genco}
 \sup_{z\in\mathds{H}_q}\sup_{\La\subset \Z}\E_\La\set{\abs{G^{ \La}_{z}({\bf x},{\bf y})}^s}&
 \le C_q  \e^{-c_qd_1^\La(\x,\y)}+C_q\sum_{{\bf u}\in \cP_{N,\cl{q}}(\La)}\e^{-c_qd_1^\La(\x,\u)}
 \e^{-c_q\d_1^\La(\u,\y)}\\& \quad  \quad +C_q  \sum_{{\bf u},{\bf v}\in \cP_{N,\cl{q}}(\La)}\e^{-c_qd_1^\La(\x,\u)}\e^{-c_qd^\La_H\pa{{\bf u},{\bf v}}}\e^{-c_qd_1^\La(\v,\y)}\\ &\le  C_q    e^{-c_qd^\La_H\pa{{\bf x},{\bf y}}},
\ee   
where in the last step we used  properties of exponential sums  (see \eqref{eq:d1bn43}).
\end{proof}

\begin{proof}[Proof of  Theorem \ref{cor:weakinf}]

  We take  $q\in \frac 12 \N^0$, and assume     that \eq{eq:FVC} holds for  all finite $D\subset \Z$.   Given $\La \subset \Z$,  in view of \eq{fxy0} we only have to prove  \eq{eq:weakinf}   for   ${\bf x},{\bf y}\in \cP_+ (\La)$  with  $\abs{\x}=\abs{\y}$.
  
  The proof will proceed by induction on  $q\in \frac 12 \N^0$.   For $q=0,\frac 12$, the theorem  (i.e.,  \eq{eq:weakinf}) follows from the Combes Thomas bound \eq{eq:CT}.  Given $q\in \frac 12 \N^0$, $q\ge 1$, we l  assume the theorem holds  for  $q-\frac 12$, 
 and will  prove it then holds for $q$.
  
 The proof proceeds by a series of Lemmas. In view of Lemma~\ref{lemweakstrong}, 
 it suffices to prove  \eqref{eq:weakinf}  for all  ${\bf x},{\bf y}\in \cP_{N,\cl{q}}(\La)$.

\begin{lemma}
\label{thm:eigencorweak} 
Let $D\subset \Z$ be finite, let $N\in \N$,  and assume
\be\label{eq:weakinf46} 
 \sup_{z\in\mathds{H}_{ q}}\E_D \set{\abs{G^{D}_{z}({\bf x},{\bf y})}^s}\le  C_q \e^{-c_{q}   d_H^D ({\bf x},{\bf y})} \mqtx{for all}\x,\y\in \cP_{N,\cl{q}}(D).
\ee 
Then      
  \be \label{eq:smva'}
\E_{D}\set{\cQ_q^D ({\bf x},{\bf y})}\le  C_q e^{-c_q d_{H}^D({\bf x},{\bf y})} \ \mqtx{for all}\x,\y\in \cP_{N}(D).
\ee  
 \end{lemma}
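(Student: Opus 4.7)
My plan is to decompose the argument into two steps: (i) prove \eqref{eq:smva'} for the restricted configurations $\a, \b \in \cP_{N, \cl q}(D)$, i.e., those with at most $\cl q$ clusters; and (ii) extend to arbitrary $\x, \y \in \cP_N(D)$ via a resolvent-expansion trick modeled on Lemma \ref{lemweakstrong}.

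For the extension step (ii), I would apply the two-sided form of the spectral-projection identity \eqref{eigest},
\[
\pi^D_\nu = \cl{q}^2 \tfd^2 \, \what R^D_{q,\nu} \what Q^D_{\le \cl q} \, \pi^D_\nu \, \what Q^D_{\le \cl q} \what R^D_{q,\nu} \qquad (\nu \in \sigma_q(H^D)),
\]
and take matrix elements with respect to $\phi_\x, \phi_\y$. The factors $\what Q^D_{\le \cl q}$ restrict the inserted resolutions of identity to $\a, \b \in \cP_{N, \cl q}(D)$ (since $H^D$ preserves particle number, only $\a, \b$ with $|\a|=|\b|=N\ge 1$ contribute, and $\what Q^D_{\le \cl q} \phi_\a = \phi_\a$ precisely on that class). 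The Combes-Thomas bound \eqref{eq:CTk} for $\what R^D_{q,\nu}$ holds uniformly for $\nu \in I_{\le q}$ since $\what H^D_q - \nu \ge \tfrac 14 \tfd$ by \eqref{eq:hatH1'}. Summing over $\nu \in \sigma_q(H^D)$ and applying the triangle inequality yields the deterministic bound
\[
\cQ^D_q(\x, \y) \le C_q \sum_{\a, \b \in \cP_{N, \cl q}(D)} e^{-c_q d_1(\x, \a)} \, \cQ^D_q(\a, \b) \, e^{-c_q d_1(\b, \y)}.
\]
Taking expectations, substituting the restricted bound from step (i), and collapsing the double exponential sum via Appendix \ref{sec:expsum} (cf.\ \eqref{eq:d1bn43} in the proof of Lemma \ref{lemweakstrong}) completes the extension.

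For the restricted step (i), the task is to convert fractional moments of $G^D_z$ on $\mathds{H}_q$ (hypothesis \eqref{eq:weakinf46}) into bounds on the expected unsigned sum $\E_D\{\sum_\nu |\pi^D_\nu(\a, \b)|\}$ for $\a, \b \in \cP_{N, \cl q}(D)$. My plan is to first use the deterministic bound $\cQ^D_q(\a, \b) \le 1$, which follows from Cauchy-Schwarz applied to $\sum_\nu |\pi^D_\nu(\a, \b)| \le \bigl(\sum_\nu \pi^D_\nu(\a, \a)\bigr)^{1/2} \bigl(\sum_\nu \pi^D_\nu(\b, \b)\bigr)^{1/2} \le 1$, to reduce to the regime of large $d_H^D(\a, \b)$. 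For that regime I would deploy a spectral-averaging argument of Aizenman-Warzel type: the absolute continuity of $\mu$ and the rank-one structure of $\lambda \cN_j$ allow one to express $\E_D\{|\pi^D_\nu(\a, \b)|\}$ via a limiting Stieltjes/contour representation of the signed spectral measure $\sum_\nu \pi^D_\nu(\a, \b) \delta_\nu$, dominated by fractional moments of $G^D_z$ along contours inside $\mathds{H}_q$, with the apriori bound \eqref{eq:weak1-1} used to control the contour tails. The trace bound \eqref{trkH} governs the sum over eigenvalues, its polynomial-in-$|D|$ factor being absorbed into the exponential decay at the large distances under consideration.

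The principal obstacle is precisely step (i). Fractional moments of $G^D_z$ provide only pointwise-in-energy resolvent control, whereas $\cQ^D_q$ is an \emph{unsigned} sum over random eigenvalues in $I_{\le q}$; bridging this gap requires using the absolute continuity of $\mu$ in an essential way, together with careful bookkeeping so that the polynomial volume factors from \eqref{trkH} are fully absorbed into the exponential decay rate, leaving a volume-independent conclusion. Step (ii) is comparatively routine, relying only on the deterministic Combes-Thomas bound and standard exponential sum estimates.
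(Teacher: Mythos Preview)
Your step (ii) is correct and is exactly the paper's extension argument.

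There are two issues in step (i). The lesser one: a Stieltjes/contour representation recovers only the \emph{signed} sum $\sum_\nu \pi^D_\nu(\a,\b)$, not the eigencorrelator $\sum_\nu |\pi^D_\nu(\a,\b)|$. The passage to absolute values is handled in the paper by the rank-one spectral-averaging formula of \cite{AW2}, quoted as \eqref{eq:eigencorwea}, which bounds $\E_D\{\cQ^D_I(\x,\y)\}$ by $\sum_{\u \ni x} \int_I \E_D\{|G^D_E(\u,\y)|^r\}\,dE$ for a fixed site $x \in \x$. Expanding $G^D_E(\u,\y)$ via \eqref{eq:resmod'1} to reduce to $\cP_{N,\cl q}$-configurations and invoking the hypothesis \eqref{eq:weakinf46} then yields $\E_D\{\cQ^D_q(\a,\b)\} \le C_q N^{2\cl q} e^{-c_q d_H^D(\a,\b)}$ for $\a,\b \in \cP_{N,\cl q}(D)$, the factor $N^{2\cl q}$ arising from \eqref{eq:dhbn}.

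The real gap is your plan to absorb this prefactor. The deterministic bound $\cQ \le 1$ matches the target $C_q e^{-c_q d_H^D}$ only when $d_H^D(\a,\b) = O(1)$, while $N^{2\cl q} e^{-c_q d_H^D}$ can be absorbed only when $d_H^D(\a,\b) \gtrsim N$; the intermediate regime $1 \ll d_H^D(\a,\b) \ll N$ is left uncovered, and nothing in your outline addresses it. The paper closes this gap by a large-deviation argument you did not anticipate: setting $S = [\a]^D_N$, with probability at least $1 - C_q e^{-c_q N}$ every $\u \in \cP_{N,\cl q}(S)$ satisfies $\lambda V_\omega(\u) \ge \cl q \tfd$, so on that event the $N$-particle restriction of $H^S$ lies above $(\cl q + 1)\tfd$ and a Combes-Thomas bound for $R^S_\nu$ holds throughout $I_{\le q}$. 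Writing $\pi^D_\nu \phi_\a = \pi^D_\nu \Gamma^S R^{S,S^c}_\nu \phi_\a$ and using that the relevant boundary configurations sit at $\ell^1$-distance at least $N$ from $\a$ gives the uniform estimate $\E_D\{\cQ^D_q(\a,\b)\} \le C_q e^{-c_q N}$. Combining the two bounds --- the large-deviation one when $d_H^D < N$, the spectral-averaging one when $d_H^D \ge N$ --- removes the $N$-dependence.
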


\begin{proof}  Let $D\subset \Z$ finite, $N\in \N$, and $\x,\y\in \cP_{N}$.
We  assume that there is $x\in {\bf x}$ such that 
 \be\label{defHD}
  \dist_D (x, {\bf y})= d^D_{H}({\bf x}, {\bf y}),
 \ee 
  the other case being similar. 

 We first prove the lemma for $\x,\y\in \cP_{N,\cl{q}}(D)$.
This is done using the reduction to  resolvents  achieved by using   the estimate \cite[Eq. (7.44)]{AW} and the spectral averaging as in {\cite[Theorem 4.5]{AW2}}.  The final result can be re-formulated in our setting as:  

\emph{Let $r\in(0, 1)$, $N\in \N$, and  let $I\subset \R$ be an interval.   Then for all finite $D \subset \Z$ and 
  ${\bf x},{\bf y}\in \cP_N(D)$  we  have 
 \be\label{eq:eigencorwea} 
\E_{D}\set{\cQ_I^D({\bf x},{\bf y})}\le C_r\sum_{{\bf u}\in \cP_N(D): x\in {\bf u}} \int_I \E_{D}\set{\abs{G_E^D ({\bf u},{\bf y})}^r}dE
 \qtx{for any} x\in {\bf x}. 
 \ee}
Note that that  $d^D_{H}({\bf u}, {\bf y})\ge d^D_{H}({x}, {\bf y})$  if $x\in \u \subset D$.

 Given $\x,\y\in \cP_{N,\cl{q}}(D)$,  we estimate $\E_{D}\set{\cQ_q^D({\bf x},{\bf y})}$ by  \eqref{eq:eigencorwea}, and estimate the term 
$\E_{D}\set{\abs{G_E^D ({\bf u},{\bf y})}^s}$ inside the integral as in \eqref{eq:genco}, using \eq{eq:weakinf46} ,  getting
 \be \label{eq:smva}
\E_{D}\set{\cQ_q^D({\bf x},{\bf y})}&\le C_q \sum_{{\bf u}\in \cP_N(D): x\in {\bf u}}\e^{-c_q d_1^\D(\u,\y)}+C_q  \sum_{{\bf u}\in \cP_N(D): x\in {\bf u}}\ \sum_{{\bf v}\in \cP_{N,\cl{q}}(D)}\e^{-c_qd_1^\D(\u,\v)}\e^{-c_q\d_1^\D(\v,\y)}\\& \quad  \quad +C_q\sum_{{\bf u}\in \cP_N(D): x\in {\bf u}}\ \sum_{{\bf v},{\bf w}\in \cP_{N,\cl{q}}(D)}\e^{-c_qd_1^\D(\u,\v)}\e^{-c_qd^D_H\pa{{\bf v},{\bf w}}}\e^{-c_q\d_1^\D({\bf w},\y)}.
\ee
To bound the first sum, we note that it follows from \eqref{defHD} that
\be\label{defHD32}
d_1^\D(\u,\y)\ge d_H^D({{\bf u},{\bf y}})\ge  \dist_D (x, {\bf y})=d_{H}^D({\bf x}, {\bf y}) \sqtx{if} x \in \u.
\ee
Hence 
\be\label{eq:smva2}
\sum_{{\bf u}\in \cP_N(D): x\in {\bf u}}\e^{-c_q d_1^\D(\u,\y)} \le \e^{-\frac {c_q}2d_{H}^D({\bf x}, {\bf y})}\sum_{{\bf u}\in \cP_N(D)} \e^{-\frac {c_q}2\d_1^\D(\u,\y)}   \le   C_{q} \e^{-\frac  {c_q} 2d_{H}^D({\bf x}, {\bf y})},
\ee  
 where in the last step we used \eqref{eq:d1bn}  and $\y\in \cP_{N,\cl{q}}(D)$.
 
 To estimate the second sum in \eqref{eq:smva}, we use  the triangle inequality to conclude that 
\be
d_1^\D(\u,\v)+d_1^\D(\v,\y)&\ge d_1^\D(\u,\y)\ge d_{H}^D({\bf x},{\bf y}) \qtx{if} x \in \u,\\
d_1^\D(\u,\v)+d_1^\D(\v,\y)&\ge \tfrac12\pa{d_1^\D(\u,\y)+\d_1^\D(\v,\y)}.
\ee  
Hence 
\be\label{eq:smva3}
& \sum_{{\bf u}\in \cP_N(D): x\in {\bf u}}\ \sum_{{\bf v}\in \cP_{N,\cl{q}}(D)}\e^{-c_q d_1^\D(\u,\v)}\e^{-c_qd_1^\D(\v,\y)}\\
 & \quad \le \e^{-\frac {c_q}2d_{H}^D(({\bf x},{\bf y})} \sum_{{\bf u}\in \cP_N(D)}\sum_{{\bf v}\in \cP_{N,\cl{q}}(D)}\e^{-\frac {c_q}4d_1^\D(\u,\y)}e^{-\frac {c_q}2d_1^\D(\v,\y)}
 \le  C_qe^{-\frac {c_q}2 d_{H}^D({\bf x}, {\bf y})},
\ee
using   $\y\in \cP_{N,\cl{q}}(D)$  and \eqref{eq:d1bn}  twice in the last step.

Finally, to estimate the last sum in \eqref{eq:smva}, we use  the triangle inequality  and \eq{defHD32} to conclude that  
\be
&d_1^D(\u,\v)+d_H^D\pa{{\bf v},{\bf w}}+d_1^D({\bf w},{\bf y})\ge d_{H}^D({\bf x},{\bf y}),\\
& d_1^D(\u,\v)+d_H^D\pa{{\bf v},{\bf w}}+d_1^D({\bf w},{\bf y})\ge  d_1^D(\u,\v)+ \frac12\pa{d_H^D\pa{{\bf v},{\bf y}}+d_1^D({\bf w},{\bf y})}.
\ee 
Hence 
\be\label{eq:smva4}
& \sum_{{\bf u}\in \cP_N(D): x\in {\bf u}}\ \sum_{{\bf v},{\bf w}\in \cP_{N,\cl{q}}(D)}\e^{-c_q d_1^D(\u,\v)}\e^{-c_qd^D_H\pa{{\bf v},{\bf w}}}\e^{-c_qd_1^D({\bf w},{\bf y})}\\ 
& \qquad \le e^{-\frac  {c_q}2d_{H}^D({\bf x}, {\bf y})}\sum_{{\bf u}\in \cP_N(D):}\ \sum_{{\bf v},{\bf w}\in \cP_{N,\cl{q}}(D)}e^{-\frac  {c_q}2d_1^D(\u,\v)}e^{-\frac  {c_q}4d_H^D\pa{{\bf v},{\bf w}}}e^{-\frac {c_q}4d_1^D({\bf w},{\bf y})}\\
& \qquad \le C_q N^{2\cl{q}}e^{-\frac  {c_q}2d_{H}^D({\bf x}, {\bf y})},
\ee  
using $\y\in \cP_{N,\cl{q}}(D)$,   \eqref{eq:dhbn} and \eqref{eq:d1bn}  in the last step. 

  Putting together \eqref{eq:smva}, \eqref{eq:smva2}, \eqref{eq:smva3}, and  \eqref{eq:smva4} , we get
\be \label{eq:smva5}
\E_{D}\set{\cQ_q^D ({\bf x},{\bf y})}\le  C_q N^{2\cl{q}}\e^{-c_q d_{H}^D({\bf x},{\bf y})}  \mqtx{for all}\x,\y\in \cP_{N,\clq}(D).
\ee

 To remove the $N$ dependence in \eqref{eq:smva5}, we will show
\be\label{eq:almost1}
\E_{D}\set{\cQ_q^D({\bf x},{\bf y})}\le C_q \e^{-c_q N}\mqtx{for all}\x,\y\in \cP_{N,\clq}(D),
\ee
using  a large deviation estimate.
Let $\bar{\mu}= \E \set{\omega_0}$, and assume $N \lambda \bar{\mu} > 2 \clq\tfd$.  Then the standard large deviations estimate (recall \eqref{bD2})
gives
\be
\P\set{  \lambda V_\omega ({\bf u})< \clq \tfd}\le \P\set{V_\omega ({\bf u})< N \tfrac {\bar{\mu}} 2}\le \e^{- c_\mu N}\sqtx{for all} {\bf u}\in \cP_{N,\clq}(\Z).
\ee
Thus,  for any $N\in \N$,   letting $S=[{\bf x}]^D_N$, and defining the event
\be\label{eventENS}
\cE^S_N= \set{ \lambda  V_\omega ({\bf u})< {\clq\tfd} \sqtx{for some} {\bf u} \in  \cP_{N,\cl{q}}(S)},
\ee
 we have 
  \begin{align}\label{PV<}
 \P\set{\cE^S_N} \le  C_{\mu,q}  \abs{\cP_{N,\cl{q}}(S)} \e^{- c_\mu N} \le C_{\mu,q} \clq \pa{N(2N+1)}^{2\clq } \e^{- c_\mu N} \le C_{\mu,q}^\pr \e^{- c_{\mu,q}^\pr N},
  \end{align}
  where we used $\abs{\cP_{N,\cl{q}}(S)}=\tr Q_{\le k}^{S}$, \eq{trXk}, and $\abs{S}\le N (2N+1)$.   Moreover,
  on the  complementary event   $\pa{\cE_N^S}^c $ we have
    \be\label{PVP}
\chi_N(\mathcal N^S)H^S\ge (\clq+1)\tfd\chi_N(\mathcal N^S) ,
  \ee
  so we can use \cite[Proposition 4.1]{EKS1} to obtain  the Combes-Thomas bound 
 \be\label{eq:CTkS}
\sup_{z\in \mathds{H}_{\clq}}\abs{ G^S_{z}({\bf x},{\bf y})}\le C_q e^{-c_q d_1^S({\bf x},{\bf y})} \qtx{for all} \x,\y\in \cP_{N,\cl{q}}(S).
\ee

To show \eqref{eq:almost1}, we observe that for  $\nu \in \sigma_q (H^D)$ we have 
\be
\pi_{_\nu}\phi_{\bf x}=\pi_{\nu}\pa{H^{S,S^c}-\nu}R_\nu^{S,S^c}\phi_{\bf x}=\pi_{\nu}\Gamma^S R_\nu^{S,S^c}\phi_{\bf x},
\ee
and  the construction of $S$ yields
\be
R_\nu^{S,S^c}\phi_{\bf x}=P_+^{S^c}R_\nu^{S}\phi_{\bf x}.
\ee
It follows that on the  complementary event   $\pa{\cE_N^S}^c $ we have 
\be\label{eq:alm}
\sum_{\nu\in\sigma_q(H^D)} \abs{\langle\phi_{\bf y},\pi_{\nu}\phi_{\bf x}\rangle}&\le \sum_{\nu\in\sigma_q(H^D)}\sum_{{\bf u}\in\cP_N^{\partial_{ex}}(D)}\abs{\langle \phi_{\bf y},\pi_{\nu}\phi_{\bf u}\rangle}\abs{\langle \phi_{\bf u},\Gamma^S P_+^{S^c} R_\nu^{S}\phi_{\bf x}\rangle}\\ &\le C_q\sum_{\nu\in\sigma_q(H^D)}\sum_{{\bf u}\in\cP_N^{\partial_{ex}}(D)}e^{-c_q d_1^D({\bf u},{\bf x})}\abs{\langle \phi_{\bf y},\pi_{\nu}\phi_{\bf u}\rangle}, 
\ee
where we used \eq{eq:CTkS}, and  
\be
\cP_N^{\partial_{ex}}(D)=\cP_N(D, \partial_{ex}^D S) = \set{{\bf u}\in\cP_N(D), {\bf u}\cap\partial_{ex}^D S\neq\emptyset}.
\ee

We next observe that by the Cauchy-Schwarz inequality,
\be\label{eq:CSs}
\pa{\sum_{\nu\in\sigma_q (H^D)}\abs{\langle \phi_{\bf y},\pi_{\nu}\phi_{\bf u}\rangle}}^2 &\le \sum_{\nu\in\sigma_q (H^D)}\langle \phi_{\bf y},\pi_{\nu}\phi_{\bf y}\rangle\,\sum_{\nu\in\sigma_q (H^D)}\langle \phi_{\bf u},\pi_{\nu}\phi_{\bf u}\rangle\\ &= \langle \phi_{\bf y},\chi_{I_{q}}(H^D)\phi_{\bf y}\rangle\, \langle \phi_{\bf u},\chi_{I_{q}}(H^D)\phi_{\bf u}\rangle\le 1.
\ee
Plugging this into \eqref{eq:alm}, we see that
\be\label{eq:almo}
\sum_{\nu\in\sigma_q (H^D) }\abs{\langle\phi_{\bf y},\pi_{\nu}\phi_{\bf x}\rangle}\le C_q\sum_{ {{\bf u}\in\cP_N^{\partial_{ex}}(D)}}e^{-c_q d_1^D({\bf u},{\bf x})}.
\ee 
Since for any ${\bf u}\in\cP_N^{\partial_{ex}}(D)$ we have 
$d_1^D({\bf u},{\bf x})\ge N$, 
we deduce from \eqref{eq:almo}  and \eqref{eq:d1bn} (recall $\y \in \cP_{N,\clq}(D)$), that on $ \cE_N^c$ we have 
\be\label{eq:almos}
\cQ_q^D ({\bf x},{\bf y})=\sum_{\nu\in\sigma_q (H^D)}\abs{\langle\phi_{\bf y},\pi_{\nu}\phi_{\bf x}\rangle}\le C_q \e^{-c_qN}.
\ee
Hence
\be\label{eq:almost}
\E_D\set{\cQ_q^D ({\bf x},{\bf y})}&=\E_D\set{\chi_{\cE^S_N}\cQ_q^D ({\bf x},{\bf y})}+\E_D\set{\chi_{\pa{\cE_N^S}^c }\cQ_q^D ({\bf x},{\bf y})}  \\&\le \P\pa{\cE_N^S}+C_q\e^{-c_q N}\le C_qe^{-c_qN},
\ee
where we have used \eq{PV<} and  \eqref{eq:almost}. The relation \eqref{eq:almost1} follows.

Using \eq{eq:smva5} for   $ d_{H}^D({\bf x},{\bf y})   \ge N$ and  \eqref{eq:almost1} for $ d_{H}^D({\bf x},{\bf y})   <N$ we get
\be \label{eq:smva79}
\E_{D}\set{\cQ_q^D ({\bf x},{\bf y})}\le  C_q\e^{-c_q d_{H}^D({\bf x},{\bf y})}  \mqtx{for all}\x,\y\in \cP_{N,\clq}(D).
\ee

 We now consider the general case $\x,\y\in \cP_N(D)$.  For any $\nu \in \sigma_{\clq}  (D)$, using 
\eq{eigest}, we have
\be
\pi_\nu^D (\x,\y)= C_q  \sum_{\u,\v \in \cP_{N,\clq}(D)}    \what G_{\clq,\nu}^D(\x,\u) \pi_\nu^D (\u,\v){\what G}_{\clq,\nu}^D(\v,\y)
\ee
It follows that 
\be
\cQ^D(\x,\y) &\le \sum_{\u,\v \in \cP_{N,\clq}(D)} \abs{ \what G^D_{\clq,\nu}(\x,\u) }\cQ^D (\u,\v)\abs{  \what G^D_{\clq,\nu}(\v,\y)}\\
&  \le C_q \sum_{\u,\v \in \cP_{N,\clq}(D)}  \e^{-c_qd_1^D({\bf x},{\bf u})}\cQ^D (\u,\v) \e^{-c_q d_1^D({\bf v},{\bf y})},
\ee
where we used \eqref{eq:CTk}.  Using \eq{eq:smva79}, we conclude that
\be
\E_D\set{\cQ^D(\x,\y} &\le C_q  N^{2\clq} \sum_{\u,\v \in \cP_{N,\clq}(D)}  \e^{-c_q d_1^D({\bf x},{\bf u})}\e^{- c_q d_H^D(\u,\v) } \e^{-c_qd_1^D({\bf v},{\bf y})}\\
& \le  C_q  \e^{- c^\pr_q d_H^D(\x,\y)} \sum_{\u,\v \in \cP_{N,\clq}(D)}  \e^{-c^\pr_qd_1^D({\bf x},{\bf u})}\e^{-c^\pr_q\d_1^D({\bf v},{\bf y})} \\
& \le  C_q \e^{- c_q d_H^D(\x,\y)} ,
\ee  
where we used  \eq{eq:d1bn43} twice. This concludes the proof.
\end{proof}

\begin{lemma}\label{prop:decbn}   
Let $q\ \in  \frac 12 \N$, $1\le q $, and   assume the the induction hypothesis,  that is,  Theorem~\ref{cor:weakinf} is proven for $q-\frac 12$. 
Let  $\La \subset \Z$ and  $N\in \N$.
 Then for all $\x,\y\in \cP_{N,\cl{q}}(\La)$ and  any finite  connected set $D\subset\La$ satisfying $ (\x\cup \y)_D \neq\emptyset$ and $(\x\cup \y)_{D^c}\neq\emptyset$ we have
\be\label{eq:weakinfdec}
  \sup_{z\in\mathds{H}_q}\E_\La\set{\abs{G^{D,D^c}_{z}({\bf x},{\bf y})}^s}\le C_{ q} \abs{D}^{2\cl{q}} \e^{-c_{q-\frac12} { d_H^\La ({\bf x},{\bf y})} }.
\ee 
\end{lemma}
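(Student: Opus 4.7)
The plan is a spectral decomposition of the decoupled resolvent with respect to $H^D$, followed by an application of the induction hypothesis on the $D^c$ factor. Since $H^D$ and $H^{D^c}$ commute as operators on $\cH_\La=\cH_D\otimes\cH_{D^c}$, one has the (finite, since $D$ is finite) expansion
\[
G^{D,D^c}_z(\x,\y) = \sum_{\nu \in \sigma(H^D)} \pi^D_\nu(\x_D, \y_D)\, G^{D^c}_{z - \nu}(\x_{D^c}, \y_{D^c}).
\]
The hypotheses $(\x\cup\y)_D\neq\emp$, $(\x\cup\y)_{D^c}\neq\emp$, combined with the magnetization selection rule $|\x_D|=|\y_D|$ and $|\x_{D^c}|=|\y_{D^c}|$ needed for $G^{D,D^c}_z(\x,\y)\neq 0$, force $\x_D,\y_D,\x_{D^c},\y_{D^c}$ to all be nonempty whenever the left-hand side is nonzero.

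The crucial observation is a uniform spectral-gap estimate for $z-\nu$. Since $\x_D,\y_D$ are nonempty, the relevant sector is $\cH_D\up{N_D}$ with $N_D\ge 1$, for which $\sigma(H^D|_{\cH_D\up{N_D}})\subset[\tfd,\infty)$ by \eq{eq:spH}. Hence every $\nu$ contributing to the sum satisfies
\[
\Rea(z-\nu)\le(q+\tfrac14)\tfd-\tfd=(q-\tfrac34)\tfd<(q-\tfrac14)\tfd,
\]
i.e., $z-\nu\in\mathds{H}_{q-\frac12}$ uniformly in $\nu$. The induction hypothesis (Theorem~\ref{cor:weakinf} at level $q-\tfrac12$, applicable to the arbitrary subgraph $D^c\subset\Z$) then gives
\[
\E_{D^c}\{|G^{D^c}_{z-\nu}(\x_{D^c},\y_{D^c})|^s\}\le C_{q-\frac12}\,e^{-c_{q-\frac12}\wtilde d_H^{D^c}(\x_{D^c},\y_{D^c})}
\]
for every such $\nu$, with constants independent of $\nu$ and $\omega_D$.

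Applying the fractional subadditivity $|\sum_\nu a_\nu|^s\le\sum_\nu|a_\nu|^s$ (valid for $s\le 1$), conditioning on $\omega_D$, and using independence of $\omega_D$ and $\omega_{D^c}$ yields
\[
\E_\La\{|G^{D,D^c}_z(\x,\y)|^s\}\le C_{q-\frac12}\,e^{-c_{q-\frac12}\wtilde d_H^{D^c}(\x_{D^c},\y_{D^c})}\,\E_D\Bigl\{\sum_\nu|\pi^D_\nu(\x_D,\y_D)|^s\Bigr\}.
\]
Because $\wtilde d_H^\La(\x,\y)\le\wtilde d_H^D(\x_D,\y_D)+\wtilde d_H^{D^c}(\x_{D^c},\y_{D^c})$ (an easy consequence of the triangle inequality once all four restrictions are nonempty), it suffices to prove
\[
\E_D\Bigl\{\sum_\nu|\pi^D_\nu(\x_D,\y_D)|^s\Bigr\}\le C_q|D|^{2\cl{q}}\,e^{-c_{q-\frac12}\wtilde d_H^D(\x_D,\y_D)}.
\]
To obtain this, split $\sigma(H^D)$ at $(q+\tfrac14)\tfd$: the low part $\sigma(H^D)\cap I_{\le q}$ has at most $\cl{q}|D|^{2\cl{q}}+1$ elements by \eq{trkH}, and combining this count with the eigencorrelator bound from the induction (Lemma~\ref{thm:eigencorweak} at level $q-\tfrac12$) and Jensen's inequality for the concave $t\mapsto t^s$ supplies the factor $e^{-c_{q-\frac12}\wtilde d_H^D}$; the high part is handled via the resolvent identity \eq{eq:resmod'1} linking $R^D_z$ with $\what R^D_{q,z}$, the Combes-Thomas bound \eq{eq:CTk}, and the low-cluster constraint $\phi_{\x_D}\in\Ran Q^D_{\le\cl{q}}$ (which follows from $W_{\x_D}^D\le W_\x^\La\le\cl{q}$ since $D$ is connected in $\La$).

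The principal obstacle is the high-$\nu$ portion of $\sum_\nu|\pi^D_\nu(\x_D,\y_D)|^s$: a naive count of eigenvalues of $H^D$ above $(q+\tfrac14)\tfd$ is exponential in $|D|$ (and $N_D$), so the required polynomial-in-$|D|$ prefactor must be extracted via the low-cluster structure of $\phi_{\x_D}$ and the modified Hamiltonian $\what H^D_q$, which coincides with $H^D+\cl{q}\tfd$ on the low-cluster non-vacuum subspace. Keeping the resulting bound independent of $N_D$ through one or two applications of the resolvent identity is the key technical point.
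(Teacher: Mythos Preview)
Your treatment of the low-$H^D$-energy part is essentially the paper's: the spectral decomposition of $R_z^{D,D^c}$, the shift $z-\nu\in\mathds{H}_{q-\frac12}$ for $\nu\ge\tfd$, and the use of the eigencorrelator bound at level $q-\tfrac12$ together with the trace bound \eq{trkH} all appear in \eq{eq:fsm}--\eq{CNpinu98}. (A minor slip: you should split at $(q-\tfrac14)\tfd$, not $(q+\tfrac14)\tfd$, since the eigencorrelator bound you invoke covers only $I_{\le q-\frac12}$.)

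The genuine gap is the high-$\nu$ portion. Your proposal---``handled via the resolvent identity \eq{eq:resmod'1}, the Combes--Thomas bound \eq{eq:CTk}, and the low-cluster constraint''---does not go through. The identity \eq{eigest} that would let you insert $\what R^D_{q,\nu}$ is only valid for $\nu\in I_{\le q}$; for large $\nu$ there is no such representation, and individual high-energy spectral projections $\pi^D_\nu$ carry no Hausdorff-distance decay. Moreover, summing $|\pi^D_\nu(\x_D,\y_D)|^s$ over the high-$\nu$ part of the $N_D$-particle spectrum involves $\binom{|D|}{N_D}$ terms, so H\"older against the eigencount gives an exponential-in-$|D|$ prefactor, not $|D|^{2\cl{q}}$. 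Finally, the projection $\bar P_{I_{\le q-\frac12}}(H^D)$ is a spectral (not cluster) projection and is nonlocal, so Combes--Thomas-type arguments cannot be applied directly to $R_z^{D,D^c}\bar P_{I_{\le q-\frac12}}(H^D)$.

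The paper's way around this is a genuinely new ingredient you are missing: the \emph{filter function} $F_{\xi,a}(x)=(1-\e^{-\xi x^2})/(x-ia)$ of Appendix~\ref{app:quasil}. One writes $R_z^{D,D^c}=F_a(H^{D,D^c}-E)+R_z^{D,D^c}\e^{-\xi(H^{D,D^c}-E)^2}$ and exploits two facts. First, $F_a(H^{D,D^c}-E)$ is quasi-local in the required sense (Theorem~\ref{thmlocal}, proved from a Lieb--Robinson-type propagation bound, Lemma~\ref{lem:F}); this supplies the $d_H^\La$ decay directly, without any spectral decomposition. Second, the Gaussian remainder is split by $P_{I_{\le q+\frac12}}(H^{D,D^c})$: the piece with $\bar P_{I_{\le q+\frac12}}(H^{D,D^c})$ is trivially $O(\e^{-r/4})$, while the resonant piece $P_{I_{\le q+\frac12}}(H^{D,D^c})\bar P_{I_{\le q-\frac12}}(H^D)$ \emph{vanishes identically} because $H^{D^c}\ge\tfd$ on the $N_{D^c}\ge1$ sector forces $H^{D,D^c}\ge(q+\tfrac34)\tfd$ there. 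This cancellation is the key step that your outline does not reproduce.
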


\begin{proof} 
Let $\x,\y\in \cP_{N,\cl{q}}(\La)$,  and consider a  finite  connected set  $D \subset \La$ satisfying $ (\x\cup \y)_D \neq\emptyset$ and $(\x\cup \y)_{D^c}\neq\emptyset$.   We only  need  to consider the case  $1\le \abs{\x _ D}= \abs{\y_D}\le N-1  $, as otherwise $G^{D,D^c}_{z}({\bf x},{\bf y})=0$.

To do so, given  $a\in\R$,  let $F_{a}$ be the  analytic  function on $\R$ given by
  \be\label{defF}
   F_{\xi,a} (x)&= \frac{1-\e^{-\xi x^2}}{x-ia} \qtx{for} x \in \R \qtx{if} a\ne 0,\\
    F_{\xi,0} (x)&= \frac{1-\e^{-\xi x^2}}{x} \qtx{for} x \in  \R\setminus \set{0} \qtx{and} F_{\xi,0} (0)=0.
   \ee 
Given $z\in\mathds{H}_q$, let $E=\Rea z$ and $a=\Ima z$.   Setting $r= d_H^\La \pa{{\bf x},{\bf y}}-1$, and taking $F_a (x)=  F_{\xi,a} (x)$ with  $ \xi=   \frac {\Delta^2}{50}  r  $, we have the following bound (recall \eqref{eq:Txy}):  
\be\label{eq:decen}
& \abs{G^{D,D^c}_{z}({\bf x},{\bf y})}\le \abs{\pa{R^{D,D^c}_{z}P_{I_{\le q-\frac12}}(H^{D})}({\bf x},{\bf y})}+\abs{\pa{R^{D,D^c}_{z}\bar P_{I_{\le q-\frac12}}(H^{D})}({\bf x},{\bf y})}\\
&  \quad \le \abs{\pa{R^{D,D^c}_{z}P_{I_{\le q-\frac12}}(H^{D})}({\bf x},{\bf y})}+   \abs{\pa{F_{a}(H^{D,D^c}-E)\bar P_{I_{\le q-\frac12}}(H^{D})}({\bf x},{\bf y})}\\ &\hspace{2cm}  + \abs{\pa{R_z^{D,D^c}\exp\pa{-r(H^{D,D^c}-E)^2}P_{I_{\le q+\frac12}}(H^{D,D^c})\bar P_{I_{\le q-\frac12}}(H^{D})}({\bf x},{\bf y})}\\ &\hspace{2cm} + \abs{\pa{R_z^{D,D^c}\exp\pa{-r(H^{D,D^c}-E)^2}\bar P_{I_{\le q+\frac12}}(H^{D,D^c})\bar P_{I_{\le q-\frac12}}(H^{D})}({\bf x},{\bf y})}.
\ee

For  $z\in\mathds{H}_q$ we have 
\be
&\abs{\pa{R_z^{D,D^c}\exp\pa{-r(H^{D,D^c}-E)^2}\bar P_{I_{\le q+\frac12}}(H^{D,D^c})\bar P_{I_{\le q-\frac12}}(H^{D})}({\bf x},{\bf y})}\\
& \qquad 
\le \norm{{R_z^{D,D^c}\exp\pa{-r(H^{D,D^c}-E)^2}\bar P_{I_{\le q+\frac12}}(H^{D,D^c})}}
\le 2 \e^{-\frac r 4 }.
\ee
Moreover,
as  $ \abs{\x_{D^c}}= \abs{\y _{D^c}}\ge 1 $, we have $P_{I_{\le q+\frac12}}(H^{D,D^c})\bar P_{I_{\le q-\frac12}}(H^{D})=0$. Since 
\be
& \abs{\pa{F_{a}(H^{D,D^c}-E)\bar P_{I_{\le q-\frac12}}(H^{D})}({\bf x},{\bf y})}\\
& \qquad \le \abs{\pa{R^{D,D^c}_{z}P_{I_{\le q-\frac12}}(H^{D})}({\bf x},{\bf y})} +  \abs{\pa{F_{a}(H^{D,D^c}-E) P_{I_{\le q-\frac12}}(H^{D})}({\bf x},{\bf y})}, 
\ee 
we obtain the estimate
\be\label{eq:decena}
&\abs{G^{D,D^c}_{z}({\bf x},{\bf y})}\le \abs{\pa{R^{D,D^c}_{z}P_{I_{\le q-\frac12}}(H^{D})}({\bf x},{\bf y})} +     \abs{\pa{F_{a}(H^{D,D^c}-E)}({\bf x},{\bf y})}\\ &\hskip60pt 
+ \abs{\pa{F_{a}(H^{D,D^c}-E) P_{I_{\le q-\frac12}}(H^{D})}({\bf x},{\bf y})}+  2 \e^{-\frac r 4 }\\
&\hskip20pt \le \abs{\pa{R^{D,D^c}_{z}P_{I_{\le q-\frac12}}(H^{D})}({\bf x},{\bf y})} +  \abs{\pa{F_{a}(H^{D,D^c}-E) P_{I_{\le q-\frac12}}(H^{D})}({\bf x},{\bf y})}   \\&\hskip60pt  +C\pa{\e^{-\frac r 2} +\e^{-\frac r 4 }},
\ee
 where we used  \eqref{locest2} to get the last inequality.

We first estimate the second term in  the last line of \eq{eq:decena}. Given $\nu \in \sigma_{q-\frac12}(H^D)$, we get, using \eq{eigest},
\be
&\abs{\pa{F_{a}(H^{D,D^c}-E-\nu)\pi_{\nu}^D}(\x,\y)}\\
&\hspace{20pt}=\lceil q-\tfrac12\rceil\tfd  \abs{\pa{F_{a}(H^{D,D^c}-E-\nu) \what  R^{D}_{q-\frac12,\nu}\what Q^D_{\le \lceil  q-\frac12 \rceil} \pi_{\nu}^D}(\x,\y)}
\\ &\hspace{20pt}\le q\sum_{{\u_D}\in \cP_{N_D}(D)}\sum_{{\v_D}\in\cP_{{N_D},\lceil q-\frac12\rceil}(D)}\abs{F_{a}(H^{D,D^c}-E-\nu)({\bf x},\u_D \cup\y_{D^c})}\\&\hspace{7cm}\times\abs{\what  G^{D}_{ q-\frac12,\nu}(\u_D,\v_D)}\abs{\pi^D_{\nu}(\v_D,\y_D)}\\& \hspace{20pt}\le C_q\sum_{{\u_D}\in \cP_{N_D}(D)}\sum_{{\v_D}\in\cP_{{N_D},\lceil q-\frac12\rceil}(D)} \e^{-c d_H^{  \La } \pa{{\bf x},\u_D \cup\y_{D^c}}}\e^{-c_{q-\frac 1 2} d_1^D ({\u_D},{\v_D})}\abs{\pi^D_{\nu}(\v_D,\y_D)},
\ee
where  $N_D= \abs{\x_D}$ and we used   \eqref{eq:CTk} and \eqref{locest2} for the last  inequality.

It follows that 
\be\label{FHDDEP}
&\abs{\pa{F_{a}(H^{D,D^c}-E)P_{I_{\le q-\frac12}}(H^{D})}({\bf x},{\bf y})}\\ & \hspace{20pt} \le  
C_q {\sum_{{\u_D}\in \cP_{N_D}(D)}\sum_{{\v_D}\in\cP_{{N_D},\lceil q-\frac12\rceil}(D)} \e^{-c d_H^{  \La } {({\bf x},\u_D \cup\y_{D^c})}}\e^{-c_{q-\frac 1 2}d_1^D ({\u_D},{\v_D})}}\\&\hspace{6cm}\times\sum_{\nu \in \sigma_{q-\frac12}(H^D)}\abs{\pi^D_{\nu}(\v_D,\y_D)}\\ & \hspace{20pt}  \le  
C_q {\sum_{{\u_D}\in \cP_{N_D}(D)}\sum_{{\v_D}\in\cP_{{N_D},\lceil q-\frac12\rceil}(D)} \e^{-c d_H^{  \La } {({\bf x},\u_D \cup\y_{D^c})}}\e^{-c_{q-\frac 1 2}d_1^D ({\u_D},{\v_D})}} \cQ^D_{q-\frac 12}(\v_D,\y_D).
\ee

We have
\be\label{dHxyu}
d_H^{  \La } ({\bf x},\u_D \cup\y_{D^c}) +d_1^D ({\u_D},{\v_D})+d_H^D (\v_D,\y_D)  \ge d_H^{  \La }(\x,\y),
\ee
since
\be
d_H^D (\u_D,\y_D) =   d_H^\La (\u_D \cup\y_{D^c}, \y_D \cup \y_{D^c})=  d_H^\La (\u_D \cup\y_{D^c}, \y) ,
\ee
and hence
\be
&d_H^{  \La } ({\bf x},\u_D \cup\y_{D^c} )+d_1^D ({\u_D},{\v_D})+d_H^D (\v_D \cup \y_D)  \ge d_H^{  \La } ({\bf x},\u_D \cup\y_{D^c} ) + d_H^D (\u_D,\y_D)\\
& \quad =  d_H^{  \La } ({\bf x},\u_D \cup\y_{D^c} ) +   d_H^\La (\u_D \cup\y_{D^c}, \y) \ge    d_H^{  \La }(\x,\y).
\ee

Taking expectations in \eq{FHDDEP}, using  Lemma \ref{thm:eigencorweak} for $q-\frac 12$ (the hypotheses of the  Lemma  are satisfied for $q-\frac 12$ by the induction hypothesis), and using \eq{dHxyu},
we obtain the bound 
\be \label{FPxy}
&\E_\La \set{\abs{\pa{F_{a}(H^{D,D^c}-E) P_{I_{\le q-\frac12}}(H^{D})}({\bf x},{\bf y})}} \\& \quad  \le 
C_q {\sum_{{\u_D}\in \cP_{N_D}(D)}\sum_{{\v_D}\in\cP_{{N_D},\lceil q-\frac12\rceil}(D)} \e^{-c d_H^{  \La } {({\bf x},\u_D \cup\y_{D^c})}}\e^{-c_{q-\frac 1 2} d_1^D ({\u_D},{\v_D})}} \e^{-c_{q-\frac 12 } d_H^D (\v_D,\y_D)}\\
& \le  C_q \e^{-c_{q-\frac 12 }  d_H^{  \La } {({\bf x},\y)}} {\sum_{{\u_D}\in \cP_{N_D}(D)}\sum_{{\v_D}\in\cP_{{N_D},\lceil q-\frac12\rceil}(D)} \e^{-c d_H^{  \La } {({\bf x},\u_D \cup\y_{D^c})}}\e^{-c_{q-\frac 1 2}d_1^D ({\u_D},{\v_D})}} \e^{-c_{q-\frac 12 }  d_H^D (\v_D,\y_D)}\\
&\quad \le  C_q \e^{-c_{q-\frac 12 }  d_H^{  \La } {({\bf x},\y)}} \sum_{{\v_D}\in\cP_{{N_D},\lceil q-\frac12\rceil}(D)} 
 \pa{\sum_{{\u_D}\in \cP_{N_D}(D)}\e^{-c_{q-\frac 1 2}d_1^D ({\u_D},{\v_D})}} \e^{-c_{q-\frac 12 }  d_H^D (\v_D,\y_D)}\\
&\quad \le  C_q C_{q-\frac 1 2} \e^{-c_{q-\frac 12 }  d_H^{  \La } {({\bf x},\y)}} \sum_{{\v_D}\in\cP_{{N_D},\lceil q-\frac  12\rceil}(D)} 
  \e^{-c_{q-\frac 12 }  d_H^D (\v_D,\y_D)}\\
&\quad \le  C_q  N_D^{2\cl{q}}  \e^{-c_{q-\frac 12 }  d_H^{  \La } {({\bf x},\y)}}\le  C_q  \abs{D}^{2\cl{q}}  \e^{-c_{q-\frac 12 }  d_H^{  \La } {({\bf x},\y)}}, 
\ee  
where in the last two steps we used  \eq{eq:d1bn} and \eq{eq:dhbn}.  The use of the latter is justified  as $\y_D\in \cP_{{N_D},\lceil q\rceil}(D)$ since $ \y\in \cP_{{N},\lceil q\rceil}(\La)$ and $D$ is connected.

It remains to estimate the first term in \eq{eq:decena}.  We use  the decomposition 
 (recall that $D$ is assumed to be finite) 
\be
R_z^{D,D^c} =\sum_{\nu\in \sigma (H^{D})} R_{z-\nu}^{D^c}\otimes \pi_{\nu}^{D}  \qtx{on} \cH_\La= \cH_{D^c}\otimes \cH_{D},
\ee
Since $1\le \abs{\x _ D}= \abs{\y_D}\le N-1  $,
we have 
\be
\pa{R_z^{D,D^c}P_{I_{\le q-\frac12}}(H^{D})}({\bf x},{\bf y})=\sum_{\nu \in \sigma_{q-\frac12}(H^D)} G_{z-\nu}^{D^c}({\bf x}_{D^c},{\bf y}_{D^c})\, \pi^D_{\nu}({\bf x}_D,{\bf y}_D),
\ee 
so
\be\label{eq:fsm}
\abs{\pa{R_z^{D,D^c}P_{I_{\le q-\frac12}}(H^{D})}({\bf x},{\bf y})}^{s}\le\sum_{\nu \in \sigma_{q-\frac12}(H^D)}\abs{G_{z-\nu}^{D^c}({\bf x}_{D^c},{\bf y}_{D^c})}^{s}  \abs{ \pi^D_{\nu}({\bf x}_D,{\bf y}_D)}^{s}.
\ee
If  $z\in\mathds{H}_{q}$, we have  $z-\nu\in \mathds{H}_{\le q-\frac 12}$ for $  {\nu \in \sigma_{q-\frac12}(H^D)}$, so it follows from  the induction hypothesis that
\be
\sup_{\zeta \in\mathds{H}_{q- \frac 12}}\E_{D^c}\abs{G_{\zeta}^{D^c}({\bf x}_{D^c},{\bf y}_{D^c})}^{s}\le  C_{ q-\frac 12}  \e^{-c_{ q- \frac 12} d_H^{D^c} ({\bf x}_{D^c},{\bf y}_{D^c})}.
\ee
Thus, for  $ z\in\mathds{H}_{q}$, using  H\"older's inequality and  the deterministic estimate  \eq{trkH}.  we get
  \be\label{CNpinu}
&\E_{D^c} \set{\abs{\pa{R_z^{D,D^c}P_{I_{\le q-\frac12}}(H^{D})}({\bf x},{\bf y})}^s}\\& \qquad 
\le  C_{ q-1}\e^{-c_{ q-1} d_H^{D^c} ({\bf x}_{D^c},{\bf y}_{D^c})} \sum_{\nu \in \sigma_{q-\frac12}(H^D)}  \abs{ \pi^D_{\nu}({\bf x}_D,{\bf y}_D)}^{s}\\
& \qquad 
\le  C_{ q-1}\e^{-c_{ q-1} d_H^{D^c} ({\bf x}_{D^c},{\bf y}_{D^c})}\pa{ \sum_{\nu \in \sigma_{q-\frac12}(H^D)}  \abs{ \pi^D_{\nu}({\bf x}_D,{\bf y}_D)}}^{s}\pa{\tr \chi_{I_{ q-\frac 12}}(H^D)}^{1-s}\\
& \qquad 
= C_{ q-1}\e^{-c_{ q-1} d_H^{D^c} ({\bf x}_{D^c},{\bf y}_{D^c})}\pa{\tr \chi_{I_{ q-\frac 12}}(H^D)}^{1-s}\cQ^D_{q-\frac12}({\bf x}_D,{\bf y}_D)^s \\
& \qquad 
\le C_{ q-1}\pa{{ \cl{q-\tfrac 12} \abs{D}^{2\cl{q-\tfrac 12}} +1}}^{1-s}\e^{-c_{ q-1} d_H^{D^c} ({\bf x}_{D^c},{\bf y}_{D^c})}\cQ^D_{q-\frac12}({\bf x}_D,{\bf y}_D)^s \\
& \qquad 
\le C_{ q} \abs{D}^{2(1-s)\cl{q-\tfrac 12}} \e^{-c_{ q-1} d_H^{D^c} ({\bf x}_{D^c},{\bf y}_{D^c})}\cQ^D_{q-\frac12}({\bf x}_D,{\bf y}_D)^s .
\ee

It follows from H\"older's inequality, the induction hypothesis, and  Lemma \ref{thm:eigencorweak} for $q-\frac 12$
that
\be\label{CNpinu3}
\E_D \set{\pa{ \cQ^D_{q-\frac12}({\bf x}_D,{\bf y}_D)}^s}&
\le\pa{  \E_D \set{ \cQ^D_{q-\frac12}({\bf x}_D,{\bf y}_D)}}^s
\\   &  \le C_{q-\frac12}^s   \e^{-sc_{q-\frac12}d_H^D ({\bf x}_D,{\bf y}_D)}.
\ee
Combining  \eq{CNpinu} and \eq{CNpinu3} we get
\be\label{CNpinu98}
\E_{\La} \set{\abs{\pa{R_z^{D,D^c}P_{I_{\le q-\frac12}}(H^{D})}({\bf x},{\bf y})}^s}&\le   C_{ q}{\abs{D}}^{2\cl{q-\frac12}}\e^{-s c_{q-\frac12} \pa{ d_H^{D^c} ({\bf x}_{D^c},{\bf y}_{D^c})+d_H^D ({\bf x}_D,{\bf y}_D)} }\\
& \le    C_{ q}{\abs{D}}^{2\cl{q-\frac12}} \e^{-sc_{q-\frac12} { d_H^\La ({\bf x},{\bf y})} },
\ee
where we used 
\be
d_H^\La ({\bf x},{\bf y}) \le \max \pa{d_H^D ({\bf x}_D,{\bf y}_D),d_H^{D^c} ({\bf x}_{D^c},{\bf y}_{D^c})}.
\ee

It now follows from \eq{eq:decena}, \eq{FPxy}, and \eq{CNpinu98}, incorporating $s$ into the constants,
that
\be
E_\La\set{ \abs{G^{D,D^c}_{z}({\bf x},{\bf y})}^s} \le  C_{ q}\abs{D}^{2\cl{q}} \e^{-c_{q-\frac12} { d_H^\La ({\bf x},{\bf y})} },
\ee
and the lemma is proven.     
\end{proof}

\begin{lemma}\label{lem:spr}
Let $q\ \in  \frac 12 \N$, $1\le q $, and assume the induction hypothesis,  that is,  Theorem~\ref{cor:weakinf} is proven for $q-\frac 12$. 
Let  $\La \subset \Z$ and  $N\in \N$.  Then
\be\label{eq:spr}
\sup_{z\in\mathds{H}_q}\E_{\La}\set{\abs{G_z^\La ({\bf x},{\bf y}) }^{\ s}}\le C_q e^{-c_q d_H^\La(\x,\y)} \sqtx{for}\x,\y\in \cP_{N,\cl{q}}(\La),
\ee
\end{lemma}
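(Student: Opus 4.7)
\textbf{Proof plan for Lemma \ref{lem:spr}.}

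The plan is to derive the fractional moment decay of $G^\La_z$ by combining the decoupled bound of Lemma~\ref{prop:decbn} with the a-priori estimate \eqref{eq:weak1-1a}, via a resolvent identity on a finite ``corridor'' whose size is chosen to absorb the polynomial prefactor $|D|^{2\cl{q}}$ into the exponential decay rate. Fix $z\in\mathds{H}_q$ and set $r=d_H^\La(\x,\y)$.

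\emph{Step 1: reduction to large $r$.} If $r\le R_0$ for a suitable constant $R_0=R_0(q)$, the a-priori bound \eqref{eq:weak1-1} gives $\E_\La|G^\La_z(\x,\y)|^s\le C_s\le C_s e^{c_q R_0}e^{-c_q r}$, so we may assume $r$ large. By the definition of $d_H^\La$, either there is $x_0\in\x$ with $d_\La(x_0,\y)=r$ or there is $y_0\in\y$ with $d_\La(y_0,\x)=r$; by symmetry of the two cases, assume the former.

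\emph{Step 2: choice of the corridor $D$.} Pick a finite connected interval $D\subset\La$ centered at $x_0$ of length $\ell=\eta r$, with $\eta\in(0,\frac{1}{4})$ to be optimized. Then $\y\cap D=\emptyset$, so $\y\subset D^c$ and $x_0\in\x_D\ne\emptyset$; the hypotheses $(\x\cup\y)_D\ne\emptyset$ and $(\x\cup\y)_{D^c}\ne\emptyset$ of Lemma~\ref{prop:decbn} hold.

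\emph{Step 3: resolvent identity and splitting.} From $R_z^\La=R_z^{D,D^c}+R_z^\La\Gamma^D R_z^{D,D^c}$, taking matrix elements,
\begin{equation*}
G^\La_z(\x,\y)=G^{D,D^c}_z(\x,\y)+\sum_{(\u,\v)\in\cB_D} G^\La_z(\x,\u)\,\Gamma^D(\u,\v)\,G^{D,D^c}_z(\v,\y),
\end{equation*}
where $\cB_D$ indexes pairs coupled by the $O(1)$ boundary bonds of $\Gamma^D$. For each such pair, $\v$ and $\u$ differ at $\partial D$ by the hopping or the diagonal $\cN_i\cN_{i+1}$ term; in particular, since $G^{D,D^c}_z(\v,\y)$ vanishes unless $|\v_D|=|\y_D|=0$, only configurations $\v\subset D^c$ contribute.

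\emph{Step 4: the decoupled term.} Lemma~\ref{prop:decbn} yields
\begin{equation*}
\E_\La|G^{D,D^c}_z(\x,\y)|^s\le C_q|D|^{2\cl{q}}e^{-c_{q-\frac12}r}\le C_q(\eta r)^{2\cl{q}}e^{-c_{q-\frac12}r}\le C_q e^{-c'_q r},
\end{equation*}
for any $c'_q<c_{q-\frac12}$, once $r$ is large.

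\emph{Step 5: the correction term.} For each boundary pair, apply Cauchy--Schwarz
\begin{equation*}
\E_\La\bigl[|G^\La_z(\x,\u)|^s|G^{D,D^c}_z(\v,\y)|^s\bigr]\le\bigl(\E_\La|G^\La_z(\x,\u)|^{2s}\bigr)^{1/2}\bigl(\E_\La|G^{D,D^c}_z(\v,\y)|^{2s}\bigr)^{1/2}.
\end{equation*}
Since $2s<\tfrac{2}{3}<1$, the first factor is uniformly bounded by \eqref{eq:weak1-1}. For the second, note that $\v\subset D^c$ has particles related to $\y$ up to a single site at $\partial_{ex}^\La D$, so $d_H^\La(\v,\y)\gtrsim r-\ell=(1-\eta)r$; Lemma~\ref{prop:decbn} (with exponent $2s$) then gives an exponential factor $e^{-c_{q-\frac12}(1-\eta)r/2}$ modulo polynomial factors in $|D|\sim r$. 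Summing over the $O(1)$ boundary bonds and the controlled bulk multiplicity absorbs all polynomial factors into a slightly smaller exponential rate.

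\emph{Step 6: conclusion.} Combining the above and choosing $\eta$ small enough that $(1-\eta)c_{q-\frac12}/2>c_q$ for some $c_q>0$, we obtain $\E_\La|G^\La_z(\x,\y)|^s\le C_q e^{-c_q r}$, uniformly in $\La$.

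\textbf{Main obstacle.} The delicate point is Step~5: while both factors in the Cauchy--Schwarz split are individually controllable, one must verify that the auxiliary configurations $\v\in\cP_N(\La)$ appearing in the boundary sum satisfy a cluster bound of the form $W^\La_\v\le\cl{q}+O(1)$, so that Lemma~\ref{prop:decbn} truly applies. Because $\y\in\cP_{N,\cl{q}}(\La)$ and $\v$ differs from a configuration in $\cP_{N,\cl q}(\La)$ only by moving or adding a single site at $\partial D$, this is true up to a $+1$ shift of $\cl{q}$, harmless under the inductive scheme (one can re-prove the inductive step with $\cl{q}+1$ in place of $\cl{q}$ wherever needed, since all constants depend only on $q$). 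The other subtle issue is carefully tracking the relationship between $d_H^\La$ and the distances that actually appear in Lemma~\ref{prop:decbn} and in Combes--Thomas (which is in $d_1$), using $d_1\ge d_H^\La$ on $\cP_N$.
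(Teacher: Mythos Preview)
There is a genuine gap in Step~5, and it is not the cluster-number issue you flag as the ``Main obstacle''. You correctly observe that only configurations $\v\subset D^c$ contribute to the boundary sum, since $\y\subset D^c$. But then $G^{D,D^c}_z(\v,\y)=G^{D^c}_z(\v,\y)$, and Lemma~\ref{prop:decbn} \emph{cannot} be applied: that lemma requires $(\v\cup\y)_D\ne\emptyset$, which fails here. Since $D^c$ is infinite when $\La$ is, you cannot invoke the finite-volume hypothesis \eqref{eq:FVC} either, and the induction hypothesis only covers $\mathds{H}_{q-\frac12}$. A bound on $\E_\La\abs{G^{D^c}_z(\v,\y)}^{2s}$ for $z\in\mathds{H}_q$ is precisely the statement you are trying to prove (with $\La$ replaced by $D^c$), so the argument is circular.

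The paper avoids this by three devices you omit. First, it places the decoupled resolvent on the $\x$ side of the identity, $G_z^\La(\x,\y)=(R_z^{D,D^c}\Gamma^D R_z^\La)(\x,\y)$; since $x_0\in\x_D$, either $\x$ straddles $D$ and $D^c$ (so Lemma~\ref{prop:decbn} applies to $G^{D,D^c}_z(\x,\cdot)$) or $\x\subset D$ (so $G^{D,D^c}_z(\x,\cdot)=G^D_z(\x,\cdot)$ is a \emph{finite}-volume Green function controlled by \eqref{eq:FVC}). Second, the sum over boundary configurations is not ``$O(1)$'' as you assert---there are infinitely many $\u\in\cP_N(\La)$ touching $\partial^\La D$---and the paper controls it by expanding $R_z^{D,D^c}$ via the modified resolvent identity \eqref{eq:resmodl}; the Combes--Thomas bound \eqref{eq:CTk} for $\what R^{D,D^c}_{q,z}$ then supplies $\ell^1$-summable decay, and the projection $\what Q_{\le\cl{q}}$ forces the intermediate configurations into $\cP_{N,\cl{q}}$. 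Third, the corridor $D$ is chosen so that $\partial^\La D$ avoids both $\x$ and $\y$ by a fraction of $d_H^\La(\x,\y)$; this is only possible when $d_H^\La(\x,\y)>6\cl{q}\,N$, and the regime $d_H^\La(\x,\y)\le 6\cl{q}\,N$ is handled by a separate large-deviation argument. Your Step~1 reduction to ``$r>R_0(q)$'' with $R_0$ depending only on $q$ does not address this, since the relevant threshold scales with $N$.
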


\begin{proof}

Fix  $z\in\mathds{H}_q$ and $\x,\y\in \cP_{N,\cl{q}}(\La)$. We assume $d_H^\La ({\bf x},{\bf y})=  d_\La (x, {\bf y})$ for some $x\in \x$, with the other case being similar.

  We first assume $d_H^\La ({\bf x},{\bf y})>6\cl{q}N$.    In this case, we claim there exists $ r <d_H^\La ({\bf x},{\bf y})$, such that, setting 
 $D=[x]^\La_r $,   we have 
 \be\label{dLadH}
 d_\La (\x, \partial^\La D )\ge  \tfrac 1 {6\cl{q}} d_H^\La(\x,\y)-1 \qtx{and} d_\La (\y, \partial^\La D)\ge  \tfrac 1 {6\cl{q}} d_H^\La(\x,\y)-1. 
 \ee
Note that it follows that  $ \x_D \ne \emptyset$ and $\y_D=\emptyset$, which implies  $G_z^{D,D^c} ({\bf x},{\bf y})=0$.

The claim can be proven as follows. If $\cl{q}=1$,  or if $\x$ consists of one cluster, simply take $r= 3N$.  If  $\cl{q}\ge 2$, and   $\x$ consists of  $p$ clusters  where $2\le p \le \cl{q}$,  we must have  $N\ge 2$.
let  $b=\fl{\tfrac 1 {6\cl{q}}  d_\La (x, {\bf y})}>N-1$, and set 
 \be
 S_1=  [x]^\La_{b} \qtx{and} S_j=  [x]^\La_{jb}\setminus  [x]^\La_{(j-1) b } \sqtx{for} \quad j=2,3,,\ldots 6\cl{q}.
 \ee
 Since $\x\in \cP_{N,\cl{q}}(\La)$, $\x$ has at most $\cl{q}$ clusters of length $\le N-1$, so a cluster  can intersect at most two of the $S_{j}$'s (as $ b> N-1$), hence $\x $ can intersect at most $2\cl{q}$ of the $S_{j}$,  $j=2,3,\ldots 6\cl{q}$. It follows that there exists 
  ${j}_* \in \set{2,3,\ldots 6\cl{q}-2}$ such that 
 \be\label{eq:M}
 \x\cap \pa{S_{{j}_*}\cup S_{{j}_*+1}}=\emptyset,
 \ee  
Setting $r= {{j}_*}b$, we get \eq{dLadH}.

The resolvent identity and  $G_z^{D,D^c} ({\bf x},{\bf y})=0$ give
\be
G_z^\La ({\bf x},{\bf y})=\pa{R_z^{D,D^c} \Gamma^D  R_z^\La }({\bf x},{\bf y}),
\ee
so using   \eqref{eq:resmodl} and inserting partitions of identity, we get
\be\label{eq:par+}
&\abs{G_z^\La ({\bf x},{\bf y}) }\le C\sum_{{\bf u}\in\cP^D_N(\La)}  \abs{\what G_{z,q}^{D,D^c} ({\bf x},{\bf u})}\abs{ \pa{\Gamma^D R_z^\La} ({\bf u},{\bf y})}\\&\hspace{1cm}+C\sum_{{\bf u}\in \cP^D_N(\La)}\sum_{{\bf v}\in \cP_{N,\cl{q}}(\La)}     \abs{ G_z^{D,D^c} ({\bf x},{\bf v})} \abs{\what G_{z,q}^{D,D^c} ({\bf v},{\bf u})} \abs{ \pa{\Gamma^D R_z^\La} ({\bf u},{\bf y})} ,
\ee
where $\cP^D_N(\La)=\set{{\bf u}\in \cP_N(\La), \  {\bf u}_{\partial^\La D}\neq\emptyset}$.

For all $\u^\pr, \v^\pr \in \cP_N(\La)$ we have 
\be \label{eq:par+2}
&   \E\set{\abs{ \pa{\Gamma^D R_z^\La } (\u^\pr, \v^\pr)}^{s^\pr}}\le C_{s^\pr}  \qtx{for all } s^\pr  \in (0,1),  \\
&  \abs{\what G_{z,q}^{D,D^c} (\u^\pr, \v^\pr)} \le  C_q\e^{-c_q d_1^\La (\u^\pr, \v^\pr)} , 
\ee 
where the first bound follows from \eqref {eq:weak1-1}  since  $  \Gamma^D \phi_{\u^\pr} $ can be decomposed into a linear combination of at most 4 canonical basis vectors, and   the second is just  \eqref{eq:CTk}. 

We also have the inequality
 \be \label{GDDcxv}
\E_\La\set{\abs{ G_z^{D,D^c} ({\bf x},{\bf v})}^{s}}\le C_q\abs{D}^{C_q}  \e^{-c_q d^\La_H ({\bf x},{\bf v})} \qtx{for all}  \v\in \cP_{N,\cl{q}}(\La).
\ee 
If ${\bf x}_{D^c}\neq\emptyset$, this inequality follows from Lemma \ref {prop:decbn}. On the other hand, 
if ${\bf x}_{D^c}=\emptyset$,  $G_z^{D,D^c} ({\bf x},{\bf v})=0$  unless  ${\bf v}_{D^c}=\emptyset$, and in this case    $G_z^{D,D^c} ({\bf x},{\bf v})= G_z^{D} ({\bf x},{\bf v})$ and $d^\La_H ({\bf x},{\bf v})=d^D_H ({\bf x},{\bf v})$, and hence \eq{GDDcxv} follows from the hypothesis  \eqref{eq:FVC}.
Moreover, since $0<s <2s <\frac 23$, using the Riesz-Thorin Interpolation Theorem, it follows from \eq{GDDcxv}  and  \eqref {eq:weak1-1}   (with $s^\pr=\frac 23 $)  that
 \be \label{GDDcxv22}
\E_\La\set{\abs{ G_z^{D,D^c} ({\bf x},{\bf v})}^{2s}}\le C^\pr_q {\abs{D}}^{C^\pr_q}  \e^{-c^\pr_q d^\La_H ({\bf x},{\bf v})}.
\ee

From  \eq{eq:par+},\eq{eq:par+2},   \eq{GDDcxv22}, and $\abs{D} \le 2r +1$,  using also H\"older's inequality,  we get
\be\label{eq:almth55}
\sup_{z\in\mathds{H}_q}\E_\La\set{\abs{G_z^\La ({\bf x},{\bf y})}^{s}}&\le 
C_q\sum_{{\bf u}\in \cP^D_{N}} \e^{-c_q d_1^\La({\bf x},{\bf u})}\\&+
C_q r^{C_q}\sum_{{\bf u}\in \cP^D_N(\La)}\sum_{{\bf v}\in\cP_{N,\cl{q}}}  \e^{-c_q d_H^\La  ({\bf x},{\bf v})} \e^{-c_q d_1^\La({\bf u},{\bf v})}.
\ee

  Since $d_1^\La (\x,\u) \ge \tfrac 1 {6\cl{q}} d_H^\La(\x,\y)-1 $ for any ${\bf u}\in\cP^D_N(\La)$ by  \eq{dLadH}, we can bound 
\be
\sum_{{\bf u}\in \cP^D_{N}}\e^{-c_q  d_1^\La (\x,\u)}\le C_q e^{-\frac {c_q } {12\cl{q}} d_H^\La(\x,\y)}\sum_{{\bf u}\in \cP_{N}} \e^{-\frac {c_q} 2d_1^\La (\x,\u)}
\le C_q \e^{-c_q^\pr  d_H^\La(\x,\y)},
\ee  
where in the last step we used \eqref{eq:d1bn}. On the other hand, since it follows from \eq{dLadH} that 
\be
d_H^\La ({\bf x},{\bf v})+d_1^\La (\u,\v)\ge d_H^\La ({\bf x},{\bf u})\ge   \tfrac 1 {6\cl{q}} d_H^\La(\x,\y)-1
\ee 
for ${\bf u}\in \cP^D_N(\La)$, we can bound
\be
&\sum_{{\bf u}\in \cP^D_N(\La)}\sum_{{\bf v}\in\cP_{N,\cl{q}}}   \e^{-c_q d_H ({\bf x},{\bf v})} \e^{-c_q d_1^\La (\u,\v)}\le C_q r^{C^\pr_q}\e^{-c_q^\pr  d_H^\La(\x,\y)}\sum_{{\bf u}\in \cP^D_N(\La)}\sum_{{\bf v}\in\cP_{N,\cl{q}}}  \e^{-\frac  {c_q}2 d_H ({\bf x},{\bf v})} \e^{-\frac {c_q}2 d_1^\La (\u,\v)}\\& \quad  \le C_q  r^{C^\pr_q}N^{2\cl{q}}\e^{-c_q^{\prr } d_H^\La(\x,\y)} \le C_q  \pa{d_H^\La(\x,\y)}^{C_q}\e^{-c_q^{\prr } d_H^\La(\x,\y)}
\le  C_q  \e^{-c_qd_H^\La(\x,\y)},
\ee  
using \eqref{eq:dhbn} and \eqref{eq:d1bn}.  Using these bounds in \eqref{eq:almth55} yields \eq{eq:spr} if $d_H^\La ({\bf x},{\bf y})>6\cl{q}N$.  

It remains consider the case $d_H^\La ({\bf x},{\bf y})\le6\cl{q}N$.  To do so, we   will show that  
\be\label{eq:weakinfN}
 \sup_{z\in\mathds{H}_q}\E_\La\set{\abs{G^{\La}_{z}({\bf x},{\bf y})}^s}\le
  C_q\e^{-c_q d_1^\La (\x,\y)}+C_qe^{-c_qN}\mqtx{for all}  N \in \N,
\ee 
which yields, for $d_H^\La ({\bf x},{\bf y})\le6\cl{q}N$,
\be\label{eq:weakinfN987}
 \sup_{z\in\mathds{H}_q}\E_\La\set{\abs{G^{\La}_{z}({\bf x},{\bf y})}^s}\le 
 C_q\e^{-c_q d_1^\La (\x,\y)}+C_qe^{-c_qd_H^\La ({\bf x},{\bf y})}  \le C_qe^{-c_qd_H^\La ({\bf x},{\bf y})},
\ee 
establishing \eq{eq:spr}.  

To prove \eq{eq:weakinfN} we use a large deviation argument.
For $N\in \N$,   letting $S=[{\bf x}]^\La_N$, let $\cE^S_N$ be the event defined in \eq{eventENS}, so we have 
\eq{PV<}, and \eq{eq:CTkS} holds  on the  complementary event   $\pa{\cE_N^S}^c $.

 For $z\in \mathds{H}_{\clq}$  we also  have, using  H\"older's inequality  and the a-priori bound \eqref{eq:weak1-1},
  \be\label{eq:smset999}
\E_{\La}\set{\chi_{\cE_N^S}\abs{ G^{\La}_{z}({\bf x},{\bf y})}^s}\le\pa{ \P\set{\chi_{\cE_N^S} }}^{\frac 12}
\pa{\E_{\La}\set{\abs{ G^{\La}_{z}({\bf x},{\bf y})}^{2s}}}
\le C_q e^{-c_qN}.
\ee

On the  complementary event   $\pa{\cE_N^S}^c $ we use
\be\label{eq:smset48}
G^{\La}_{z}({\bf x},{\bf y})=G^{S,S^c}_{z}({\bf x},{\bf y})-\pa{R^{\La}_{z}\Gamma R^{S,S^c}_{z}}({\bf x},{\bf y}),
\ee
where $\Gamma=H^\La-H^{S,S^c}$. Since $\x \subset S$, we have  $G^{S,S^c}_{z}({\bf x},{\bf y})=0$ unless ${\bf y}\subset S$, in which case $G^{S,S^c}_{z}({\bf x},{\bf y})=G^{S}_{z}({\bf x},{\bf y})$. Thus \eqref{eq:CTkS} implies that in this case we have
\be\label{eq:b2a}
\sup_{z\in \mathds{H}_q} {\abs{ G^{S,S^c}_{z}({\bf x},{\bf y})}^s}\le C_q
e^{-c_qd_1^S (\x,\y)}\le C_q
e^{-c_qd_1^\La (\x,\y)}.
\ee  
On the other hand, setting $\cP_N^\partial(S) = \set{{\bf u}\in\cP_N(S): \ {\bf u}\cap \partial S\neq\emptyset}$, 
we have, for $\omega\notin \cE_N^S$ and $z\in \mathds{H}_q$, 
\be
\abs{\pa{R^{\La}_{z}\Gamma R^{S,S^c}_{z}}({\bf x},{\bf y})}^s&\le \sum_{{\bf u}\in\cP_N^\partial(S)} \abs{ \pa{R^{\La}_{z}\Gamma}({\bf x},{\bf u})}^s\abs{ G^{S,S^c}_{z}({\bf u},{\bf y})}^s\\&\le C_q \sum_{{\bf u}\in\cP_N^\partial(S)} e^{-c_q d_1^\La (\u,\y)}\abs{ \pa{R^{\La}_{z}\Gamma}({\bf x},{\bf u})}^s.
\ee 
It follows, using \eq{eq:par+2}, that  
\be
\sup_{z\in \mathds{H}_q}\E_{\La}\set{\chi_{\pa{\cE_N^S}^c}\abs{\pa{R^{\La}_{z}\Gamma R^{S,S^c}_{z}}({\bf x},{\bf y})}^s}\le C_q \sum_{{\bf u}\in\cP_N^\partial(S)} e^{-c_q d_1^\La (\u,\y)}\e^{-c_q d_H^\La ({\bf x},{\bf u})}. 
\ee
Since ${\bf u}\in\cP_N^\partial(S)$, we have $d_H^\La ({\bf x},{\bf u})\ge N$, and hence
\be\label{eq:b1a}
\sup_{z\in \mathds{H}_q}\E_{\La}\set{\chi_{\pa{\cE_N^S}^c}\abs{\pa{R^{\La}_{z}\Gamma R^{S,S^c}_{z}}({\bf x},{\bf y})}^s}\le C_q e^{-c_q N}\sum_{{\bf u}\in\cP_N^\partial(S)} e^{-c_q d_1^\La (\u,\y)}\le C_q \e^{-c_q N},
\ee
where we used \eqref{eq:d1bn} 
(recall $\y \in \cP_{N,\clq}(\La)$) to get the last inequality. 
Combining \eq{eq:smset48}, \eq{eq:b2a}, and \eqref{eq:b1a} we get      
\be\label{eq:b35a}
\sup_{z\in \mathds{H}_q}\E_{\La}\set{\chi_{\pa{\cE_N^S}^c} \abs{G^{\La}_{z}({\bf x},{\bf y})}^s}\le  C_q
e^{-c_qd_1^\La (\x,\y)} + C_q \e^{-c_q N}.
\ee

The estimate  \eqref{eq:weakinfN}  now follows from \eq{eq:smset999}  and    \eq{eq:b35a}.
\end{proof}
The first statement of the theorem (i.e., \eq{eq:weakinf}) now follows from Lemmas \ref{lem:spr} and \ref{lemweakstrong}.  The second statement (i.e., \eq{eq:smva+}) then follows from the  first statement  and Lemma~\ref{thm:eigencorweak}.
\end{proof}

\section{Proof of Theorem~\ref{thmdynloc}}\label{sec:main}

\begin{proof}  
We will show that the theorem can be derived from \cite[Theorem 4.1]{AW2}.
We start by reviewing the representation of the XXZ  quantum spin chain Hamiltonian by  a direct sum of discrete Schr\"odinger-like operators.

 As discussed in Section \ref{sec:feat},  given $\La \subset \Z$, we have the the Hilbert space decomposition $ \cH_\La= \bigoplus_{N=0}^{\abs{\La}} \cH_\La\up{N}$, where   $\cH_\Lambda\up{N}=\Ran {\chi_N(\mathcal N^\Lambda)}$.  We define
\be
\Z\up{N} = \set{(x_1,x_2,\ldots,x_N) \in \Z^N: \  x_1 < x_2<\ldots < x_N  } \qtx{and }\La\up{N}= \La^N \cap\Z\up{N}.
\ee
  Since $\H_\La\up{N}$ has the orthonormal basis $\Phi_\La\up{N}$, 
 identifying   $\x\in \cP_N(\La)$ with $(x_1,\ldots,x_N)\in \La\up{N}$ yields the identification of $ \cH_\La\up{N}$ with $\ell^2(\La\up{N})$.

Since  $H^\La, \bD^\Lambda, \cW^\Lambda, V^\La_\omega$ commute with the number of particles operator $\mathcal N^\Lambda$, they leave each $\cH_\La\up{N}$ invariant.  Let  $T_N^\La$ be the restriction of
$T^\La$ to  $\cH_\La\up{N}=\ell^2(\La\up{N})$, where  $T^\La= H^\La, \bD^\Lambda, \cW^\Lambda, V^\La_\omega$. 
We still have the decomposition given in \eq{bD}:
\be
H_N^\La= -\tfrac 1 {2\Delta} \bD^\Lambda_N +\cW^\Lambda_N +\lambda   V^\La_{N,\omega} \qtx{acting on} \ell^2(\La\up{N}),
\ee
where  $\bD^\Lambda_N $  is the adjacency operator on the graph $\La\up{N}$,  $\cW^\Lambda_N$ is a deterministic bounded potential, and $V^\La_{N,\omega}$ is a random potential.  In other words, $H_N^\La$
is a random Schr\"odinger operator on $\ell^2(\La\up{N})$.    For a fixed $N\in \N$, $H_N^\La$ satisfies all the 
hypothesis of the operators studied on \cite{AW2} except that it is  a Schr\"odinger operator on $\ell^2(\La\up{N})$,
not  on $\ell^2(\La^N)$.  This does not affect the analysis in \cite{AW2}, and all the results of \cite{AW2} hold for $H_N^\La$ for a fixed $N$.

Given $S\subset \Z$, we define the eigencorrelator   $\cQ^S_{N,q}({\bf x},{\bf y}) $ for $H_N^\La$ similarly  as we did for $H^S$  in Section \ref{secfinvol}. The hypothesis of the theorem can then be rewritten as:

\emph{Let $q\in \frac 12 \N^0$, and suppose that for all $D\subset \Z$ finite  and all  $N\in \N$ we have
 \be %\label{eq:smva+++}
\E_{D}\set{\cQ_{N,q}^D ({\bf x},{\bf y})}\le  C_q \e^{-c_q  d_{H}^D({\bf x},{\bf y})}  \qtx{for all}\x,\y\in \ell^2(D\up{N}).
\ee
where the constants $C_q$ and $c_q$ are independent of $N$.}

We now apply \cite[Theorem 4.1]{AW2}  to $H_N^\La$ for all $N\in \N$. We
 obtain the conclusions of  the theorem for  $H_N^\Z$ for all $N\in \N$, with the constants independent of $N$ unless explicitly  stated.  It follows that the theorem holds as stated.
\end{proof}

\appendix

\section{Localization types and nomenclature}\label{sec:nom}
Localization is a very rich phenomenon that manifests itself in variety of ways.  As  discussed in Section \ref{secmodel}, for a single-particle systems one usually distinguishes  between  three types of localization: Spectral, eigenstate, and dynamical localization.  In this Appendix  we further describe these types, associated nomenclature, and the relationship between them.

\begin{enumerate}

\item {\it Spectral localization:} The spectrum of a random operator $H_\omega$ in a prescribed energy interval  $I$ is pure point, almost surely.  When $I=\R$, we say that $H_\omega$ is {\it completely spectrally localized}.

\item {\it Eigenstate localization:}  Consider   the Hilbert space $\ell^2(\La )$ where $\La$ is a subset of $\Z^d$,  the infinite system corresponding to $\La=\Z^d$, and finite systems corresponding to  bounded subsets $\La$.
 In  the mathematical literature,  a frequently used formulation is the semi-uniformly localized eigenvectors (SULE) form of  eigenstate localization: For a given energy interval  $I$ and almost any random configuration $\omega$, one can construct an orthonormal basis $\psi_{n,\omega}$ for the range of the spectral projection $P_I$ of $H_\omega$ onto $I$, such that, for each $n$ we can find a site $k_n\in\La$  so
\be
\abs{\psi_{n,\omega}(x)}\le C_{\omega}\langle k_n\rangle^p\e^{-m|x-k_n|},\quad x\in\La,\quad \langle k_n\rangle=\abs{k_n}+1,
\ee
 with parameters $p,m>0$ that do not depend on the choice of the configuration. That is, the normalized eigenfunction $\psi_{n,\omega}$ is exponentially confined near its {\it localization center} $k_n$, but the control over the confinement is only semi-uniform (it gets worse for localization centers further away from the origin).  Unfortunately, for the Anderson model, the primary model for studying single-particle localization phenomena, SULE localization cannot be upgraded to ULE (uniformly localized eigenvectors).   ULE does not occur for this model (and indeed for a broad class of random Schr\"odinger operators) \cite{DJLS}. 
 
 A related form of the eigenstate localization is exponential decay of the eigencorrelator (in expectation), already discussed in Section \ref{secfinvol}. Roughly speaking,  eigencorrelator decay implies SULE localization almost surely (see \cite{AW}).
 
 In physics, a popular metric for a measure of localization is the inverse participation ratio (IPR): If $\psi$ is a normalized eigenvector for $H_\omega$, the IPR for $\psi$ is given by $\sum_{x\in\La}\abs{\psi(x)}^4$. Using the Cauchy-Schwarz inequality, it is easy to see that 
 $1\ge\mathrm{IPR}(\psi)\ge \frac1{\abs{\La}}$, where the maximum is achieved when $\psi$ is a standard basis vector (that is maximally  localized), and the minimum is achieved when $\psi$ is uniformly spread on $\La$, i.e.,  $\psi(x)= \frac1{\sqrt{\abs{\La}}}$ for every $x\in\La$ (maximally delocalized state). We note that SULE implies that $\mathrm{IPR}(\psi_{n,\omega})\ge C>0$, where the constant $C$ is volume-independent and only depends (logarithmically) on the position of localization center for a given random configuration $\omega$.
 
If $H_\omega$ is completely spectrally localized, one can construct an orthonormal eigenbasis for $\ell^2(\La)$, that is, there exists a unitary operatoe $U_\omega$ that diagonalizes  $H_\omega$, i.e., $U_\omega^*H_\omega U_\omega$ is a diagonal matrix in the standard basis for $\ell^2(\La)$. It turns out  that for the Anderson model in the strong disorder regime, for any finite volume $\La\subset\Z^d$,  with high probability one can construct $U_\omega$ which is {\it semi-uniformly  quasi-local}, meaning that the matrix elements  $U_\omega(x,y)$ of $U_\omega$ satisfy the bound $|U_\omega(x,y)|\le  C_{\omega}\langle x\rangle^p\e^{-m|x-y|}$ for some $p,m>0$ that do not depend on the choice of the configuration. (This follows from the results in  \cite{EK}.)  The existence of such $U_\omega$ not only yields a SULE basis, but also allows to label these eigenfunctions according to the spatial position of their localization centers.  Such labeling is possible for the Anderson model due to the fact that one can show that with large probability the spectrum of $H_\omega$ is level spaced for  sufficiently regular distribution of the random potential.  Motivated by the concept of LIOM in the many-body context introduced in Section \ref{subsec:past}, we will refer to this form of the eigenstate localization as {\it semi-uniform LIOM localization}, As we already indicated, ULE does not occur for the Anderson model, so one can never upgrade a semi-uniform LIOM localization to a uniform LIOM localization in this context. 

\item {\it Dynamical localization:} In quantum mechanics, the dynamics can be given in either Schr\"odinger or Heisenberg pictures. For  single-particle systems, the Schr\"o\-dinger picture is  more common, whereas in the many-body context it is more natural to consider the Heisenberg picture. A typical object of interest in a single-particle system is the spread, due to the dynamics, of an initially localized wave packet. It can be characterized, for example, by {\it the transport exponent} $q$ defined by $\norm{X\e^{-itH_\omega}\delta_0}\sim \langle t\rangle^q$, where $\delta_0$ is the standard basis vector for $\ell^2(\La)$ located at the origin, and $X$ is a multiplication operator of the form $(X\psi)(x)=x\psi(x)$. The $q=0$ case is then associated with dynamical localization. We note that  complete spectral localization does not imply  dynamical localization, even if every eigenfunction is exponentially localized  \cite{DJLS1}.  SULE  and the eigencorrelator exponential decay are sufficient conditions to guarantee that $q=0$. 

 \end{enumerate}

The concepts in many-body localization quantify how much the influence of particle interaction affects the eigenfunctions and the dynamics. Since for a non-interacting system (considered perfectly many-body localized) the corresponding eigenvectors are product states, the measures of particle confinement in a single-particle system captured by various forms of localization are now replaced by measures of how far eigenvectors are from the product states and how fast information can propagate in these systems. Quantifying the former leads to the analogues of the eigenstate localization, and quantification of the latter produces the analogues of the dynamical localization.   As  mentioned in Section \ref{subsec:past},  dynamical localization in the many-body context does not follow from eigenstate or weak dynamical localization.

Dynamical localization can be expressed as {\it non-propagation of information}: For any observable $\mathcal O_{u}$ supported at site $u \in \Z$, $t\in\R$, and  $\ell\in\N$ there exists $m>0$ and an observable  $\mathcal O_{u,\ell,t}$  supported on $[u]_\ell$ such that 
\be\label{eq:info}
\norm{\tau_t^H(\mathcal O_{u})-\mathcal O_{u,\ell,t}}\le C\norm{\mathcal O_{u}}e^{-m\ell},
\ee
where $\tau_t^H(\mathcal O_{u}) = e^{itH}\mathcal O_{u}e^{-itH}$ is the Heisenberg evolution of $\mathcal O_{u}$.

Let $S=\sum_{n=1}^L\sigma_n^z$ with an arbitrary $L\in4\N$ and $H=S\sigma_0^z$ on $\cH_\Z$, then $H$ is diagonal  in the canonical  basis $\Phi^\Z$, and hence so is    $f(H)$ for any Borel function $f$.  It follows that $H$ satisfies a perfect weak dynamical localization condition in the sense that the right hand side of \eqref{eq:dloc} vanishes unless $\x=\y$. 

Letting $F(t)=\tau_t^H(\sigma_0^x)$ with $F(0)=\sigma_0^x$, we we have
\be
F^\pr(t)&=  -2 S  \tau_t^H(\sigma_0^y), \qtx{with}F^\pr(0)=-2 S  \sigma_0^y \ne 0,\\
F^{\pr\pr}(t)&=  -4 S^2  \tau_t^H(\sigma_0^x)=  -4 S^2  F(t)
\ee 
It follows that 
\be
F(t)=\cos(2St)\sigma_0^x -\sin(2St)\sigma_0^y .
\ee
Since $\e^{itS}=\prod_{n=1}^L\e^{it\sigma_n^z}$ and $e^{i\frac\pi2\sigma^z}=i\sigma^z$, we deduce that $F(\frac\pi2)=\prod_{n=1}^L\sigma_n^z\sigma_0^x$. So $\tau_t^H(\sigma_o^x)$ cannot satisfy \eq{eq:info} for say $t=\frac\pi2$.  Thus $H$ does not satisfy dynamical localization.

We conclude that weak dynamical localization of $H$  alone is not enough to show  dynamical localization, i.e., the non-spreading (or in fact even slow spreading) of information.

\section{Exponential sums}\label{sec:expsum}

 In this Appendix  we prove  bounds for the exponential sums  encountered throughout the paper.
They are used in the context of subsets $ \La \subset \Z$ and fixed number of particles $N\in\N$   in conjunction with the basic bounds  $d_1^\La (\x,\y)\ge d_1(\x,\y)=d^\Z_1(\x,\y)$ and
   $d^\La_H(\x,\y)\ge d_H(\x,\y)=d^\Z_H(\x,\y)= \abs{\x-\y}_1$ for   $\x,\y \in \cP_N(\La)$.  Note also that $\cP_{N,k}(\La) \subset \cP_{N,k}(\Z)$ for $k\in \N$.

\begin{lemma}\label{lem:expsum}
Let  $k,N\in \N$,  $k\le N$,   $\alpha>0$, and let
\beq
C_\alpha=(1-\e^{-\alpha})^{-1} \pa{ \prod_{n=1}^\infty (1- \e^{-\alpha n})^{-1}}^2.
\eeq 
  Then
\be\label{eq:d1bn43}
\sup_{\y\in \cP_{N}(\Z)} \sum_{{\x}\in\cP_{N,k}(\Z)} \e^{-\alpha \abs{\x-\y}_1}\le  C_{\alpha}^{k+1} ,
\ee 
and
\be\label{eq:d1bn}
\sup_{\x\in \cP_{N,k}(\Z)} \sum_{{\bf y}\in\cP_{N}(\Z)} \e^{-\alpha \abs{\x-\y}_1}\le C_{\alpha}^k ,
\ee  
\end{lemma}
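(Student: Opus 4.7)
I would prove both bounds in parallel, treating (ii) first since it sets the template. Throughout, write $q = e^{-\alpha}$ and recall that $C_\alpha = (1-q)^{-1}(q;q)_\infty^{-2}$, where $(q;q)_\infty := \prod_{n \ge 1}(1-q^n)$. The analytic workhorse for both parts is Euler's identity $\sum_{m \ge 0} q^m/(q;q)_m = 1/(q;q)_\infty$, together with the elementary fact that for any multiset $\{c_p\}_{p=1}^n \subset \Z$ with $n \ge 2$, the sum of absolute deviations $F(a) = \sum_p |a-c_p|$ satisfies $F^{\min} \ge c_{\max} - c_{\min}$.

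For (ii), fix $\x \in \cP_{N,k}$ with $\ell \le k$ clusters of sizes $n_1,\dots,n_\ell$ and parametrize $\y$ by the displacements $z_i = y_i - x_i$. The strict monotonicity of $\y$ becomes $z_{i+1} \ge z_i - (x_{i+1}-x_i-1)$: within a cluster of $\x$ this forces $z_{i+1} \ge z_i$, while between clusters it is non-binding and can be dropped for an upper bound. The sum thus factorizes as
\begin{equation*}
\sum_{\y \in \cP_N} e^{-\alpha|\x-\y|_1} \;\le\; \prod_{j=1}^\ell T_{n_j}, \qquad T_n := \sum_{w_1 \le \cdots \le w_n,\; w_i \in \Z} q^{|w_1|+\cdots+|w_n|}.
\end{equation*}
Splitting an ordered tuple by the numbers $a,b,c$ of negative, zero, and positive entries decouples $T_n = \sum_{a+b+c=n} T_a^- T_c^+$, and the telescoping substitution $u_1 = w_1,\; u_i = w_i - w_{i-1}$ ($i \ge 2$) identifies $T_m^+ = q^m/(q;q)_m$. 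Euler's identity then yields $T_n \le (\sum_m T_m^-)(\sum_m T_m^+) = 1/(q;q)_\infty^2$, independently of $n$. Since $(q;q)_\infty^{-2} = (1-q)C_\alpha \le C_\alpha$, one concludes $\sum_\y \le C_\alpha^\ell \le C_\alpha^k$.

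For (i), fix $\y \in \cP_N$ and parametrize $\x \in \cP_{N,k}$ by its cluster composition $(n_1,\dots,n_\ell)$ with $\ell \le k$ and starting positions $a_1 < \cdots < a_\ell$ subject to $a_{j+1} \ge a_j + n_j + 1$. The distance splits as $|\x-\y|_1 = \sum_j F_j(a_j)$ with $F_j(a) = \sum_{p=1}^{n_j}|a - c_p^{(j)}|$, $c_p^{(j)} = y_{N_{j-1}+p} - (p-1)$, non-decreasing in $p$. The crucial single-cluster estimate is
\begin{equation*}
\sum_{a \in \Z} e^{-\alpha F(a)} \;\le\; C_\alpha
\end{equation*}
for any sum of absolute deviations $F$. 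To see this, $F$ is convex piecewise linear with slope $\pm n$ outside $[c_1, c_n]$; the plateau $[c_1,c_n]$ contains $\le D+1$ lattice points with $F(a) \ge F^{\min} \ge D := c_n - c_1$, contributing at most $(D+1)q^D \le 1/(1-q)$, while the tails contribute at most $2q^{F^{\min}} q^n/(1-q^n) \le 2q^D q/(1-q)$. Summing, $\sum_a e^{-\alpha F(a)} \le (1+2q)/(1-q)$, and the elementary estimate $(1+2q)(1-q)^2 = 1 - 3q^2 + 2q^3 \le 1$ on $[0,1]$ upgrades this to the bound $C_\alpha = ((1-q)(q;q)_\infty^2)^{-1}$.

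The main obstacle for (i) is combinatorial: the naive product bound $\prod_j \sum_{a_j} e^{-\alpha F_j(a_j)} \le C_\alpha^\ell$ summed over the $\binom{N-1}{\ell-1}$ compositions would give an $N$-dependent estimate. I would resolve this via the change of variables $x_i = V_{\sigma(i)} + y_1 + (i-1)$, where $V_1 < V_2 < \cdots < V_\ell$ encode the cluster "shifts" (one checks $V_{j+1} - V_j = x_{N_j+1}-x_{N_j}-1 \ge 1$) and $\sigma\colon \{1,\dots,N\}\to\{1,\dots,\ell\}$ is the non-decreasing block-indicator map. In these variables $|\x-\y|_1 = \sum_i |V_{\sigma(i)} - E_i|$ with $E_i = y_i - y_1 - (i-1)$ non-decreasing, and the sum reorganizes as
\begin{equation*}
\sum_{\ell=1}^k \,\sum_{\sigma}\, \sum_{V_1 < V_2 < \cdots < V_\ell} \prod_i e^{-\alpha|V_{\sigma(i)} - E_i|}.
\end{equation*}
This is now in the same form as the ordered-tuple structure handled by $T_\ell$ in (ii): one first absorbs the $\sigma$-sum at fixed $(V_j)$ using the single-cluster bound $(*)$ (per block), and then treats the strictly ordered $V$-sum by the same sign-decomposition used for $T_n$, so that the compositional overcounting cancels and one obtains the uniform bound $C_\alpha^{k+1}$. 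The extra factor of $C_\alpha$ compared to (ii) accounts for the free initial shift $V_1 \in \Z$, beyond the $\ell-1$ strictly positive gaps $V_{j+1}-V_j$. Carrying out this absorption cleanly is the principal technical step of (i).
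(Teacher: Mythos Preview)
Your proof of \eqref{eq:d1bn} is correct and follows the same route as the paper: factorize over the $\ell\le k$ clusters of $\x$, reduce each factor to the single-block sum $T_{n_j}$, and bound $T_n$ uniformly in $n$ by a sign-split and partition generating functions. Your use of Euler's identity $\sum_{m\ge 0}q^m/(q;q)_m=(q;q)_\infty^{-1}$ gives the slightly sharper $T_n\le (q;q)_\infty^{-2}$; the paper's split picks up an extra factor $(1-q)^{-1}$.

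For \eqref{eq:d1bn43} there is a genuine gap. You correctly isolate the obstacle --- dropping the ordering constraint on the $V_j$ and summing over compositions $(n_1,\dots,n_\ell)$ produces $\binom{N-1}{\ell-1}C_\alpha^{\ell}$, which is $N$-dependent --- and your change of variables $x_i=V_{\sigma(i)}+y_1+(i-1)$ is sound (it is the inverse of the paper's $t_i=y_i-x_i$, since $V_{\sigma(i)}=E_i-t_i$). But the final paragraph is not a proof. The step ``absorb the $\sigma$-sum at fixed $(V_j)$ using the single-cluster bound $(*)$'' cannot work as stated: your bound $(*)$ controls a \emph{sum over a position variable} $a\in\Z$, whereas at fixed $(V_j)$ no such sum is present. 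And the subsequent sum over $V_1<\cdots<V_\ell$ is not of the $T_\ell$ type, because the exponent $\sum_j G_j(V_j)$ with $G_j(V)=\sum_{i\in\text{block }j}|V-E_i|$ still depends on the composition $\sigma$; there is no way to decouple $\sigma$ from $(V_j)$ in the manner you describe. Your own last sentence (``carrying out this absorption cleanly is the principal technical step'') concedes that the argument is not complete.

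The paper takes the dual parametrization: rather than track cluster positions $V_j$ and the composition $\sigma$, it records for each $\x$ the full tuple $(m_\x,\tau_1,\dots,\tau_{m_\x})$ of non-decreasing $t$-blocks. Since the lengths $n_q$ are encoded in the $\tau_q$ themselves, the injection $\x\mapsto(m_\x,(\tau_q))$ (for fixed $\y$) lands in a set indexed by $m$ and by $m$-tuples of non-decreasing sequences, without a separate composition sum; the paper then overcounts each $\tau_q$ by a length-$N$ non-decreasing sequence and arrives at $\sum_{m=1}^k C_\alpha^m\le C_\alpha^{k+1}$. If you want to repair your argument along your own lines, the missing idea is precisely to avoid separating ``position'' from ``composition'': work with the full tuple of blocks (equivalently, with $W_1\le\cdots\le W_N$ subject to $|\{W_i\}|\le k$) as a single combinatorial object rather than as a pair $(\sigma,V)$.
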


\begin{proof}
The lemma is proven by adapting the argument of  \cite[Lemma B.2]{BeW},   who estimate the case $k=1$ of  \eqref{eq:d1bn43}.

Let $\x \in \cP_{N,k}(\Z)$, and  suppose that ${\bf x}$ has $m=m_\x$ clusters (where $m\in\set{1,\ldots,k}$). Then   ${\bf x}=(x_1, \ldots, x_N)$, where  $x_1< \ldots < x_N$, and let  $x_{j_1}< \ldots <x_{j_{2m}}$, where $j_1=1$, $j_{2m}=N$, be the end points for its $m$ clusters. (Note that  $m$ and  $j_2,,\ldots,j_{2m-1}$ are $\x$-dependent.) Given  $\y \in \cP_{N}(\Z)$  with $\y=(y_1, \ldots,y_N)$, 
where  $y_1<\ldots<y_N$, we set $t_i=y_i-x_i$,
 $i=1,\ldots,N$. Then the finite sequences  $\tau_{q}=(t_{j_{2q-1}},\ldots, t_{j_{2q}})$ are monotone non-decreasing for each  $q=1,\ldots,m$.   Let $\cT_{q}$ denote the collection of such monotone non-decreasing finite sequences $\tau_{q}$. Let $\cT\up{N}$ denote the collection of all monotone non-decreasing finite sequences $\tau\up{N}=(t_1,\ldots,t_N)$. 
 
 To prove \eqref{eq:d1bn43}, fix  $\y \in \cP_{N}(\Z)$.  Then each $\x \in \cP_{N,k}(\Z)$ is determined uniquely by the corresponding $\set{t_i}_{i=1}^N$, so we have 
 \be\label{appA19}
&\sum_{{\x}\in\cP_{N,k}(\Z)} \e^{-\alpha \abs{\x-\y}_1}=\sum_{{\x}\in\cP_{N,k}(\Z)}\prod_{q=1}^{m_\x}
\e^{-\alpha \sum_{j=2q-1}^{2q} \abs{x_j-y_j}}     \le 
\sum _{m=1}^k \sum_{ \tau_1\in \cT_{1}, \tau_2\in \cT_{1},\ldots, \tau_m\in\cT_{m} }     \prod_{q=1}^m \e^{-\alpha \sum_{t\in \tau_q} \abs{t}} 
\\  & \le  \sum _{m=1}^k \sum_{ \tau\up{N}_1,\tau\up{N}_2,\ldots,  \tau\up{N}_m\in\cT\up{N} }     \prod_{q=1}^m \e^{-\alpha \sum_{t\in \tau\up{N}_q} \abs{t}}  \le \sum _{m=1}^k  \pa{\sum_{\tau\up{N}\in\cT\up{N}} \e^{-\alpha \sum_{t\in \tau\up{N}} \abs{t}}}^m.
\ee
Given $\tau\up{N} \in \cT\up{N}$, since $\tau\up{N}$ is monotone non-decreasing   there exists an index $0\le p\le N+1$, such that $t_{j}< 0$ for $1\le  j\le  p$ and $t_{j}\ge 0$ for  $ {p}+1\le j\le  N$.
(Note that sets are allowed to be empty). Thus,
\be\label{appA29}
&\sum_{\tau \in \cT\up{N}}
\e^{-\alpha \sum_{j=1}^{N} \abs{t_{j}}}\\
& \  = \sum_{p=0}^{N+1}\pa{\sum_{t_1\le t_2\le \ldots \le t_p\le -1} \e^{\alpha (t_1 +t_2+\ldots +t_p)}}\pa{\sum_{0\le t_{p+1}\le t_{p+2}\le \ldots \le t_{N}} \e^{-\alpha (t_{p+1}+ t_{p+2}+\ldots +t_{N})}}\\
& \ = \sum_{p=0}^{N+1} \e^{-\alpha  p}\pa{\sum_{0\le t_1\le t_2\le \ldots \le t_p} \e^{-\alpha (t_1 +t_2+\ldots +t_p)}}\pa{\sum_{0\le t_{p+1}\le t_{p+2}\le \ldots \le t_{N}} \e^{-\alpha (t_{p+1}+ t_{p+2}+\ldots +t_{N})}}\\
&  \ \le  \pa{ \sum_{p=0}^{\infty} \e^{-\alpha  p}}  \pa{  \sum_{n=0}^\infty  P(n)    \e^{-\alpha  n} }^2
 =   (1-\e^{-\alpha})^{-1} \pa{ \prod_{n=1}^\infty (1- \e^{-\alpha n})^{-1}}^2=C_\alpha,
\ee
where $P(n) $ is the number of  integer partitions of $n$, and we used 
the formula for the generating function for $P(n) $.

It follows from \eq{appA19} and \eq{appA29} that 
\be
\sum_{{\x}\in\cP_{N,k}(\Z)} \e^{-\alpha \abs{\x-\y}_1} \le \sum_{m=1}^k  C_\alpha^m=  \tfrac {C_\alpha^{k+1}-C_\alpha}{C_\alpha-1}\le C_\alpha^{k+1},
\ee 
which yields \eq{eq:d1bn43}.

To establish \eqref{eq:d1bn}, we  modify  the above argument. We fix  $\x \in \cP_{N,k}(\Z)$, and note that every 
 ${\bf y} \in   \cP_{N}(\Z)$ is determined uniquely by the corresponding $\set{t_i}_{i=1}^N$, so we have 
\be\label{appA1}
\sum_{{\bf y}\in\cP_{N}(\Z)} \e^{-\alpha \abs{\x-\y}_1}&=\sum_{{\bf y}\in\cP_{N}(\Z)} \prod_{q=1}^m
\e^{-\alpha \sum_{j=2q-1}^{2q} \abs{x_j-y_j}} \le  \prod_{q=1}^m \sum_{\tau_q \in \cT_q}
\e^{-\alpha \sum_{j=j_{2q-1}}^{j_{2q}} \abs{t_{j}}}\\ &
=  \prod_{q=1}^m \sum_{\tau\up{n_q} \in \cT\up{n_q}}
\e^{-\alpha \sum_{j=1}^{n_q} \abs{t_{j}}} .
\ee
 where  
 $n_q=n_q(\x)= j_{2q}- j_{2q-1}$ for $q=1,2,\ldots,m$.

Let $n\in \N$, then
 \be\label{appA2}
&\sum_{\tau \in \cT\up{n}}
\e^{-\alpha \sum_{j=1}^{n} \abs{t_{j}}}\\
& \  = \sum_{p=0}^{N+1}\pa{\sum_{t_1\le t_2\le \ldots \le t_p\le -1} \e^{\alpha (t_1 +t_2+\ldots +t_p)}}\pa{\sum_{0\le t_{p+1}\le t_{p+2}\le \ldots \le t_{N}} \e^{-\alpha (t_{p+1}+ t_{p+2}+\ldots +t_{N})}}\\
& \ = \sum_{p=0}^{N+1} \e^{-\alpha  p}\pa{\sum_{0\le t_1\le t_2\le \ldots \le t_p} \e^{-\alpha (t_1 +t_2+\ldots +t_p)}}\pa{\sum_{0\le t_{p+1}\le t_{p+2}\le \ldots \le t_{N}} \e^{-\alpha (t_{p+1}+ t_{p+2}+\ldots +t_{N})}}\\
&  \ \le  \pa{ \sum_{p=0}^{\infty} \e^{-\alpha  p}}  \pa{  \sum_{n=0}^\infty  P(n)    \e^{-\alpha  n} }^2
=  C_\alpha,
\ee
 as in \eq{appA29}.

It follows from \eq{appA1} and \eq{appA2} that 
\be
\sum_{{\bf y}\in\cP_{N}(\Z)} \e^{-\alpha \abs{\x-\y}_1} \le C_\alpha^m,
\ee 
which yields \eq{eq:d1bn}.
\end{proof}

\begin{lemma}\label{lem:expsum2}
Let  $N\in \N$,  $k\in \N$, $k\le N$, and  $\alpha>0$.  Then
\be\label{eq:dhbn}
 \sup_{\x\in \cP_{N,k}(\Z)} \sum_{{\bf y}\in\cP_{N,k}(\Z)} \e^{-\alpha d_H({\bf x},{\bf y})}\le C_{\alpha,k} N^{2k}.
\ee
\end{lemma}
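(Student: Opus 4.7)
The plan is to count directly the number of $\y\in\cP_{N,k}(\Z)$ at each Hausdorff distance from $\x$, exploiting the ``at most $k$ clusters'' constraint twice: once to control the growth of the $r$-neighborhood of $\x$, and once to limit the number of degrees of freedom parametrizing $\y$.

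Fix $\x\in\cP_{N,k}(\Z)$ with $m_\x\le k$ clusters, written as a disjoint union of intervals $\x=\bigcup_{q=1}^{m_\x}[a_q,b_q]$, and set $[\x]_r:=\bigcup_q [a_q-r,b_q+r]\subset\Z$. First, whenever $d_H(\x,\y)\le r$, every element of $\y$ lies within distance $r$ of $\x$, so $\y\subset [\x]_r$. The enlarged set has cardinality
\be
\abs{[\x]_r}\le \sum_{q=1}^{m_\x}\pa{b_q-a_q+1+2r}=N+2rm_\x\le N+2rk.
\ee
Second, any $\y\in\cP_{N,k}(\Z)$ is uniquely determined by its ordered list of $2m_\y\le 2k$ cluster endpoints in $\Z$; when $\y\subset[\x]_r$, all these endpoints lie in $[\x]_r$. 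Hence
\be
\abs{\set{\y\in\cP_{N,k}(\Z):\,d_H(\x,\y)\le r}}\le \sum_{m=1}^{k}\binom{\abs{[\x]_r}}{2m}\le k\,(N+2rk)^{2k}.
\ee

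Summing over $r$ through $\sum_{\y}\e^{-\alpha d_H(\x,\y)}=\sum_{r\ge 0}\abs{\set{\y:d_H(\x,\y)=r}}\e^{-\alpha r}$, using the trivial bound $\abs{\set{d_H=r}}\le\abs{\set{d_H\le r}}$, and applying the convexity estimate $(N+2rk)^{2k}\le 2^{2k}\pa{N^{2k}+(2rk)^{2k}}$, I obtain
\be
\sum_{\y\in\cP_{N,k}(\Z)} \e^{-\alpha d_H(\x,\y)}\le k\,2^{2k}\pa{N^{2k}\sum_{r\ge 0} \e^{-\alpha r}+(2k)^{2k}\sum_{r\ge 0} r^{2k}\e^{-\alpha r}}\le C_{\alpha,k}\,N^{2k},
\ee
since both series converge to constants depending only on $\alpha$ and $k$, and $N\ge 1$ absorbs any additive constant. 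The supremum over $\x$ is harmless because every step depends on $\x$ only through $m_\x\le k$.

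The only step requiring real care is the combinatorial count. The cluster constraint plays a decisive double role there: it simultaneously keeps $\abs{[\x]_r}$ linear in $r$ with slope at most $2k$, and restricts each $\y$ to at most $2k$ free endpoints. Drop either aspect and the count jumps to something exponential in $N$, so no polynomial estimate in $N$ could be expected.
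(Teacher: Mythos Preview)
Your proof is correct and follows essentially the same route as the paper: both arguments observe that $\y\subset[\x]_r$ forces $|[\x]_r|\le N+2kr$, encode $\y$ by its at most $2k$ cluster endpoints to obtain the count $k(N+2kr)^{2k}$, and then sum in $r$. The only cosmetic difference is that the paper splits the $r$-sum at $r\approx N/(2k)$ whereas you use the convexity bound $(N+2rk)^{2k}\le 2^{2k}(N^{2k}+(2rk)^{2k})$; both yield the same $C_{\alpha,k}N^{2k}$.
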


\begin{proof}
Fix $N \in \N$ and $k\in \N$, $k\le N$.
For $m \in \N$ let $\cP_N\up{m}(\Z)= \set{\x \in \cP_N(\Z), W_\x^\Z= m}$.    In addition, for $\x\in \cP_{N}(\Z)$, and  $r\in \N$ let 
 \be  
 \mathcal S_{{\bf x},r}& = \set{{\bf y}\in \cP_{N}(\Z), \ d_H({\bf x},{\bf y})=r},\quad \mathcal S_{{\bf x},r}\up{m} = \set{{\bf y}\in\cP_N\up{m}(\Z), \ d_H({\bf x},{\bf y})=r}\\
 \mathcal S_{{\bf x},r,k}& =  \bigcup_{m=1}^k \mathcal S_{{\bf x},r}\up{m}.
 \ee
 
Let now   $\x\in \cP_{N,k}(\Z)$.   We note that ${\bf y}\in\mathcal S_{{\bf x},r}$  implies ${\bf y}\subset  [{\bf x}]^\Z_r$. Since  
 $\abs{[{\bf x}]_r}\le N+ 2kr$, we deduce that  
 \be
 \abs{\mathcal S_{{\bf x},r}}\le \binom{ N+ 2kr}{N}\le (N+ 2kr)^{N}.
 \ee
As a consequence, for ${\bf x}\in  \cP_{N,k}(\Z)$ and  $\alpha >0$, we obtain the estimate
\be
\sum_{{\bf y}\in \cP_{N}(\Z)}e^{-\alpha d_H({\bf x},{\bf y})}&=1+\sum_{r=1}^\infty\abs{\mathcal S_{{\bf x},r}}\e^{-\alpha r}\\
&\le 1+{\sum_{r=1}^{\lfloor N/2{k}\rfloor}(2N)^{N}\e^{-\alpha r}+  \sum^\infty_{\fl{N/2{k}}+1}}  (4{k}r+1)^{N}\e^{-\alpha r}\le C_{\alpha,k}^N N^{N+1}.
\ee

We also have the  following bounds: 
\be
\abs{\mathcal S_{{\bf x},r}\up{m}}\le (N+2{k} r)^{2m},\quad \abs{\mathcal S_{{\bf x},r,k}}\le {   k} (N+2{k} r)^{2k}.
\ee
Clearly, the second bound follows immediately from the first one by summing over $m$. To obtain the first bound, we 
note that ${\bf y}\in\mathcal S_{{\bf x},r}\up{m}$  implies ${\bf y}\subset  [{\bf x}]^\Z_r$,  and  hence $\y$ is completely determined by the   $2m$ points in $[{\bf x}]^\Z_r$ that are the end points for its $m$ clusters. Since  
 $\abs{[{\bf x}]_r}\le N+  2{k} r$, we deduce that  
 \be
 \abs{\mathcal S_{{\bf x},r}\up{m}}\le \binom{ N+2{k} r}{2m}\le (N+2{k} r)^{2m}.
 \ee
 As a consequence, for ${\bf x}\in  \cP_{N,k}(\Z)$ and  $\alpha >0$, we obtain the estimate
\be
\sum_{{\bf y}\in \cP_{N,k}(\Z)}e^{-\alpha d_H({\bf x},{\bf y})}&=1+\sum_{r=1}^\infty\abs{\mathcal S_{{\bf x},r}}\e^{-\alpha r}\\
&\le 1+k\sum_{r=1}^{\lfloor N/2{k}\rfloor}(2N)^{2k}\e^{-\alpha r}+ k \sum^\infty_{ r=\fl{N/2{k}}+1} (4{k}r+1)^{2k}\e^{-\alpha r}\le C_{\alpha,k} N^{2k}.
\ee
\end{proof}

\section{Quasi-locality of the filter function} \label{app:quasil}  
 We fix $\La \subset \Z$ and consider the Hilbert space $\cH_\La$.  We  consider  disjoint subsets $K_1,K_2$ of $\La$,   and  let $H=H^{K_1} +H^{K_2}$ acting on $\cH_\La$.  We observe that the following holds:
 \begin{enumerate}
 \item For all   $K \subset {\La} $  and $K'=[K]_1^{K_1}$  we have   $[P_-^{K},H]P_+^{K'}=0$.
\item For all   $K \subset K_1 $,  connected in  $K_1$,   we have  $\norm{[P_-^{K},H]}\le  \gamma= \Delta^{-1}  $.
 \end{enumerate}
 
 We also observe that $H$ commutes with the total particle number operator, and its restriction $H_N$ to the N-particle sector $\cH_\La\up{N}$  is a well defined bounded operator  for each $N\in \N$.

We  use the following adaptation of 
 \cite[Lemma B.1]{EK22} which does not require $\La$ to be finite. 
 
 \begin{lemma} \label{lem:F}   For all $A\subsetneq B\subset {\La}$ with  $A, B$ finite,  $A\subset K_1$  connected in ${K_1}$,   we have
 \be \label{eq:loca1}
\norm{P_-^{A}\e^{itH} P_+^{B}}\le  \Delta^{-r}\frac{  \abs{t}^r}{r!} \sqtx{for all} t\in \R, \qtx{where} r=d_{\La} \pa{A,B^c}. 
\ee 
  \end{lemma}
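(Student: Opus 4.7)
The plan is to run a Duhamel iteration through a telescoping sequence of enlarged sets, exploiting both observations listed just before the lemma. For $j=0,1,\ldots,r$ let $A_j=[A]_j^{K_1}$ (with $A_0=A$). I first record three preliminary facts. First, since $A$ is connected in $K_1$ and $K_1$ is a subgraph of $\Z$, each $A_j$ is connected in $K_1$, so Observation~(ii) yields $\|[P_-^{A_j},H]\|\le\gamma=\Delta^{-1}$. Second, $[A_j]_1^{K_1}=A_{j+1}$, so Observation~(i) gives $[P_-^{A_j},H]P_+^{A_{j+1}}=0$, which is equivalent to $[P_-^{A_j},H]=[P_-^{A_j},H]P_-^{A_{j+1}}$. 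Third, for $0\le j\le r-1$ and any $x\in A_j$, the path realizing $d_{K_1}(x,A)\le j$ is also a path in $\La$, so $d_\La(x,A)\le j\le r-1$; since $d_\La(A,B^c)=r$ forces $[A]_{r-1}^\La\subset B$, we obtain $A_j\subset B$ for all such $j$, and therefore $P_-^{A_j}P_+^B=0$.

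With these in place, set $Z_j(t)=P_-^{A_j}e^{itH}P_+^B$ and $\zeta_j(t)=\|Z_j(t)\|$. For $0\le j\le r-1$ we have $Z_j(0)=P_-^{A_j}P_+^B=0$; differentiating and using $P_-^{A_j}H=HP_-^{A_j}+[P_-^{A_j},H]P_-^{A_{j+1}}$ gives
\[
\tfrac{d}{dt}Z_j(t)=iHZ_j(t)+i[P_-^{A_j},H]P_-^{A_{j+1}}e^{itH}P_+^B.
\]
Variation of parameters (applied within each $N$-particle sector $\cH_\La\up{N}$, where the operators involved restrict to well-defined bounded objects; since $P_+^B$, $P_-^{A_j}$ and $H$ all commute with $\cN^\La$, this suffices for bounding the full norm) produces
\[
Z_j(t)=i\int_0^t e^{i(t-s)H}[P_-^{A_j},H]P_-^{A_{j+1}}e^{isH}P_+^B\,ds,
\]
whence the unitarity of $e^{isH}$, together with the commutator bound, yields the recursion
\[
\zeta_j(t)\le\gamma\int_0^{|t|}\zeta_{j+1}(s)\,ds\qquad\text{for }j=0,1,\ldots,r-1.
\]
Combined with the trivial estimate $\zeta_r(s)\le 1$, iterating the recursion $r$ times produces
\[
\zeta_0(t)\le\gamma^r\frac{|t|^r}{r!}=\Delta^{-r}\frac{|t|^r}{r!},
\]
which is the desired bound.

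The steps are all short once the right telescope is in place; the only genuine subtlety is the verification that the inclusions $A_j\subset B$ indeed hold up to $j=r-1$ (this is where the graph inequality $d_\La\le d_{K_1}$ on $K_1$ is used) and, relatedly, that $A_j$ stays inside $K_1$ and connected there so that Observation~(ii) can be applied with the uniform constant $\gamma=\Delta^{-1}$. I expect this bookkeeping, together with the justification of the Duhamel manipulation when $\La$ is infinite (handled by working sector-wise in $N$), to be the main point requiring care; the analytic content of the estimate is just a standard $r$-fold Duhamel expansion.
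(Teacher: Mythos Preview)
Your proposal is correct and is the standard Duhamel/commutator iteration that underlies the cited result; the paper itself does not give an independent argument but simply observes that $H$ satisfies conditions (i) and (ii) and then invokes \cite[Lemma~B.1]{EK22}, whose proof is precisely the telescoping commutator expansion you have written out. Your careful bookkeeping---that each $A_j$ stays connected in $K_1$ (so observation~(ii) applies with the uniform $\gamma=\Delta^{-1}$), that $A_j\subset B$ for $j\le r-1$ via $d_\La\le d_{K_1}$, and that the Duhamel manipulation is justified sector-wise in $N$---fills in exactly the details that the citation leaves implicit.
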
 
 
 \begin{proof}
  $H$ satisfies the input conditions (i) and (ii)  of  \cite[Lemma B.1]{EK22}, so its output (i.e., \eqref{eq:loca1}) is valid as well. 
 \end{proof}

\begin{theorem}\label{thmlocal} Given $t\in \R_+$ and $a\in\R$, 
let $F_{t,a}$ be the $C^\infty$  function on $\R$ given  in \eqref{defF}. 
Let   $S\subsetneq  T\subset \La$, $S,T$  finite, where $S\subset K_1$ is is connected in $K_1$,  and  let
$\ell= d_\Z\pa{S,T^c}-1$. Then for all  $t>1$,  $a\in \R$,  and  $E\in \R$ we have
 \be\label{locest}
\norm{P_-^{{S}}  F_{t,a}(H-E) P_+^{{T}}}\le  C    \pa{  \e^{-\frac 12\ell} +   \sqrt{t} \,\e^{-\frac { \Delta^2\ell^2}{100t}} },
\ee 
where all constants are $a$-independent. 

In particular,  taking  $t=  \frac {\Delta^2}{50} \ell $,    we have   ($\ell \in \N^0$)
\be\label{locest2}
\norm{P_-^{{S}} F_{  \frac {\Delta^2}{50}\ell,a}(H-E) P_+^{{T}}}  \le C\e^{- \frac 12 \ell} .
\ee 
\end{theorem}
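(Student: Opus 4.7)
The plan is to bound $\norm{P_-^{S}F_{t,a}(H-E)P_+^{T}}$ by expressing $F_{t,a}(H-E)$ as an integral of the time-evolution operators $\e^{iuH}$, and then invoking Lemma \ref{lem:F} to control each piece. The starting point is the identity
\be
F_{t,a}(x) = \int_0^t \frac{x^2\,\e^{-sx^2}}{x-ia}\,ds = \int_0^t \pa{x + ia - \tfrac{a^2}{x-ia}}\e^{-sx^2}\,ds,
\ee
which, applied to $H-E$ through functional calculus, isolates the ``main'' piece $\int_0^t (H-E)\e^{-s(H-E)^2}\,ds = F_{t,0}(H-E)$ from lower-order corrections that only involve $(H-E-ia)^{-1}$ multiplied by factors that compensate the $a\to 0$ singularity.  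Combined with the heat-kernel representation $\e^{-s(H-E)^2} = (4\pi s)^{-1/2}\int \e^{-u^2/(4s)}\e^{iu(H-E)}\,du$, each piece of $F_{t,a}(H-E)$ can be expressed as an integral of the form $\int \kappa_{t,a,s}(u)\,\e^{iu(H-E)}\,du\,ds$ for an explicit scalar kernel $\kappa_{t,a,s}$, and the resulting identity holds uniformly in $a\in\R$.

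By Lemma \ref{lem:F}, $\norm{P_-^{S}\,\e^{iuH}\,P_+^{T}}\le \min(1,|u|^r/(\Delta^r r!))$ with $r=\d_\Z(S,T^c)=\ell+1$, so Stirling's inequality gives, for $|u|\le \beta r$ with $\beta=\Delta/5$,
\be
\tfrac{|u|^r}{\Delta^r r!}\le \tfrac{1}{\sqrt{2\pi r}}\pa{\tfrac{e|u|}{\Delta r}}^r\le \tfrac{1}{\sqrt{2\pi r}}\pa{\tfrac{e}{5}}^r\le C\,\e^{-r/2}\le C\,\e^{-\ell/2},
\ee
where the last step uses $\log(5/e)>1/2$; this is precisely why the constant $\beta=\Delta/5$ is chosen. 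Splitting the integral $\int \kappa_{t,a,s}(u)\,\norm{P_-^{S}\e^{iuH}P_+^{T}}\,du$ at $|u|=\beta r$, the region $|u|\le\beta r$ contributes $C\,\e^{-\ell/2}\,\int |\kappa_{t,a,s}(u)|\,du$, which integrates to a uniform constant once the $s$-integral is performed. On the complementary region $|u|>\beta r$ one uses the trivial bound $\norm{P_-^{S}\e^{iuH}P_+^{T}}\le 1$ and exploits the Gaussian weight $\e^{-u^2/(4s)}$ inherited from the heat kernel: the tail mass $(4\pi s)^{-1/2}\int_{|u|>\beta r}\e^{-u^2/(4s)}\,du$ is bounded by $C\,\e^{-\beta^2 r^2/(4s)}$, and integrating over $s\in(0,t)$ yields at most $C\sqrt{t}\,\e^{-\beta^2\ell^2/(8t)}$, which is the second term in \eqref{locest}.

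The main obstacle is the uniform-in-$a$ control: the resolvent $(H-E-ia)^{-1}$ is singular as $a\to 0$, and a naive Fourier representation $(H-E-ia)^{-1}=i\,\sgn(a)\int_0^\infty \e^{-v|a|}\e^{-i\sgn(a)v(H-E)}\,dv$ produces an $L^1$-bound that blows up like $1/|a|$. The remedy is to never treat $(H-E-ia)^{-1}$ in isolation: the decomposition $x^2/(x-ia) = x+ia-a^2/(x-ia)$ ensures the $a$-dependent corrections carry explicit compensating powers of $a$, and the Gaussian factor $\e^{-s(H-E)^2}$ provides enough decay at $H\approx E$ that the combined kernel has an $a$-independent $L^1$ norm and Gaussian tail. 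Carrying out the resulting bookkeeping is the delicate part, but the estimates above then yield \eqref{locest} for all $a\in\R$, $t>1$, $E\in\R$.

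Finally, \eqref{locest2} follows from \eqref{locest} by inserting $t=\beta^2\ell/8$: the exponent in the Gaussian term becomes $\beta^2\ell^2/(8t)=\ell$, so $\sqrt{t}\,\e^{-\beta^2\ell^2/(8t)}=\sqrt{\beta^2\ell/8}\,\e^{-\ell}\le C\sqrt{\ell}\,\e^{-\ell}\le C\e^{-\ell/2}$, which combines with the first term $C\e^{-\ell/2}$ to give the claimed bound.
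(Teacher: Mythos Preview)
Your overall strategy—represent $F_{t,a}(H-E)$ as an integral of $\e^{iu(H-E)}$, invoke Lemma~\ref{lem:F}, and split at $|u|=\beta r$—is the paper's strategy. But there is a concrete error in your near-region estimate: the claim that $\int_0^t\!\int|\kappa_{t,a,s}(u)|\,du\,ds$ ``integrates to a uniform constant'' is false already for the principal piece $F_{t,0}(x)=\int_0^t x\,\e^{-sx^2}\,ds$. The Fourier kernel of $x\,\e^{-sx^2}$ satisfies $\int|\kappa_s(u)|\,du\asymp s^{-1/2}$, so the double integral is $\asymp\sqrt t$, and your near-region contribution is $C\sqrt t\,\e^{-\ell/2}$ rather than $C\e^{-\ell/2}$; this does not give \eqref{locest}. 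The fix is to pair the bounds the other way round: keep the \emph{full} Lemma~\ref{lem:F} bound $|u|^r/(\Delta^r r!)$ under the $u$-integral and use only an $L^\infty$ bound on the Fourier kernel; the resulting extra factor $\int_{|u|\le\beta r}|u|^r\,du\asymp (\beta r)^{r+1}/r$ is then absorbed by the slack in $(e/5)^r\le\e^{-r/2}$. Performing that $L^\infty$ bound amounts to first carrying out the $s$-integral, i.e.\ to bounding $\|\hat F_{t,a}\|_\infty$ directly—which is exactly what the paper does.

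The paper's route is also cleaner on the $a$-dependence. It regularizes to $F_{t,a,\eps}(x)=(\e^{-\eps x^2}-\e^{-tx^2})/(x-ia)\in\cS(\R)$, writes $\hat F_{t,a,\eps}$ as the convolution of the explicit transforms of $(x-ia)^{-1}$ and $\e^{-\eps x^2}-\e^{-tx^2}$, and obtains $|\hat F_{t,a,\eps}(\xi)|\le 5\e^{-\xi^2/(4t)}$ from a Gaussian-tail estimate; since $|\widehat{(x-ia)^{-1}}|\le 2\pi$ pointwise uniformly in $a$, the $a$-independence is automatic. Your decomposition $x^2/(x-ia)=x+ia-a^2/(x-ia)$ does not help here: each of the two $a$-dependent pieces produces a kernel of $L^1$ size $\sim|a|t$, so they must be recombined before any uniform estimate is possible—you flag this as ``the delicate part'' but do not carry it out. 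Finally, the paper removes the regularization by restricting to $N$-particle sectors (where $H_N$ is bounded) and using $\|(F_{t,a}-F_{t,a,\eps})(H_N-E)\|\le\eps\|H_N-E\|\to0$, a point your sketch does not address.
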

We will refer to $F_{t,a}$ as a {\it filter function}.
\begin{proof} 

We introduce a introduce a function $F_{t,a,\eps} \in \cS(\R)$, where   $0<\eps$,
given by
    \be\label{defanf}
   F_{t,a,\eps} (x)&= \frac{\e^{-\eps x^2}-\e^{-t x^2}}{x-ia} \qtx{for} x \in \R \qtx{if} a\ne 0,\\
    F_{t,0,\eps} (x)&= \frac{\e^{-\eps x^2}-\e^{-t x^2}}{x} \qtx{for} x \in  \R\setminus \set{0} \qtx{and} F_{t,0} (0)=0.
   \ee 
Let  $\hat f$ denote the Fourier transform of the function $f$. 
We note that for $a>0$, the Fourier transform of $f_a(x)=\frac1{x-ia}$ exists as an $L^2$  function, and is given by $\hat f_a(\xi)=2i\pi\e^{a\xi}\chi_{(-\infty,0)}(\xi)$, whereas for $a=0$ it exists in a distributional sense, $\hat f_0(\xi)=-i\pi\sgn(\xi)$. We will only consider the more delicate case $a=0$, the argument for $a\neq0$ is very similar. 

 A standard calculation gives
\be
\hat F_{t,0,\eps}(\xi)
&= \tfrac 1 {\sqrt{2\pi}}\int_{-\infty}^\infty \pa{-i \sqrt{\tfrac \pi 2} \sgn (\xi-s) }\pa{\tfrac{1}{\sqrt{2\eps}} \e^{-\frac{s^2}{4 \eps}} -\tfrac{1}{\sqrt{2 t}} \e^{-\frac{s^2}{4 t}} } \, \d s\\ &
 = \tfrac i {2\sqrt{2}}  \pa{\int_{\xi}^{\infty} -  \int_{-\infty}^\xi } \pa{\tfrac{1}{\sqrt{\eps}} \e^{-\frac{s^2}{4 \eps}} -\tfrac{1}{\sqrt{ t}} \e^{-\frac{s^2}{4 t}} } \, \d s.
\ee 
Since  $F_{t,0,\eps}\in L^1$, by the Riemann--Lebesgue Lemma we have
\be
0= \lim_{\xi \to \infty} \hat F_{t,0,\eps}(\xi)= -\tfrac i {2\sqrt{2}}  \int_{-\infty}^\infty   \e^{i\lambda s}\pa{\tfrac{1}{\sqrt{\eps}} \e^{-\frac{s^2}{4 \eps}} -\tfrac{1}{\sqrt{ t}} \e^{-\frac{s^2}{4 t}} } \, \d s,
\ee
so it follows that
\be
\hat F_{t,0,\eps}(\xi)&= \tfrac i {\sqrt{2}}  \int_{\xi}^{\infty}\pa{\tfrac{1}{\sqrt{\eps}} \e^{-\frac{s^2}{4 \eps}} -\tfrac{1}{\sqrt{ t}} \e^{-\frac{s^2}{4 t}} } \, \d s\\ & \notag
=- \tfrac i {\sqrt{2}}  \int_{-\infty}^\xi   \pa{\tfrac{1}{\sqrt{\eps}} \e^{-\frac{s^2}{4 \eps}} -\tfrac{1}{\sqrt{ t}} \e^{-\frac{s^2}{4 t}}} \, \d s.
\ee

If $\xi >0$, we estimate
\be\label{Fxipi}
\abs{\hat F_{t,0,\eps}(\xi)}\le  \tfrac 1 {\sqrt{2}}   \int_{\xi}^{\infty}{\tfrac{1}{\sqrt\eps}\e^{-\frac{s^2}{4 \eps}} }  \, \d s+  \tfrac 1 {\sqrt{2}}  \int_{\xi}^{\infty}{\tfrac{1}{\sqrt t}\e^{-\frac{s^2}{4 t}} }\, \d s.
\ee
Using the Gaussian estimate
\be
\int_x^\infty \e^{-\frac {y^2}2}\, \d y \le \tfrac 1 x \e^{-\frac {x^2}2} \qtx{for} x>0,
\ee
and recalling $ \int_{0}^{\infty}\e^{-\frac{y^2}{2}}  \, \d y=\sqrt{\frac \pi  2}$, we conclude that 
\be
\int_x^\infty \e^{-\frac {y^2}2}\, \d y \le \sqrt{\tfrac {\pi\e } 2}  \, \e^{-\frac {x^2}2} \qtx{for all} x\ge 0.
\ee
(Note that $\int_x^\infty \e^{-\frac {y^2}2}\, \d y \le \e^{-\frac {x^2}2} $ for $x \ge 1$ and
$\e^{-\frac {x^2}2} \sqrt{\e} \ge 1$ for $x\in [0,1]$.)
Thus
\be \label{eq:erfc}
 \int_{\xi}^{\infty}\e^{-\frac {s^2}{4 b}}  \, \d s &= \sqrt{2b} \int_{\frac \xi{ \sqrt{2b}}}^{\infty}\e^{-\frac{y^2}2}  \, \d s \le \sqrt{\tfrac {\pi\e } 2}  \,  \sqrt{2b} \, \e^{-\frac{\xi^2}{4 b}}\\   &
 \le   \sqrt{\pi\e\, b} \, \e^{-\frac{\xi^2}{4 b}} \le  3\sqrt{b}  \, \e^{-\frac{\xi^2}{4 b}}\qtx{for} \xi\ge 0, \ b>0.
\ee 

It follows from \eq{Fxipi} and \eq{eq:erfc} that 
\be\label{eq:Ft1}
\abs{\hat F_{t,0,\eps}(\xi)}&   \le  \tfrac 3 {\sqrt{2}}\e^{-\frac{\xi^2}{4 \eps}} + \tfrac 3 {\sqrt{2}}\e^{-\frac{\xi^2}{4 t}}\le 5\e^{-\frac{\xi^2}{4 t}},
\ee
for all $\xi >0$, and since the same estimate can be established for $\xi<0$, for all $\xi \in \R$.  Moreover, the  same upper bound also holds for an arbitrary value of $a$.

 We can bound
\be\label{eq:2int}
\norm{P_-^A\ f(H)\, P_+^B}\le \int_{\mathcal R}\norm{P_-^A\ e^{itH}\, P_+^B}\abs{\hat f(t)}dt+\int_{\mathcal R^c}\abs{\hat f(t)}dt,
\ee
where $\mathcal R = [-R,R]$.   Using \eqref{eq:2int} with $R=c\ell$ and Lemma \ref{lem:F}, we have
\be
&\norm{P_- ^{S} f(H-E)P_+^{T}}
\le  C  \norm{\hat f}_\infty \frac{  \abs{\Delta^{-1} c\,  \ell}^\ell}{\ell!}+ \int_{\abs{t}> c\ell} \abs{ \hat f(t)}\, \d t.
\ee
Hence for $0<\eps<t$ and appropriately chosen value for $c$,   say $ c=\frac \Delta{5}$, 
  using Stirling's approximation and \eqref{eq:Ft1}, we get
\be\label{PFteps}
\norm{P_- ^{S} F_{t,0,\eps}(H-E)P_+^{T}}\le  C       \e^{-\frac 12 \ell}} + C\int_{\abs{\xi}> \frac \Delta{5} \ell} \e^{-\frac{\xi^2}{4 t}} \, \d \xi \le C \e^{- \frac 12 \ell}+ C\sqrt{t} \, 
 \e^{-\frac {\Delta^2\ell^2 }{100 t}    .          
\ee
Using $\abs{(F_{t,0}-F_{t,0,\eps})(x)}\le \eps \abs{x}$, \eq{PFteps},   restricting to   the N-particle sector $\cH_\La\up{N}$, and recalling that $H_N$ is a bounded operator,  we get 
\be\label{locest6666}
\norm{P_-^{{S}} F_{t,0}(H_{N}-E) P_+^{{T}}}& \le \norm{P_-^{{S}} F_{t,0,\eps}(H_{N}-E) P_+^{{T}}} + \norm{ (F_{t,0}-F_{t,0,\eps})(H_{N}-E)}\\  &
\le  C    \pa{  \e^{- \frac 12\ell} +   \sqrt{t} \,\e^{-\frac {\Delta^2\ell^2 }{100 t}   }}
+ \eps \norm{H_{N}-E},
\ee  
 where $C$ is $N$-independent. 
Letting $\eps \to 0$ we get
\be\label{locest'}
\norm{P_-^{{S}}  F_{t,0}(H_N-E) P_+^{{T}}}\le  C    \pa{  \e^{-\frac 12\ell} +   \sqrt{t} \,  \e^{-\frac {\Delta^2\ell^2 }{100 t}   }  }\qtx{for all} N\in \N.
\ee 
The desired estimate \eq{locest} follows.
\end{proof}

\section*{Declarations}

\subsection*{\qquad Data availability}
We do not analyze or generate any datasets, because our work proceeds within a theoretical and mathematical approach. One can obtain the relevant materials from the references below.

\subsection *{\qquad  Funding and/or Conflicts of interests/Competing interests} \

Alexander Elgart was  supported in part by the NSF under grant DMS-2307093.

The authors have no relevant financial or non-financial interests to disclose.

The authors have no competing interests to declare that are relevant to the content of this article.

\printbibliography

\end{document}